\title{Spegion: Implicit and Non-Lexical Regions with Sized Allocations}
\author{Jack Hughes}{School of Computing, University of Kent, United Kingdom}{jh2093@kent.ac.uk}{https://orcid.org/0000-0003-3174-4689}{}
\author{Michael Vollmer}{School of Computing, University of Kent, United Kingdom}{m.vollmer@kent.ac.uk}{https://orcid.org/0000-0002-0496-8268}{}
\author{Mark Batty}{School of Computing, Univeristy of Kent, United Kingdom}{m.j.batty@kent.ac.uk}{https://orcid.org/0000-0001-7053-4364}{}
\authorrunning{J. Hughes, M. Vollmer, and M. Batty} 
\keywords{Regions, Type Systems, Effect Systems, Programming Languages, Memory} 
\theoremstyle{definition}
\declaretheorem[name=Definition, numberwithin=section]{df}
\declaretheorem[name=Lemma, numberwithin=section]{lem}
\definecolor{mGreen}{rgb}{0,0.6,0}
\definecolor{mGray}{rgb}{0.5,0.5,0.5}
\definecolor{mPurple}{rgb}{0.58,0,0.82}
\definecolor{backgroundColour}{rgb}{0.95,0.95,0.92}
\lstdefinestyle{CStyle}{
    backgroundcolor=\color{backgroundColour},   
    commentstyle=\color{mGreen},
    keywordstyle=\color{magenta},
    numberstyle=\tiny\color{mGray},
    stringstyle=\color{mPurple},
    basicstyle=\footnotesize,
    breakatwhitespace=false,         
    breaklines=true,                 
    captionpos=b,                    
    keepspaces=true,                 
    numbers=left,                    
    numbersep=5pt,                  
    showspaces=false,                
    showstringspaces=false,
    showtabs=false,                  
    tabsize=2,
    language=C
}
\newcommand{\kVar}{
        \inferrule*[lab=$\kappa$-var]
                {\text{Type} = K(\alpha)}
                {K \vdash \alpha : \text{Type}}
}
\newcommand{\kForall}{
        \inferrule*[lab=$\kappa$-forall]
                {K, \alpha : \text{Type}, \rho : \text{Region}, \epsilon : \text{Effect} \vdash \tau : \text{Type} }
                {K \vdash \forall \{ \alpha, \rho, \epsilon \} . \tau : \text{Type}}
}
\newcommand{\kRegion}{
        \inferrule*[lab=$\kappa$-reg]
                {\text{Region} = K(\rho)}
                {K \vdash  \rho : \text{Region}}
}
\newcommand{\kApp}{
        \inferrule*[lab=$\kappa$-app$_{1}$]
                {K \vdash \kappa_1 : \text{Type} \qquad K \vdash \tau : \text{Type}}
                {K \vdash (\kappa_1 \rightarrow \kappa_2)\ \tau : \kappa_2}
}
\newcommand{\kAppTwo}{
        \inferrule*[lab=$\kappa$-app$_{2}$]
                {K \vdash \kappa_1 : \text{Effect} \qquad K \vdash \varphi : \text{Effect}}
                {K \vdash (\kappa_1 \rightarrow \kappa_2)\ \varphi : \kappa_2}
}
\newcommand{\kArrow}{
        \inferrule*[lab=$\kappa$-$\rightarrow$]
                {\quad}
                {K \vdash (\rightarrow) : \text{Type} \rightarrow \text{Effect} \rightarrow \text{Type} \rightarrow \text{Type}  }
}
\newcommand{\kInt}{
        \inferrule*[right=$\kappa$-int]
                {\quad}
                {K \vdash \text{Int} : \text{Type}}
}
\newcommand{\kUnit}{
        \inferrule*[right=$\kappa$-unit]
                {\quad}
                {K \vdash \text{Unit} : \text{Type} }
}
\newcommand{\kRef}{
        \inferrule*[right=$\kappa$-ref]
                {K \vdash \tau : \text{Type}}
                {K \vdash \text{Ref}\ \tau : \text{Type}}
}
\newcommand{\kTyWithPlace}{
        \inferrule*[lab=$\kappa$-tyWithPlace]
                {K \vdash \tau : \text{Type} \qquad K \vdash \rho : \text{Region}}
                {K \vdash (\tau, \rho) : \text{Type} }
}
\newcommand{\kCompose}{
        \inferrule*[right=$\kappa$-$\times$]
                {K \vdash \varphi_1 : \text{Effect} \\\\ K \vdash \varphi_2 : \text{Effect}}
                {K \vdash \varphi_1 \times \varphi_2 : \text{Effect}}
}
\newcommand{\kBot}{
        \inferrule*[right=$\kappa$-bot]
                {\quad}
                {K \vdash_\kappa \{ \bot \} : \text{Effect}}   
}
\newcommand{\kAlloc}{
        \inferrule*[right=$\kappa$-alloc]
                {K \vdash s : \text{Size} \qquad K \vdash \rho : \text{Region}}
                {K \vdash \{\textbf{alloc}\ s\ \rho\} : \text{Effect}}
}
\newcommand{\kNew}{
        \inferrule*[right=$\kappa$-fresh]
                {K \vdash \rho : \text{Region} \qquad K \vdash s : \text{Size}}
                {K \vdash \{\textbf{fresh}\ \rho\ s \} : \text{Effect}}
}
\newcommand{\kFree}{
        \inferrule*[lab=$\kappa$-free]
                {K \vdash \rho : \text{Region}}
                {K \vdash \{\textbf{free}\ \rho\} : \text{Effect}}
}
\newcommand{\kSize}{
        \inferrule*[right=$\kappa$-size]
                {\quad}
                {K \vdash s : \text{Size}}
}       
\newcommand{\kOp}{
        \inferrule*[right=$\kappa$-op]
                {K \vdash \kappa_1 : \text{Size} \qquad K \vdash \kappa_2 : \text{Size} }
                {K \vdash \kappa_1\ op\ \kappa_2 : \text{Size}}
}
\newcommand{\kSplit}{
        \inferrule*[right=$\kappa$-split]
                {K \vdash \rho : \text{Region} \qquad K \vdash s : \text{Size} \qquad K \vdash \rho' : \text{Region}}
                {K \vdash \{\textbf{split}\ \rho\ s\ \rho' \} : \text{Effect}}
}
\newcommand{\kEps}{
        \inferrule*[lab=$\kappa$-$\epsilon$]
                {\text{Effect} = K(\epsilon)}
                {K \vdash \{\epsilon\} : \text{Effect}}
}
\newcommand{\kRec}{
        \inferrule*[lab=$\kappa$-rec]
                {K \vdash \epsilon : \text{Effect} \qquad K \vdash \varphi : \text{Effect}  }     
                {K \vdash \{\textbf{rec}\ \epsilon\ \varphi\} : \text{Effect}}
}
\newcommand{\kBool}{
        \inferrule*[right=$\kappa$-bool]
                {\quad}
                {K \vdash \text{Bool} : \text{Type}}
}
\newcommand{\kJoin}{
        \inferrule*[right=$\kappa$-$\sqcup$]
                {K \vdash \varphi_1 : \text{Effect} \\\\ K \vdash \varphi_2 : \text{Effect}}
                {K \vdash \varphi_1 \sqcup \varphi_2 : \text{Effect}}
}
\newcommand{\effectEqRefl}{
        \inferrule*[right=$\equiv$-refl]
                {K \vdash \varphi : \text{Effect}}
                {K \vdash \varphi \equiv \varphi : \text{Effect}}
}
\newcommand{\effectEqSym}{
        \inferrule*[right=$\equiv$-sym]
                { K \vdash \varphi_1 : \text{Effect} \\\\ K \vdash \varphi_2 : \text{Effect}
                        \\\\ K \vdash \varphi_1 \equiv \varphi_2 : \text{Effect} }
                {K \vdash \varphi_2 \equiv \varphi_1 : \text{Effect}}
}
\newcommand{\effectEqTrans}{
        \inferrule*[right=$\equiv$-tr]
                {K \vdash \varphi_1 \equiv \varphi_2 : \text{Effect} \\\\ K \vdash \varphi_2 \equiv \varphi_3 : \text{Effect}}
                {K \vdash \varphi_1 \equiv \varphi_3 : \text{Effect}}
}
\newcommand{\effectEqCong}{
        \inferrule*[right=$\equiv$-$\times$-cong]
                {K \vdash \varphi_1 \equiv \varphi'_1 : \text{Effect} \qquad K \vdash \varphi_2 \equiv \varphi'_2 : \text{Effect}}
                {K \vdash \varphi_1 \times \varphi_2 \equiv \varphi'_1 \times \varphi'_2 : \text{Effect}}
}
\newcommand{\effectEqAssoc}{
        \inferrule*[right=$\equiv$-$\times$-assoc]
                {K \vdash \varphi_1 : \text{Effect} \qquad K \vdash \varphi_2 : \text{Effect} \qquad K \vdash \varphi_3 : \text{Effect}}
                {K \vdash \varphi_1 \times (\varphi_2 \times \varphi_3) \equiv (\varphi_1 \times \varphi_2) \times \varphi_3 : \text{Effect} }
}
\newcommand{\effectEqJoinAssoc}{
        \inferrule*[right=$\equiv$-$\sqcup$-assoc]                 
                {K \vdash \varphi_1 : \text{Effect} \qquad K \vdash \varphi_2 : \text{Effect} \qquad K \vdash \varphi_3 : \text{Effect}}
                {K \vdash \varphi_1 \sqcup (\varphi_2 \sqcup \varphi_3) \equiv (\varphi_1 \sqcup \varphi_2) \sqcup \varphi_3 : \text{Effect} }
}
\newcommand{\effectEqJoinCong}{
        \inferrule*[right=$\equiv$-$\sqcup$-cong]
                {K \vdash \varphi_1 \equiv \varphi'_1 : \text{Effect} \qquad K \vdash \varphi_2 \equiv \varphi'_2 : \text{Effect}}
                {K \vdash \varphi_1 \sqcup \varphi_2 \equiv \varphi'_1 \sqcup \varphi'_2 : \text{Effect}}
}
\newcommand{\sbEquiv}{
        \inferrule*[lab=sb-$\equiv$]
                {K \vdash \varphi_1 \equiv \varphi_2 : \text{Effect}}
                {K \vdash \varphi_1 \sqsubseteq \varphi_2 : \text{Effect}}
}
\newcommand{\sbBot}{
        \inferrule*[lab=sb-$\bot$]
                {K \vdash \varphi : \text{Effect}}
                {K \vdash \{ \bot \} \sqsubseteq \varphi}
}
\newcommand{\sbXAbove}{
        \inferrule*[lab=sb-$\times_{above}$]
                {K \vdash \varphi_1 \sqsubseteq \varphi_3 : \text{Effect} \\\\ K \vdash \varphi_2 \sqsubseteq \varphi_3 : \text{Effect}}
                {K \vdash \varphi_1 \times \varphi_2 \sqsubseteq \varphi_3 : \text{Effect}}
}
\newcommand{\sbXBelowOne}{
        \inferrule*[lab=sb-$\times_{below}^{1}$]
                {K \vdash \varphi_1 \sqsubseteq \varphi_2 : \text{Effect} \qquad K \vdash \varphi_3 : \text{Effect}}
                {K \vdash \varphi_1 \sqsubseteq \varphi_2 \times \varphi_3 : \text{Effect}}
}
\newcommand{\sbXBelowTwo}{
        \inferrule*[lab=sb-$\times_{below}^{2}$]
                {K \vdash \varphi_2 : \text{Effect} \qquad K \vdash \varphi_1 \sqsubseteq \varphi_3 : \text{Effect}}
                {K \vdash \varphi_1 \sqsubseteq \varphi_2 \times \varphi_3 : \text{Effect}}
}
\newcommand{\sbJoinAbove}{
        \inferrule*[lab=sb-$\sqcup_{above}$]
                {K \vdash \varphi_1 \sqsubseteq \varphi_3 : \text{Effect} \\\\ K \vdash \varphi_2 \sqsubseteq \varphi_3 : \text{Effect}}
                {K \vdash \varphi_1 \sqcup \varphi_2 \sqsubseteq \varphi_3 : \text{Effect}}
}
\newcommand{\sbJoinBelowOne}{
        \inferrule*[lab=sb-$\sqcup_{below}^{1}$]
                {K \vdash \varphi_1 \sqsubseteq \varphi_2 : \text{Effect} \\\\ K \vdash \varphi_3 : \text{Effect}}
                {K \vdash \varphi_1 \sqsubseteq \varphi_2 \sqcup \varphi_3 : \text{Effect}}
}
\newcommand{\sbJoinBelowTwo}{
        \inferrule*[lab=sb-$\sqcup_{below}^{2}$]
                {K \vdash \varphi_2 : \text{Effect} \\\\ K \vdash \varphi_1 \sqsubseteq \varphi_3 : \text{Effect}}
                {K \vdash \varphi_1 \sqsubseteq \varphi_2 \sqcup \varphi_3 : \text{Effect}}
}
\newcommand{\intT}{
        \inferrule*[right=t-int]
                {\quad}
                {K \mid \Gamma \mid \Sigma \vdash c : \text{Int}}
        }
\newcommand{\trueT}{
        \inferrule*[right=t-true]
                {\quad}
                {K \mid \Gamma \mid \Sigma \vdash \text{true} : \text{Bool}}
}
\newcommand{\falseT}{
        \inferrule*[right=t-false]
                {\quad}
                {K \mid \Gamma \mid \Sigma \vdash \text{false} : \text{Bool}}
}
\newcommand{\absT}{
        \inferrule*[right=t-$\lambda$]
                {K \vdash \mu_1 : \text{Type} \quad K \mid \Gamma, x : \mu_1 \mid \Sigma \vdash e : \mu_2 \mid \varphi }
                {K \mid \Gamma \mid \Sigma \vdash \lambda x . e : \mu_1 \xrightarrow{\varphi} \mu_2}
        }
\newcommand{\unitT}{
        \inferrule*[right=t-unit]
                {\quad}
                {K \mid \Gamma \mid \Sigma \vdash () : \text{Unit}}
}
\newcommand{\locValT}{
        \inferrule*[right=t-loc]
                {\tau = \Sigma(l_\rho) \qquad K \vdash \tau : \text{Type}}
                {K \mid \Gamma \mid \Sigma \vdash l_\rho : \tau}
}
\newcommand{\biglamT}{
        \inferrule*[right=t-$\Lambda$]
                {K, \alpha : \text{Type}, \rho : \text{Region}, \epsilon : \text{Effect} \mid \Gamma, x : \mu_1 \mid \Sigma \vdash e : \mu_2 \mid \varphi}
                {K \mid \Gamma \mid \Sigma \vdash \Lambda \{\alpha, \rho, \epsilon\} . \lambda x . e : \forall \{\alpha : \text{Type}, \rho : \text{Region}, \epsilon : \text{Effect} \} . \mu_1 \xrightarrow{ \varphi } \mu_2}
}
\newcommand{\uselocT}{
        \inferrule*[lab=t-use-val]
                {K \mid \Gamma \mid \Sigma \vdash l_\rho : \tau }
                {K \mid \Gamma \mid \Sigma \vdash l_\rho : (\tau, \rho) \mid \{ \bot \}}
}
\newcommand{\newrgnT}{
        \inferrule*[right=t-newrgn]
                {s \sqsupseteq 1 \qquad K, \rho : \text{Region} \vdash s : \text{Size}}
                {K \mid \Gamma \mid \Sigma \vdash \textbf{newrgn}\ [s] : (\text{Unit}, \rho) \mid  \{\textbf{fresh}\ \rho\ s \} \times \{\textbf{alloc}\ 1\ \rho \} }
}
\newcommand{\freergnT}{
        \inferrule*[right=t-freergn]
                {K \mid \Gamma \mid \Sigma \vdash e : (\tau, \rho) \mid \varphi}
                {K \mid \Gamma \mid \Sigma \vdash \textbf{freergn}\ e : (\text{Unit}, \rho_\text{glob}) \mid  \varphi \times \{\textbf{free}\ \rho\} }
}
\newcommand{\splitT}{
        \inferrule*[right=t-split]
                {K, \rho' : \text{Region} \mid \Gamma \mid \Sigma \vdash e : (\tau, \rho) \mid \varphi \qquad s \sqsupseteq 1 \qquad K \vdash s : \text{Size}}
                {K \mid \Gamma \mid \Sigma \vdash \textbf{split}\ [s]\ e : (\text{Unit}, \rho') \mid \varphi \times \{\textbf{split}\ \rho\ s\ \rho'\} \times \{\textbf{alloc}\ 1\ \rho' \} }
}
\newcommand{\varT}{
        \inferrule*[lab=t-var]
                {K \vdash \mu : \text{Type} }
                {K \mid \Gamma, x : \mu \mid \Sigma \vdash x : \mu \mid \{\bot\}}
        }
\newcommand{\valT}{
        \inferrule*[right=t-val]
                { K \mid \Gamma \mid \Sigma \vdash e : (\tau', \rho) \mid \varphi \\\\
                  K \mid \Gamma \mid \Sigma \vdash v : \tau \qquad K \vdash s : \text{Size}}
                { K \mid \Gamma \mid \Sigma \vdash v\ [s]\ \textbf{at}\ e : (\tau, \rho) \mid \varphi \times \{ \textbf{alloc}\ s\ \rho\}}
}
\newcommand{\appT}{
        \inferrule*[right=t-app]
                {K \mid \Gamma \mid \Sigma \vdash e_1 : (\mu_1 \xrightarrow{\varphi} \mu_2, \rho) \mid \varphi_1 \qquad K \mid \Gamma \mid \Sigma \vdash e_2 : \mu_1 \mid \varphi_2}
                {K \mid \Gamma \mid \Sigma \vdash e_1\ e_2 : \mu_2 \mid  \varphi_1 \times \varphi_2 \times \varphi  }
        }
\newcommand{\refT}{
        \inferrule*[lab=t-ref]
                {K \mid \Gamma \mid \Sigma \vdash e : (\tau, \rho) \mid \varphi}
                {K \mid \Gamma \mid \Sigma \vdash \textbf{ref}\ e : (\text{Ref}\ \tau, \rho) \mid \varphi \times \{\textbf{alloc}\ 1\ \rho\} }
}
\newcommand{\derefT}{
        \inferrule*[lab=t-deref]
                {K \mid \Gamma \mid \Sigma \vdash e : (\text{Ref}\ \tau, \rho) \mid \varphi}
                {K \mid \Gamma \mid \Sigma \vdash\ !e : (\tau, \rho) \mid \varphi}
}
\newcommand{\assignT}{
        \inferrule*[lab=t-assign]
                {K \mid \Gamma \mid \Sigma \vdash e_1 : (\text{Ref}\ \tau, \rho') \mid \varphi_1  \\\\ K \mid \Gamma \mid \Sigma \vdash e_2 : \mu \mid \varphi_2}
                {K \mid \Gamma \mid \Sigma \vdash e_1 := e_2 : (\text{Unit}, \globalRegion) \mid \varphi_1 \times \varphi_2 }
}
\newcommand{\ifT}{
        \inferrule*[right=t-if]
                {K \mid \Gamma \mid \Sigma \vdash e_1 : (\text{Bool}, \rho) \mid \varphi_1 \\\\ 
                 K \mid \Gamma \mid \Sigma \vdash e_2 : \mu_2 \mid \varphi_2 \qquad
                 K \mid \Gamma \mid \Sigma \vdash e_3 : \mu_2 \mid \varphi_3 }
                {K \mid \Gamma \mid \Sigma \vdash \textbf{if}\ e_1\ \textbf{then}\ e_2\ \textbf{else}\ e_3 : \mu_2 \mid \varphi_1 \times (\varphi_2 \sqcup \varphi_3) }
}
\newcommand{\seqT}{
        \inferrule*[lab=t-seq]
                {K \mid \Gamma \mid \Sigma \vdash e_1 : (\text{Unit}, \rho) \mid \varphi_1 \\\\ K \mid \Gamma \mid \Sigma \vdash e_2 : \mu \mid \varphi_2}
                {K \mid \Gamma \mid \Sigma \vdash e_1 ; e_2 : \mu \mid \varphi_1 \times \varphi_2 }
}
\newcommand{\copyT}{
        \inferrule*[right=t-copy]
                {K \mid \Gamma \mid \Sigma \vdash e_1 : (\tau, \rho) \mid \varphi_1
                \qquad
                K \mid \Gamma \mid \Sigma \vdash e_2 : (\tau', \rho') \mid \varphi_2 
                 }
                {K \mid \Gamma \mid \Sigma \vdash \textbf{copy}\ e_1\ \textbf{into}\ e_2 : (\tau, \rho') \mid  \varphi_1 \times \varphi_2 \times \{\textbf{alloc}\ 1\ \rho'\} }
}
\newcommand{\letT}{
        \inferrule*[lab=t-let]
                {K \mid \Gamma  \mid \Sigma \vdash e_1 : \mu_1 \mid \varphi_1 \\\\ 
                 K \mid \Gamma, x : \mu_1 \mid \Sigma \vdash e_2 : \mu_2 \mid \varphi_2 }
                {K \mid \Gamma \mid \Sigma \vdash \textbf{let}\ x = e_1\ \textbf{in}\ e_2 : \mu_2 \mid \varphi_1 \times \varphi_2 }
}
\newcommand{\bigAppT}{
        \inferrule*[right=t-tyApp]
                {K \mid \Gamma \mid \Sigma \vdash e : \forall \{ \alpha, \rho , \epsilon : \text{Effect} \} . \mu_1 \xrightarrow{\varphi} \mu_2 \mid \{ \bot \} \\\\ K \vdash (\tau, \rho') : \text{Type} }
                {K \mid \Gamma \mid \Sigma \vdash e\ @\ (\tau, \rho', \varphi) : [\alpha \mapsto \tau, \rho \mapsto \rho'](\mu_1 \xrightarrow{\varphi} \mu_2) \mid \{ \bot \} }
}
\newcommand{\fixT}{
        \inferrule*[lab=t-fix]
                {
                        \mu_f = (\forall \{ \alpha, \rho, \epsilon  \} . (\alpha, \rho) \xrightarrow{\{ \epsilon \}} \mu_1, \rho_f) 
                        \\\\
                 K \mid \Gamma, f : \mu_f \mid \Sigma \vdash (\Lambda\{ \alpha, \rho, \epsilon \}. \lambda x . e_1)\ [s]\ \textbf{at}\ e_2 : (\forall \{ \alpha , \rho, \epsilon  \} . (\alpha, \rho) \xrightarrow{\varphi} \mu_1, \rho_f)  \mid \varphi_1 
                 \\\\
                 \varphi' = [\{ \epsilon \} \mapsto \{ \textbf{rec}\ \epsilon\ \varphi \}] \varphi
                 \qquad
                 K \mid \Gamma, f : (\forall \{ \alpha , \rho , \epsilon  \} . (\alpha, \rho) \xrightarrow{\varphi'} \mu_1, \rho_f) \mid \Sigma \vdash e_3 : \mu_2 \mid \varphi_2
                }
                {
                 K \mid \Gamma \mid \Sigma \vdash \textbf{let}\ f = \textbf{fix}(f, (\Lambda \{ \alpha , \rho , \epsilon \} . \lambda x . e_1)\ [s]\ \textbf{at}\ e_2)\ \textbf{in}\ e_3 
                 : \mu_2 \mid \varphi_1 \times \varphi_2
                }
}
\newcommand{\storeInner}{\sigma}
\newcommand{\storeOuter}{\sigma}
\newcommand{\globalRegion}{\rho_\text{glob}}
\newcommand{\eformat}[4]{
        \langle #1 \mid #2 \rangle \longrightarrow \langle #3 \mid #4 \rangle
}
\newcommand{\eNewrgn}{
        \inferrule*[lab=e-newrgn]
                {\rho = \text{freshRegion()} \qquad l_\rho = \text{freshLoc}(\rho) }    
                {\eformat{\textbf{newrgn}\ [s]}{\storeOuter}{l_\rho}{\storeOuter, \rho \mapsto (l_\rho \mapsto (), s)}}
}
\newcommand{\eFreergn}{
        \inferrule*[right=e-freergn]
                {\eformat{e}{\storeOuter}{e'}{\storeOuter'}}
                {\eformat{\textbf{freergn}\ e}{\storeOuter}{\textbf{freergn}\ e'}{\storeOuter'}}
}
\newcommand{\eFreergnL}{
        \inferrule*[right=e-freergnL]
                {\quad}
                {\eformat{\textbf{freergn}\ l_\rho}{\storeOuter}{l^{1}_{\globalRegion}}{\storeOuter \setminus \rho}}
}
\newcommand{\eSplit}{
        \inferrule*[right=e-split]
                {\eformat{e}{\storeOuter}{e'}{\storeOuter'}}
                {\eformat{\textbf{split}\ [s]\ e}{\storeOuter}{\textbf{split}\ [s]\ e'}{\storeOuter'}}
}
\newcommand{\eSplitL}{
        \inferrule*[right=e-splitL]
                {\rho' = \text{freshRegion()} 
                 \qquad (\storeInner_{\rho}^{in.}, s_a) = \storeOuter(\rho) \qquad l_{\rho'} = \text{freshLoc}(\rho') }
                {\eformat{\textbf{split}\ [s]\ l_\rho}{\storeOuter}{l_\rho'}{[\rho \mapsto (\storeInner_{\rho}^{in.}, s_a \dot - s)] \storeOuter,  \rho' \mapsto (l_{\rho'} \mapsto (), s)}}
}
\newcommand{\eCopyOne}{
        \inferrule*[right=e-copy1]
                {\eformat{e_1}{\storeOuter}{e'_1}{\storeOuter'}}
                {\eformat{\textbf{copy}\ e_1\ \textbf{into}\ e_2}{\storeOuter}{\textbf{copy}\ e'_1\ \textbf{into}\ e_2}{\storeOuter'}}
}
\newcommand{\eCopyTwo}{
        \inferrule*[right=e-copy2]
                {\eformat{e_2}{\storeOuter}{e'_2}{\storeOuter'}}
                {\eformat{\textbf{copy}\ l_\rho\ \textbf{into}\ e_2}{\storeOuter}{\textbf{copy}\ l_\rho\ \textbf{into}\ e'_2}{\storeOuter'}}
}
\newcommand{\eCopyL}{
        \inferrule*[right=e-copyL]
                {(\storeInner_{\rho}^{in.}, s_a) = \storeOuter(\rho) \qquad (\storeInner_{\rho'}^{in.}, s'_a) = \storeOuter(\rho') \\\\
                 v = \storeInner_{\rho}^{in.} (l_\rho) \qquad l'_{\rho'} = \text{freshLoc}(\rho')}
                {\eformat{\textbf{copy}\ l_\rho\ \textbf{into}\ l_{\rho'}}{\storeOuter}{l'_{\rho'}}{[\rho' \mapsto ((\storeInner_{\rho'}^{in.}, l'_{\rho'} \mapsto l_{\rho}), s'_a)] \storeOuter }}
}
\newcommand{\eVal}{
        \inferrule*[right=e-val]
                {\eformat{e}{\storeOuter}{e'}{\storeOuter'}}
                {\eformat{v\ [s]\ \textbf{at}\ e}{\storeOuter}{v\ [s]\ \textbf{at}\ e'}{\storeOuter'}}
}
\newcommand{\eValL}{
        \inferrule*[right=e-valL]
                {(\storeInner_{\rho}^{in.}, s_a) = \storeOuter(\rho) \qquad s_v = \text{sizeOf}(v)  \\\\
                 s_v \sqsubseteq s \qquad l'_{\rho} = \text{freshLoc}(\rho) }
                {\eformat{v\ [s]\ \textbf{at}\ l_\rho}{\storeOuter}{l'_{\rho}}{[\rho \mapsto ((\storeInner_{\rho}^{in.}, l'_{\rho} \mapsto v), s_a)]\storeOuter }}
}
\newcommand{\eAppOne}{
        \inferrule*[right=e-app1]
                {\eformat{e_1}{\storeOuter}{e'_1}{\storeOuter'}}
                {\eformat{e_1\ e_2}{\storeOuter}{e'_1\ e_2}{\storeOuter'}}
}
\newcommand{\eAppTwo}{
        \inferrule*[right=e-app2]
                {\eformat{e_2}{\storeOuter}{e'_2}{\storeOuter'}}
                {\eformat{l_\rho\ e_2}{\storeOuter}{l_\rho\ e'_2}{\storeOuter'}}
}
\newcommand{\eAppL}{
        \inferrule*[right=e-appL]
                {(\storeInner_{\rho}^{in.}, s_a) = \storeOuter(\rho) \qquad (\lambda x . e) = \storeInner^{\rho}_{in.} (l_\rho)}
                {\eformat{l_\rho\ l'_{\rho'}}{\storeOuter}{[x \mapsto l'_{\rho'}]e}{\storeOuter}}
}
\newcommand{\eRef}{
        \inferrule*[right=e-ref]
                {\eformat{e}{\storeOuter}{e'}{\storeOuter'}}
                {\eformat{\textbf{ref}\ e}{\storeOuter}{\textbf{ref}\ e'}{\storeOuter'}}
}
\newcommand{\eRefL}{
        \inferrule*[right=e-refL]
                {(\storeInner_{\rho}^{in.}, s_a) = \storeOuter(\rho) \qquad l'_{\rho} = \text{freshLoc}(\rho)}
                {\eformat{\textbf{ref}\ l_\rho}{\storeOuter}{l'_\rho}{[\rho \mapsto ((\storeInner_{\rho}^{in.}, l'_{\rho} \mapsto l_\rho), s_a)]\storeOuter}}
}
\newcommand{\eDeref}{
        \inferrule*[right=e-deref]
                {\eformat{e}{\storeOuter}{e'}{\storeOuter'}}
                {\eformat{!e}{\storeOuter}{!e'}{\storeOuter'}}
}
\newcommand{\eDerefL}{
        \inferrule*[right=e-derefL]
                {(\storeInner_{\rho}^{in.}, s_a) = \storeOuter(\rho) \qquad l'_{\rho} = \storeInner^{\rho}_{in.} (l_\rho)}
                {\eformat{!l_\rho}{\storeOuter}{l'_{\rho}}{\storeOuter}}
}
\newcommand{\eAssignOne}{
        \inferrule*[right=e-assign1]
                {\eformat{e_1}{\storeOuter}{e'_1}{\storeOuter'}}
                {\eformat{e_1 := e_2}{\storeOuter}{e'_1 := e_2}{\storeOuter'}}
}
\newcommand{\eAssignTwo}{
        \inferrule*[right=e-assign2]
                {\eformat{e_2}{\storeOuter}{e'_2}{\storeOuter'}}
                {\eformat{l_\rho := e_2}{\storeOuter}{l_\rho := e'_2}{\storeOuter'}}
}
\newcommand{\eAssignL}{
        \inferrule*[lab=e-assignL]
                {(\storeInner_{\rho}^{in.}, s_a) = \storeOuter(\rho) }
                {\eformat{l_\rho := l_{\rho'}}{\storeOuter}{l^{1}_{\globalRegion}}{ [\rho \mapsto ((\storeInner_{\rho}^{in.}, l_\rho \mapsto l_{\rho'}), s_a)] \storeOuter}}
}
\newcommand{\eSeq}{
        \inferrule*[right=e-seq]
                {\eformat{e_1}{\storeOuter}{e'_1}{\storeOuter'}}
                {\eformat{e_1 ; e_2}{\storeOuter}{e'_1 ; e_2}{\storeOuter'}}
}
\newcommand{\eSeqNext}{
        \inferrule*[lab=e-seqNext]
                {\quad}
                {\eformat{l_\rho ; e_2}{\storeOuter}{e_2}{\storeOuter}}
}
\newcommand{\eBigApp}{
        \inferrule*[lab=e-tyApp]
                {\quad}
                {\eformat{e\ $@$\ (\tau, \rho')}{\storeOuter}{e}{\storeOuter}}
}
\newcommand{\eIf}{
        \inferrule*[right=e-if]
                {\eformat{e_1}{\storeOuter}{e'_1}{\storeOuter'}}
                {\eformat{\textbf{if}\ e_1\ \textbf{then}\ e_2\ \textbf{else}\ e_3}
                         {\storeOuter}
                         {\textbf{if}\ e'_1\ \textbf{then}\ e_2\ \textbf{else}\ e_3}
                         {\storeOuter'}}
}
\newcommand{\eIfTrue}{
        \inferrule*[lab=e-ifTrue]
                {\storeInner_{\rho}^{in.} = \storeOuter(\rho) \qquad \text{true} = \storeInner_{\rho}^{in.}(l_\rho)}
                {\eformat{\textbf{if}\ l_\rho\ \textbf{then}\ e_2\ \textbf{else}\ e_3}
                         {\storeOuter}
                         {e_2} 
                         {\storeOuter}}
}
\newcommand{\eIfFalse}{
        \inferrule*[lab=e-ifFalse]
                {\storeInner_{\rho}^{in.} = \storeOuter(\rho) \qquad \text{false} = \storeInner_{\rho}^{in.}(l_\rho)}
                {\eformat{\textbf{if}\ l_\rho\ \textbf{then}\ e_2\ \textbf{else}\ e_3}
                         {\storeOuter}
                         {e_3} 
                         {\storeOuter}}
}
\newcommand{\eLet}{
        \inferrule*[right=e-let]
                {\eformat{e_1}{\storeOuter}{e'_1}{\storeOuter'}}
                {\eformat{\textbf{let}\ x : \mu = e_1\ \textbf{in}\ e_2}
                         {\storeOuter}
                         {\textbf{let}\ x : \mu = e'_1\ \textbf{in}\ e_2}
                         {\storeOuter'}}
}
\newcommand{\eLetL}{
        \inferrule*[right=e-letL]
                {\quad}
                {\eformat{\textbf{let}\ x : \mu = l_\rho\ \textbf{in}\ e_2}
                         {\storeOuter}
                         {[x \mapsto l_\rho] e_2}
                         {\storeOuter}}
}
\newcommand{\eFix}{
        \inferrule*[right=e-fix]
                {\eformat{\Lambda \{\alpha, \rho, \epsilon\} . \lambda x . e_1\ [s]\ \textbf{at}\ e_2}
                         {\storeOuter}
                         {e'}
                         {\storeOuter'}}
                {\eformat{\textbf{let}\ f = \textbf{fix}(f, \Lambda \{\alpha, \rho, \epsilon\} . \lambda x . e_1\ [s]\ \textbf{at}\ e_2)\ \textbf{in}\ e_3}
                         {\storeOuter}
                         {\textbf{let}\ f = (f, e')\ \textbf{in}\ e_3}
                         {\storeOuter'}}
}
\newcommand{\eFixL}{
        \inferrule*[lab=e-fixL]
                {\storeInner_{\rho}^{in.} = \storeOuter(\rho) \qquad \Lambda\{\alpha, \rho, \epsilon\} . \lambda x . e_1 = \storeInner_{\rho}^{in.}(l_\rho) }
                {\eformat{\textbf{let}\ f = (f, l_{\rho})\ \textbf{in}\ e_3}
                         {\storeOuter}
                         {[x \mapsto l_{\rho}] e_3}
                         {\storeOuter}}
}
\newcommand{\composeEBot}{
        \inferrule*[right=$\times$-$\bot$]
                {\quad}
                {\varphi \times \{\bot\} }
}
\newcommand{\composeEFresh}{
        \inferrule*[right=$\times$-Fresh]
                {\quad}
                {\varphi \times \{\textbf{fresh}\ \rho\ s \}}
}
\newcommand{\composeEFree}{
        \inferrule*[right=$\times$-Free]
                { \{\textbf{free}\ \rho \} \notin \varphi }
                {\varphi \times \{\textbf{free}\ \rho \} }
}
\newcommand{\composeEFreshAlloc}{
        \inferrule*[lab=$\times$-FreshAlloc]
                { \exists s'' . s' + s'' + \text{sumAllocs}(\rho, \varphi) \sqsubseteq s }
                {\{\textbf{fresh}\ \rho\ s\} \times \varphi \times \{\textbf{alloc}\ s'\ \rho\} } 
}
\newcommand{\composeESplitAlloc}{
        \inferrule*[lab=$\times$-SplitAlloc]
                { \exists s'' . s' + s'' + \text{sumAllocs}(\rho, \varphi) \sqsubseteq s }
                {\{\textbf{split}\ \rho\ s\ \rho'\} \times \varphi \times \{\textbf{alloc}\ s'\ \rho\} } 
}
\newcommand{\composeEAlloc}{
        \inferrule*[lab=$\times$-Alloc]
                { \{\textbf{free}\ \rho \} \notin \varphi \qquad 
                  \{\textbf{fresh}\ \rho\ s'\} \notin \varphi }
                {\varphi \times \{\textbf{alloc}\ s\ \rho\} }
}
\newcommand{\composeEFreshSplit}{
        \inferrule*[lab=$\times$-FreshSplit]
                {
                \exists s'' . s' + s'' + \text{sumAllocs}(\rho, \varphi) \sqsubseteq s 
                }
                { \{\textbf{fresh}\ \rho\ s\} \times \varphi \times \{\textbf{split}\ \rho\ s'\ \rho'\}  }
}
\newcommand{\composeESplitSplit}{
        \inferrule*[lab=$\times$-SplitSplit]
                {
                \exists s'' . s' + s'' + \text{sumAllocs}(\rho, \varphi) \sqsubseteq s }
                { \{\textbf{split}\ \rho\ s\ \rho'\} \times \varphi \times \{\textbf{split}\ \rho'\ s'\ \rho''\}  }
}
\newcommand{\composeESplit}{
        \inferrule*[lab=$\times$-Split]
                {
                 \{\textbf{free}\ \rho\} \notin \varphi }
                {\varphi \times \{\textbf{split}\ \rho\ s\ \rho'\} }
}
\newcommand{\composeEVarLeft}{
        \inferrule*[lab=$\times$-VarL]
                { 
                 \forall (\rho \times s) \in  \text{freeAllocs}(\varphi) . s \sqsupseteq \omega}
                {\{\epsilon\} \times \varphi } 
}
\newcommand{\composeEVarRight}{
        \inferrule*[lab=$\times$-VarR]
                { 
                 \forall (\rho \times s) \in  \text{freeAllocs}(\varphi) . s \sqsupseteq \omega}
                {\varphi \times \{ \epsilon \} } 
}
\newcommand{\composeERec}{
        \inferrule*[lab=$\times$-Rec]
                { [\epsilon \mapsto \{\bot\}] \varphi' } 
                {\varphi \times \{\textbf{rec}\ \epsilon\ \varphi' \} } 
}
\newcommand{\composeEJoin}{
        \inferrule*[lab=$\times$-$\sqcup$]
                {
                 \varphi_1 \times \varphi_2  \qquad 
                 \varphi_1 \times \varphi_3 }
                {\varphi_1 \times (\varphi_2 \sqcup \varphi_3) }
}
\begin{document}

\maketitle

\begin{abstract}
    Region based memory management is a powerful tool designed with the goal of
    ensuring memory safety statically. The region calculus of Tofte and Talpin
    is a well known example of a region based system, which uses regions to
    manage memory in a stack-like fashion. However, the region calculus is
    lexically scoped and requires explicit annotation of memory regions, which
    can be cumbersome for the programmer. Other systems have addressed  
    non-lexical regions, but these approaches typically require the use of a
    substructural type system to track the lifetimes of regions.
    %
    %
    We present \textsc{Spegion}, a language with implicit non-lexical
    regions, which provides these same memory safety guarantees for
    programs that go beyond using memory allocation in a stack-like
    manner. We are able to achieve this with a concise syntax, and
    without the use of substructural types, relying instead on an
    effect system to enforce constraints on region allocation and
    deallocation.
    These regions may be divided into sub-regions, i.e.,
    \textbf{Sp}littable r\textbf{Egion}s, allowing fine grained control over
    memory allocation. Furthermore, \textsc{Spegion} permits \emph{sized}
    allocations, where each value has an associated size which is used to ensure
    that regions are not over-allocated into. We present a type system for
    \textsc{Spegion} and prove it is type safe with respect to a small-step
    operational semantics.   
\end{abstract}

\section{Introduction}
\label{sec:intro}
Writing error-free code is hard. Writing error-free systems code is even harder.
Possibly the most common source of bugs in systems programming code is the
traditional approach of manual memory management. Use-after-free bugs, where a
program attempts to access memory that has been deallocated, are pervasive in C
code, and can lead to undefined behaviour, crashes, and security
vulnerabilities. Nevertheless, manual memory management is still widely used in
systems code as it provides the programmer with fine-grained control over memory
usage. A vast swathe of research has been dedicated to the problem of memory
safety in systems programming languages. One approach is to use a garbage
collector to automatically manage memory. However, garbage collectors can    
introduce unpredictable pauses in the execution of a program, which is
unacceptable in systems programming where performance is critical. Another
approach is to ensure memory safety statically using a type system, one of the
most well-studied verification tools in computer science.

The region calculus of Tofte and Talpin~\cite{TOFTE1997109} is a type system
designed for ensuring memory safety in ML~\cite{tofte2001} through the use of
memory \emph{regions}. Memory is divided into regions which are allocated in
a stack-like manner, using the language's $\textbf{letregion}$ construct. This
allows deallocation to be done in a single operation by popping the topmost
region from the stack:
\begin{align*}
  \textbf{letregion}\ \rho\ \textbf{in}\ e
\end{align*}
The memory region $\rho$ is created and pushed onto the region stack.  
After the evaluation of $\textbf{letregion}$, $\rho$ is popped from the stack,
deallocating any memory allocated during the evaluation of $e$. Region based
memory management has since been applied to languages other than ML, such as
Java~\cite{10.1145/780822.781168, 10.1145/773039.512433} and
Prolog~\cite{10.1145/362426.362434}. However, these region calculi are lexically
scoped, restricting the programmer from expressing more complex memory usage
patterns that may arise in code. Lexically-scoped regions are tied to the static
structure of the program and are deallocated in a stack-like manner. This limits
flexibility in memory management, as regions cannot be split or partially
deallocated. Consider the following C program, for example: 
\begin{figure}[h]
\begin{lstlisting}[language=C,basicstyle=\ttfamily\footnotesize]
struct task {int n; struct task* next;};

struct task* build_tasks () {
    struct task* t      = malloc(sizeof(struct task));
    t->next             = malloc(sizeof(struct task));
    t->next->next       = malloc(sizeof(struct task));
    t->next->next->n    = 1;
    t->next->next->next = NULL;
    return t;
  }

int main () {
  struct task* t_head = build_tasks();
  struct task* t_last = t_head->next->next;
  free(t_head->next);
  free(t_head);
  return t_last->n;
}
\end{lstlisting}
\caption{Non-lexically-scoped memory use.}
\label{listing:linked_list}
\end{figure}

Here we have a linked list of tasks, where each task has a pointer to the next
task. The function \texttt{build\_tasks} allocates three tasks and links them
together. The main function then frees the first two tasks, leaving the third
task dangling. Traditional region calculi are unable to express this pattern as the
regions are allocated and de-allocated non-lexically.

There has been prior work exploring the safe use of non-lexically scoped
regions, though this work has often focused on inferring region annotations for
functional programs, or involved the use of substructural types (requiring that
code is written to use memory linearly or avoid aliasing, for example), or
imposed constraints on the runtime of programs like requiring reference
counting. This research is covered in detail in Section~\ref{subsec:regions}.


We present an approach to non-lexically scoped region which allows for the same
memory safety guarantees as prior systems, without the need for linear or
substructural types and with a concise syntax. We refer to these as \emph{implicit}
regions, as they are not first-class values in the language's syntax and are not
tied to the lexical structure of the program. Our system, named
\textsc{Spegion}, removes the need for first class regions in the language's
source syntax (with the exception of polymorphic constructs). Instead, when
allocating a piece of data, the programmer needs only to provide an existing
value in the region they wish to allocate the new value into. For example, 
\begin{align*} 
    \begin{array}{ll}
        \textbf{let}\ x =\ \textbf{newrgn}\ \textbf{in}\\
        \ \ \ \ \ v\ \textbf{at}\ x\\ 
        \ \ \ \ \ \textbf{freergn}\ x; 
    \end{array}
\end{align*}
The $\textbf{newrgn}$ construct returns a null pointer for a new region, which is
then used to allocate $v$ into that region. The type system then ensures that
the region associated with $x$ is live when $v$ is allocated into it.
Eliminating first-class regions from the syntax provides natural compatibility
with C-like languages, where regions are not first-class values.
\textsc{Spegion} sidesteps the need to retrofit complex region syntax onto C by
using a special syntactic form for allocation ($v\ \textbf{at}\ x$) to determine
which region to allocate a value into. Moreover, implicit regions eliminate the
need for substructural type constraints, a common source of complexity in region
type systems and a feature which C does not support. 

Our non-lexical approach to regions in \textsc{Spegion} also permits the
splitting of regions into sub-regions, allowing unused memory to be reclaimed
without deallocating the entire region. This allows the typing of patterns which
are difficult to express, even in C.

Furthermore, our system introduces the notion of \emph{sized} regions. Each
value in our language has an associated size, which is an abstract
representation of the amount of memory that the value occupies. We use this
information to ensure that regions are not over-allocated into, as each region
has an associated maximum size constraint. These constraints are provided
explicitly by the programmer as annotations on region creation and allocation
sites, giving them fine-grained control over memory usage in their program. In
the previous example, we allocated $v$ into the region associated with $x$. This
allocation carries the implicit constraint that the region of $x$ must be of an
arbitrary size. A more fine-grained allocation is possible using size
annotations: 
\begin{align*}
    \begin{array}{ll}
        \textbf{let}\ x =\ \textbf{newrgn}\ [10]\ \textbf{in}\\
        \ \ \ \ \ v\ [5]\ \textbf{at}\ x
    \end{array}
\end{align*}
Here, $10$ is a size representing $10$ abstract units of memory. The region
associated with $x$ has a size of $5$ and we can only allocate into this region
if the total size of allocations into the region does not exceed $10$. In the
above code, we allocate $5$ units of memory into the region. as indicated by the
size annotation on the allocation site. Contrarily, the following code would be
invalid:
\begin{align*}
    \begin{array}{ll}
        \textbf{let}\ x =\ \textbf{newrgn}\ [10]\ \textbf{in}\\
        \ \ \ \ \ v\ [10]\ \textbf{at}\ x; \\ 
        \ \ \ \ \ v\ [5]\ \textbf{at}\ x \\ 
    \end{array}
\end{align*}
Since the sum of allocations ($15$) is greater than the region's bound size.
This program is thus rejected by the type checker. 

\textsc{Spegion}'s size annotations are abstract units of memory, but the
principle can be applied to concrete memory sizes. In performance critical
applications, such as embedded systems, where the programmer has clear static
constraints on the size of data structures, the programmer can use these size
annotations to ensure that the program does not exceed the available memory. 

Size annotations are an optional feature of \textsc{Spegion}, as unannotated
regions and allocations default to being of unbounded size. These size
constraints on allocations can be enforced using an effect system, and present
an opportunity for further research into type-level reasoning about memory
usage.

\subsection{Contributions}
This paper makes the following contributions:   
\begin{itemize}

  \item We introduce \textsc{Spegion}, a novel core calculus and formal
        semantics for a language featuring implicit, non-lexical regions with
        explicit size constraints. Unlike traditional region calculi which are
        lexically scoped and require first-class region syntax, \textsc{Spegion}
        removes first-class regions from the source language syntax (except for
        polymorphic constructs), enabling natural compatibility with C-like
        languages. 
        
  \item We develop a type system that leverages effects to ensure safe
        allocation into implicit regions, eliminating the need for substructural
        or linear types commonly required in prior non-lexical region systems
        such as Cyclone. This approach supports flexible memory management
        patterns including region splitting and partial deallocation, which are
        difficult to express in lexically scoped systems. 
        
  \item We extend the concept of region-based memory management by introducing
        sized regions, where each region and allocation carries an abstract size
        annotation. This allows static enforcement of memory usage bounds via an
        effect system, providing fine-grained control over memory consumption
        and enabling static reasoning about memory usage in performance-critical
        applications. 
        
  \item We provide a formal proof of type safety for \textsc{Spegion},
        building on and adapting the syntactic proof techniques from Helsen and
        Thiemann’s work on the region calculus. 
        
  \item We demonstrate the expressiveness and practical applicability of
        \textsc{Spegion} through a series of illustrative examples, including a
        sketch of an extension combining refinement types with region sizes to
        further enhance static guarantees about memory usage.
\end{itemize}

\subsection{Overview}
The rest of the paper is structured as follows. Section~\ref{sec:static}
introduces the syntax and typing rules of \textsc{Spegion}. In
Section~\ref{sec:dynamic} we present the dynamic semantics of the language as
reduction rules. In Section~\ref{sec:properties} we present the type safety
properties of the language in the form of progress and preservation theorems.
Section~\ref{sec:applications} explores the potential applications of
\textsc{Spegion} in several domains, including a sketch of how refinement types
can be combined with our system to provide powerful static guarantees about
memory usage. We also discuss the application of our system in the context of
memory safety in C-like languages. Section~\ref{sec:related} discusses related
work, and Section~\ref{sec:conclusion} concludes.

\section{Static Semantics}
\label{sec:static}
We define a core calculus for \textsc{Spegion}, drawing from the region calculus
of~\cite{TOFTE1997109} and the presentation by~\cite{HELSEN20011}. The syntax of
types is given by:
\begin{align*}
        \begin{array}{rl}
        \tau ::=& \alpha \mid 
                  \text{Int} \mid \text{Unit} \mid \text{Bool} \mid \text{Ref}\ \tau \mid 
                  \mu \xrightarrow{\varphi} \mu  \mid \forall \{ \alpha, \rho, \epsilon \}. \mu \xrightarrow{\varphi} \mu \\ 
        \mu  ::=& (\tau, \rho)
    \end{array}
    \tag{types}
\end{align*}
Types $\tau$ are either type variables $\alpha$, integers, units, booleans,
references to other types, function types, or polymorphic types. A program
expression in our calculus is assigned a ``type-with-place'', i.e., a pair of a
type and a region, denoted by $\mu$. 

Function types are written as $\mu_1 \xrightarrow{\varphi} \mu_2$. Above the
function arrow is an \emph{arrow effect}, i.e. the latent effect that happens on
application of the function, as in~\cite{TOFTE1997109}. Polymorphic functions
are typed by type schemes $\forall \{ \alpha, \rho, \epsilon \}. \mu $ which
bind type $\alpha$, region $\rho$, and effect $\epsilon$ variables. We refer to
these as kind-annotated type variables, however, in the syntax of our calculus
we typically omit the kind annotations for brevity.  

Kinds are given by the grammar:
\begin{align*}
        \begin{array}{rl}
        \kappa ::= & \text{Type} \mid \kappa \rightarrow \kappa \mid \text{Region} \mid \text{Effect} \mid \text{Size} 
        \end{array}
        \tag{kinds}
\end{align*}
That is a kind is either a type, a function kind, a region, an effect, or a
size. Kinding rules for types are given in Figure~\ref{fig:static-kinding}.
These rules are mostly straightforward and ensure that type, regions, effects,
and sizes are well-structured. 
\begin{figure}[h]
\begin{align*}
        \begin{array}{cc}
                \kVar 
                \;\;\;
                \kForall
                \;\;\;
                \kRegion
                \\[1em]
                \kArrow
                \;\;\;
                \kApp
                \\[1em]
                \kAppTwo
                \;\;\;
                \kInt 
                \;\;\; 
                \kUnit
                \\[1em]
                \kRef
                \;\;\;
                \kBool
                \;\;\;
                \kSize
                \\[1em]
                \kOp
                \;\;\;
                \kTyWithPlace
                \\[1em]
                \kBot
                \;\;\;
                \kCompose
                \;\;\;
                \kJoin
                \\[1em]
                \kAlloc
                \;\;\;
                \kNew
                \\[1em]
                \kSplit
                \\[1em]
                \kFree
                \;\;\;
                \kEps 
                \;\;\; 
                \kRec
        \end{array}
\end{align*}
\caption{Kinding rules in \textsc{Spegion}}
\label{fig:static-kinding}
\end{figure}

\subsection{Effects and Sizes}
An effect $\varphi$ describes the sequence of actions that are performed on a
region when an expression is evaluated. Many expressions in the calculus convey
some effect, which form part of the expression's typing judgement. The following
grammar defines these actions:
\begin{align*}
        \begin{array}{rl}
        \varphi ::= & \varphi \times \varphi \mid \{\bot\} \mid \{\textbf{fresh}\ \rho\ s \} \mid \{ \textbf{free}\ \rho\} \mid \{\textbf{split}\ \rho\ s\ \rho'\} \mid \{  \textbf{alloc}\ s\ \rho\}  
          \mid \varphi \sqcup \varphi \\ \mid &  \{\epsilon\} \mid \{\textbf{rec}\ \epsilon\ \varphi\} 
        \end{array} 
    \tag{effects}
\end{align*}
Effects are composed using the $\times$ operator, which describes the sequential
composition of two effects. This operator also ensures constraints on memory
usage are respected when two effects are composed. The $\{\bot\}$ effect 
describes an empty effect, i.e., no actions are performed, and is used 
in the typing of expressions which are pure computations. The creation 
of a fresh region is denoted by $\{\textbf{fresh}\ \rho\ s\}$, where $\rho$ is
a fresh region name and $s$ is the region's \emph{size}.

A size is an abstract unit of memory. For simplicity, we consider sizes to be
natural numbers, extended with a special size $\omega$ which represents an
unknown size. Sizes thus comprise an extended natural numbers preordered
semiring $(\overline{\mathbb{N}}, +, \cdot, \dot -, 0, 1, \sqsubseteq)$ where
$\overline{\mathbb{N}} = \mathbb{N} \cup \{ \omega \}$, such that $\forall n \in
\overline{\mathbb{N}} . n + \omega = \omega = \omega + n = \omega$. The preorder
$\sqsubseteq$ is the standard preorder on $\mathbb{N}$ with $\omega$ as the
greatest element. The semiring is also equipped with
a truncated subtraction (or \emph{monus}) operation $\dot -$, given by Definition~\ref{df:monus}: 
\begin{df}[Monus ($\dot -$)]
For any two numbers $n, n' \in \overline{\mathbb{N}}$, $\dot -$ is defined:
\begin{align*}
        n \dot - n' = \left\{\begin{matrix}
        \begin{array}{ll}
                n - n' & n, n' \sqsubset \omega \wedge n \sqsupseteq n',\\ 
                0 & n, n' \sqsubset \omega \wedge n \sqsubset n'\\
                n & n' = 0\\
                0 & n \sqsubset n' \wedge n' = \omega,\\
                \omega &\text{if } n = \omega
        \end{array}\end{matrix}\right.
\end{align*}
\label{df:monus}
\end{df}
The typing rules do not make use of subtraction, however, it is required during
evaluation. 

The effect $\{\textbf{free}\ \rho\}$ describes the freeing of a region $\rho$,
which deallocates all memory in the region. The creation of sub-regions is
denoted by $\{\textbf{split}\ \rho\ s\ \rho'\}$, where $\rho'$ is a fresh region
name for a sub-region of $\rho$ with size $s$. Allocation of a value into a
region is described by $\{\textbf{alloc}\ s\ \rho\}$, where $s$ is the size of
the value being allocated, and $\rho$ is the region it is being allocated into.
If statements introduce branching into our calculus, with the different cases
potentially having different effects. This is captured by the $\{\varphi_1
\sqcup \varphi_2\}$ effect, where $\varphi_1$ denotes the effect of the true
branch and $\varphi_2$ the effect of the false branch. Recursion is described in
effects through the $\{\epsilon\}$ and $\{\textbf{rec}\ \epsilon\ \varphi\}$
effects. The former denotes the usage of a recursive function definition, while
the latter captures the latent effect of this function upon application. We
explain further when we discuss the typing rules for recursive functions.

\subsection{Syntax and Typing}
The term syntax of our calculus comprises the $\lambda$-calculus, with the
addition of references, sequential composition, polymorphic recursion, as well
as constructs for region manipulation. These include expressions for creating,
freeing, and splitting regions, as well as primitives for allocating data into
regions and copying data between regions. The full syntax is given by the
following grammar, which divides constructs between values $v$ and expressions
$e$:
\begin{align*}
    \begin{array}{rl}
        v ::= & n \mid \text{true} \mid \text{false} \mid \Lambda \{ \alpha, \rho , \epsilon \} . \lambda x . e \mid \lambda x . e \mid () \mid l_\rho\\
        e ::= & x \mid l_\rho \mid v\ [s]\ \textbf{at}\ e \mid e\ e \mid \textbf{ref}\ e \mid\ !e \mid e := e \mid e ;\ e \mid \textbf{if}\ e\ \textbf{then}\ e\ \textbf{else}\ e \\
             \mid & \textbf{let}\ x = e\ \textbf{in}\ e \mid  e\ @\ \mu \mid \textbf{let}\ f = \textbf{fix}(f, (\Lambda \{ \alpha , \rho , \epsilon \} . \lambda x . e_1)\ [s]\ \textbf{at}\ e_2)\ \textbf{in}\ e_3 \\ 
             \mid & \textbf{newrgn}\ [s] \mid  \textbf{freergn}\ e \mid \textbf{split}\ [s]\ e \mid \textbf{copy}\ e\ \textbf{into}\ e 
    \end{array}
    \tag{terms}
\end{align*}

The syntax of values $v$ includes integers $n$, booleans $\text{true}$ and
$\text{false}$, polymorphic functions $\Lambda \{ \alpha , \rho , \epsilon \} .
\lambda x . e$, functions $\lambda x . e$, unit $()$, and locations $l_\rho$.
Expressions $e$ include variables $x$, locations $l_\rho$, value allocations $v\
[s]\ \textbf{at}\ e$, function application $e\ e$, reference creation
$\textbf{ref}\ e$, dereferencing $!e$, assignment $e := e$, sequential
composition $e;\ e$, conditionals $\textbf{if}\ e\ \textbf{then}\ e\
\textbf{else}\ e$, let-binding $\textbf{let}\ x = e\ \textbf{in}\ e$, type
applications $e\ @\ \mu$, and recursive function definitions $\textbf{let}\ f =
\textbf{fix}(f, (\Lambda \{ \alpha , \rho , \epsilon \} . \lambda x . e_1)\ [s]\
\textbf{at}\ e_2)\ \textbf{in}\ e_3$. Region manipulation constructs include
creating a new region $\textbf{newrgn}\ [s]$, freeing a region
$\textbf{freergn}\ e$, splitting a sub-region from another region
$\textbf{split}\ [s]\ e$, and copying data between regions $\textbf{copy}\ e\
\textbf{into}\ e$.

Expressions are typed by the judgement: 
\begin{align*}
    K \mid \Gamma \mid \Sigma \vdash e : (\tau, \rho) \mid \varphi
\end{align*}
assigning an expression $e$ the type $\tau$ in region $\rho$ with effect
$\varphi$. A context of kind-annotated type variables $K$ provides a mapping
from type variables to kinds, given by: 
\begin{align*}
\begin{array}{rl}
    K ::= & \emptyset \mid K, \alpha : \text{Type} \mid K, \rho : \text{Region} \mid K, \epsilon : \text{Effect} 
\end{array}
\tag{kinding contexts}
\end{align*}
That is, a context may be empty $\emptyset$, extended with a type variable
$\alpha : \text{Type}$, a region variable $\rho : \text{Region}$, or an effect
variable $\epsilon : \text{Effect}$. Free variable contexts $\Gamma$ are given
by: 
\begin{align*}
\begin{array}{rl}
    \Gamma ::= & \emptyset \mid \Gamma, x : \mu
\end{array}
\tag{free variable contexts}
\end{align*} 
A store typing context $\Sigma$ is a mapping from locations to types, defined: 
\begin{align*}
\begin{array}{rl}
    \Sigma ::= & \emptyset \mid \Sigma, l_\rho : \tau   
\end{array}
\tag{store typing}
\end{align*}
That is $l_\rho$ is a location in region $\rho$ with type $\tau$.
Figure~\ref{fig:static-typing-expressions} provides typing rules for the
expressions. Values, which are non-computational objects, are typed by a
separate judgement
\begin{align*}
    K \mid \Gamma \mid \Sigma \vdash v : \tau
\end{align*} 
This judgement is similar to that of expressions but does not include an effect.
Figure~\ref{fig:static-typing-values} gives the typing rules for values,
relating terms to types.

{\small{
\begin{figure}[t]
    \begin{align*}
    \begin{array}{cc}
        \intT
        \;\;
        \unitT
        \\[-0.3em]
        \trueT 
        \;\;
        \falseT
        \\[1em]
        \locValT
        \;\;
        \absT
        \\[1em]
        \biglamT
    \end{array}
    \end{align*}
    \caption{Value typing rules in \textsc{Spegion}}
    \label{fig:static-typing-values}
\end{figure}
}} 

The rules for integers (\textsc{t-int}), unit (\textsc{t-unit}), and bools
(\textsc{t-true}, \textsc{t-false}) are standard. For (\textsc{t-}$\lambda$),
the type of the argument $x$ is added to the variable context $\Gamma$ and the
expression $e$ is typed. The type of the function is then a function type from
the type of the argument to the type of the body. The effect of the function is
the latent effect of the body. A location $l_\rho$ is typed by (\textsc{t-loc}),
where the location's type $\tau$ is looked up in the store typing $\Sigma$.
Finally, polymorphic functions are typed by (\textsc{t-$\Lambda$}), which also
adds the kind-annotated type, region, and effect variables to the kind variable
context $K$. 
    
\begin{figure}[t]
    {{
    \begin{align*}
        \begin{array}{cc}
                \uselocT 
                \;\;
                \valT 
                \\[1em]
                \newrgnT
                \\[1em]
                \freergnT
                \\[1em]
                \splitT
                \\[1em]
                \copyT
                \\[1em]
                \varT 
                \;\;\;
                \letT
                \\[1em]
                \ifT
                \\[1em]
                \refT
                \;\;\;
                \derefT
                \\[1em]
                \assignT
                \;\;\;
                \seqT
                \\[1em]
                \appT
                \\[1em]
                \bigAppT
                \\[1em]
                \fixT
        \end{array}
    \end{align*}
    }}
    \caption{Expression typing rules in \textsc{Spegion}}
    \label{fig:static-typing-expressions}
\end{figure}
In the typing of expressions, the rule for variables (\textsc{t-var}) is
straightforward, if the variable is present in the  
free variable context $\Gamma$ with type $\mu$ then $x : \mu$. The rule for
locations (\textsc{t-use-val}) is also straightforward, if the location $l_\rho$ is
present in the store typing $\Sigma$ with type $\tau$ then $l_\rho : (\tau,
\rho)$.

Creating a new region is handled by the (\textsc{t-newrgn}) rule. The
$\textbf{newrgn}$ construct allocates a new region of size $s$ and has type
$(\text{Unit}, \rho)$ where $\rho$ is the freshly allocated region. This
behaviour is captured in the rule's effect $\{\textbf{fresh}\ \rho\ s\}$, which
binds a fresh region name $\rho$ to the size $s$. Evaluating this expression
returns a unit value, which acts as a pointer into the region, allowing
this region to be referenced in the subsequent program. This eliminates the need
for syntactic region variables in all non-polymorphic code. The allocation of
this null value is described in the second part of the effect $\{\textbf{alloc}\
1\ \rho\}$, i.e., allocate a value of size $1$ into $\rho$ (since a value cannot
be $0$-sized). Thus, we have the additional constraint $s \sqsupseteq 1$
ensuring regions must be able to allocate at least $1$ unit of memory.

The rule for freeing a region (\textsc{t-freergn}) deletes the region $\rho$
associated with the expression $e$, freeing its associated memory locations.
This yields an effect comprised of the effect of typing $e$ ($\varphi$) followed
by the effect $\{\textbf{free}\ \rho\}$. Evaluating a $\textbf{freergn}$
expression returns a pointer to value of type $\text{Unit}$ in the global
region $\globalRegion$ which is always available.

A sub-region may be split from a region using the (\textsc{t-split}) rule. As
with $\textbf{freergn}$, the region to be split is associated with the
sub-expression $e$. The programmer-defined size $s$ determines the size of the
new sub-region, represented by the effect $\{\textbf{split}\ \rho\ s\ \rho'\}$.
As in (\textsc{t-newrgn}), a null pointer into the new region is created (thus
the rule also has an $\textbf{alloc}$ effect), and the overall expression has
the type $(\text{Unit}, \rho')$, where $\rho'$ is the fresh sub-region.

Copying data between regions is handled by the (\textsc{t-copy}) rule. The
behaviour of $\textbf{copy}$ can also be simulated in terms of
(\textsc{t-assign}) and (\textsc{t-val}). However, we include the
$\textbf{copy}$ construct as a useful primitive. The expression $\textbf{copy}\
e_1\ \textbf{into}\ e_2$ copies the location obtained from evaluating $e_1$
from its region $\rho$ into the region $\rho'$ associated with $e_2$. The effect
of this operation is the composition of the effects obtained from typing $e_1$
and $e_2$, followed by the effect $\{\textbf{alloc}\ 1\ \rho'\}$. Locations
have a size of $1$, thus this effect conveys the creation of a new location in
$\rho'$ pointing to location value.

Allocating a value into a region is handled by the $\textsc{t-val}$ rule. A
programmer is required to annotate the allocation of a value with a size, which is
used in the effect $\{\textbf{alloc}\ s\ \rho\}$. Unlike~\cite{TOFTE1997109},
where the region is explicitly passed as an argument to the allocation
primitive, our rule allocates the value into the region associated with the
sub-expression $e$. For example, consider the following trivial program:
\begin{align*}
    \textbf{let}\ x = \textbf{newrgn}\ [3]\ \textbf{in}\ ()\ [2]\ \textbf{at}\ x
\end{align*}
which creates a new region of size $3$ and allocates a unit value of size $2$
into it. This program has the type $(\text{Int}, \rho)$ with the effect: 
\begin{align*}
    \{\textbf{fresh}\ \rho\ 3\} \times \{\textbf{alloc}\ 1\ \rho\} \times \{\textbf{alloc}\ 2\ \rho\}
\end{align*}
conveying the creation and allocation into a region without ever giving it an
explicit name in the program. This allows our calculus to model region-based
memory management in languages where these first-class regions are not present
in the source language.

Typing rules for let-binding (\textsc{t-let}), creating a reference
(\textsc{t-ref}), dereferencing (\textsc{t-deref}), assigning to a reference
(\textsc{t-assign}), and sequential composition (\textsc{t-seq}) thus handle the
sequencing of the effects of their sub-expressions, but are otherwise
straightforward. Similarly, application (\textsc{t-app}) sequences the effects
of the sub-expressions, followed by the function type's latent effect. If 
statements (\textsc{t-if}) are typed using a join effect $\{\varphi_1 \sqcup
\varphi_2\}$, which is the effect of the true branch $\varphi_1$ and the false
branch $\varphi_2$ combined.

Recursive function definitions are typed by the (\textsc{t-fix}) rule. The body
of the definition is typed with an effect variable in place of the functions
actual latent effect. In the rule's third premise, this variable is then
substituted for the recursive effect $\{\textbf{rec}\ \epsilon\ \varphi\}$
yielding $\varphi'$. In the typing of $e_3$, $f$ is then bound with the same
function type as before but with this new latent effect $\varphi’$. This
recursive effect conveys that the the effect $\varphi$ may be repeated an
unbounded number of times. This is akin to the notion of ``Kleene star
effects''~\cite{10.1007/978-3-319-27810-0_1, 10.1145/2914770.2837634} and
iterable sequential effects~\cite{10.1145/3450272}.

A consequence of recursion is that regions which appear freely in $e_1$ must be
capable of allocating an unbounded amount of memory. This constraint is enforced
at the point where the recursive variable effect is composed with another effect
in the typing of the function body (via $\times$), since the presence of a
$\epsilon$ in $\varphi$ indicates that the function recurses. We provide the
definition of valid effect composition in detail in
Section~\ref{sec:effect-combination}. 

Recursive variable usage thus requires that the variable’s type scheme be
instantiated via the (\textsc{t-tyApp}) rule, which instantiates the type scheme
and region at the type and region provided by the programmer. This is the only
rule in the syntax where a region name is required, which can be obtained using
a primitive operator $\text{regionOf}(e)$. 

\subsection{Effect Composition}
\label{sec:effect-combination}
The typing rules above define the individual effects of expressions. However,
taken alone an effect does not provide a complete picture of the memory usage of
a program. For example, the effect of an allocation is the $\{\textbf{alloc}\ s\
\rho\}$ effect preceded by the effect of the expression which identifies the
region being allocated into. However, at the point of typing this allocation,
there is no way of knowing if the allocation described by this effect is valid
- this can only be known when the effect is composed with the effect which
describes the creation of $\rho$. We describe below the behaviour of the
$\times$ operator, which composes two effects. Effects in our calculus are
sequential, thus the effect $\varphi \times \varphi'$ describes the effect of
$\varphi$ followed by the effect of $\varphi'$.
Figure~\ref{fig:effect-comp} provides the
rules for valid effect compositions. 

In the ($\times$-$\bot$) and ($\times$-\textsc{Fresh}) rules, composing an
effect with $\{ \bot \}$ or $\{\textbf{fresh}\ \rho\ s\}$ is always valid. In
the latter case we thus assume that \textsc{Spegion} always has the ability to
allocate new regions. An effect which frees a region $\{\textbf{free}\ \rho\}$
is valid only if the region is not already freed in the preceding effect
($\times$-\textsc{Free}). 

Effects which handle allocation must consider three cases. The first
($\times$-\textsc{FreshAlloc}) is when the entire lifetime of the region up
until the point of the current allocation is described in the preceding effects.
In this case, the size of the new allocation effect $s'$ must not exceed the
current total size of allocations in the $\rho$. This is enforced by the
constraint $\exists s'' . s'' + s' + \text{sumAllocs}(\rho, \varphi) \sqsubseteq s$, i.e., there
exists some leftover space $s''$ after adding the size of the new allocation
$s'$ to the total size of allocations in $\rho$. Definition~\ref{df:sumAllocs}
defines the function calculating this total size, mapping a region and effect to
a size. 
\begin{figure}[H]
    \begin{align*}
                \begin{array}{cc}
                        \composeEBot
                        \;\;\;
                        \composeEFresh
                        \;\;\;
                        \composeEFree 
                        \\[1em]
                        \composeEFreshAlloc 
                        \;\;\;
                        \composeEAlloc 
                        \\[1em]
                        \composeESplitAlloc 
                        \;\;\;
                        \composeEFreshSplit  
                        \\[1em]
                        \composeESplit
                        \;\;\;
                        \composeESplitSplit 
                        \;\;\;
                        \composeEJoin
                        \\[1em] 
                        \composeEVarRight
                        \;\;\; 
                        \composeEVarLeft 
                        \;\;\;
                        \composeERec
                \end{array}
            \end{align*}
    \caption{Effect composition rules in \textsc{Spegion} ($\times$)}
    \label{fig:effect-comp}
\end{figure}

\begin{df}[Total Size of Allocations]
    \begin{align*}
        \text{sumAllocs}(\rho, \varphi) = \left\{\begin{matrix}\begin{array}{lll}
            s + \text{sumAllocs}(\rho, \varphi') & \varphi = \varphi' \times \{\textbf{alloc}\ s\ \rho\} \\ 
            s + \text{sumAllocs}(\rho, \varphi') & \varphi = \varphi' \times \{\textbf{split}\ \rho\ s\ \rho'\}\\ 
            \text{sumAllocs}(\rho, \varphi') + \text{sumAllocs}(\rho, \varphi'') & \varphi = \varphi' \times \varphi'' \\
            \text{max}(\text{sumAllocs}(\rho, \varphi'), \text{sumAllocs}(\rho, \varphi'')) & \varphi = \varphi' \sqcup \varphi'' \\ 
            0 & \text{otherwise}
        \end{array}\end{matrix}\right.
    \end{align*}
    \label{df:sumAllocs}
\end{df}
A similar case applies when allocating into a sub-region
where the entire lifetime of the region is described in the preceding 
effects ($\times$-\textsc{SplitAlloc}). 

The second case for allocation ($\times$-\textsc{Alloc}) is when the region is
not created in the preceding effect, i.e. $\varphi \times \{\textbf{alloc}\ s\
\rho\}$ where $\{\textbf{fresh}\ s'\ \rho\} \notin \varphi$. This situation
arises when combining two effects in the typing of a sub-term of the expression
where the region is created. In this case, the current total size of allocations
in $\rho$ is unknown, thus the permissibility of the current allocation effect
is unknown. Thus, the allocation is simply accepted as valid with the knowledge
that $\varphi \times \{\textbf{alloc}\ s\ \rho\}$ will eventually be composed
with some effect describing the preceding lifetime of $\rho$, at which point it
will be validated by the previous case. 

The rules for composing split effects $\{\textbf{split}\ \rho\ s\ \rho'\}$ behave
in a similar manner as allocation ($\times$-\textsc{FreshSplit}),
($\times$-\textsc{SplitSplit}), and ($\times$-\textsc{Split}). The first two
rules handle the cases where the entire lifetime of the region is described in
the preceding effects, whilst the last rule handles the case where this
information is not available. 

Composing a join effect $\varphi_2 \sqcup \varphi_3$ with another effect
$\varphi_1$ is straightforward ($\times$-$\sqcup$). Since both branches of the
program must be well-typed, the effect $\varphi_1$ is simply composed with both
effects. If this is valid then so the composition of the join: $\varphi_1 \times
\{\varphi_2 \sqcup \varphi_3\}$. 

Effect variables $\epsilon$ occur as part of an effect when a recursive function
definition is typed. Composition of these effects are handled by the
($\times$-\textsc{VarR}) and ($\times$-\textsc{VarL}) rules. A constraint of
recursive function definitions is that regions which occur freely of the
function body must be created with an unbounded size $\omega$ since it is not
possible to know statically how many times the function will recurse. Regions
which are created inside the function body itself may be given any size. The
addition of a lightweight refinement types would allow us to express and type
programs which allocate recursively into a free region under some constraints.
We consider this idea in more depth in Section~\ref{subsec:refinement-types}.

The free regions of the effects $\varphi \times \{\epsilon \}$ and $\{\epsilon\}
\times \varphi$ are collected by the function $\text{freeAllocs}$, given by
Definition~\ref{df:free-allocs}. This function maps an effect to the set of
allocations (and splits) of free regions which occur in the effect. A region is
free if there is no $\{\textbf{fresh}\ s\ \rho\}$ effect in $\varphi$. This set
takes the form of pairs of regions and sizes, where $\bot$ is
used to indicate that the region was freed inside $\varphi$. Since regions of
unbounded size are represented by size $\omega$, the only sized allocations that
are permitted must also be those of $\omega$.
This is enforced by the constraint $\forall (\rho \times s) \in
\text{freeAllocs}(\varphi) . s \sqsupseteq \omega$.
\begin{df}[Free Region Allocations of an Effect]
    Given an effect $\varphi$, the allocations of regions which are free in $\varphi$ is given by: 
    \begin{align*}
        \text{freeAllocs}(\varphi) = \left\{\begin{matrix}
                \begin{array}{lll}
            \{ (\rho \times s) \} & \varphi = \{ \textbf{alloc}\ s\ \rho \}\\
            \{ (\rho \times s) \} \cup \text{freeAllocs}(\varphi') & \varphi = \varphi' \times \{ \textbf{alloc}\ s\ \rho \} \wedge \\ & \{\textbf{fresh}\ \rho\ s'\} \notin \varphi'\\           
            \{ (\rho \times s) \} & \varphi = \{\textbf{split}\ \rho\ s\ \rho'\}\\
            \{ (\rho \times s) \} \cup \text{freeAllocs}(\varphi') & \varphi = \varphi' \times \{\textbf{split}\ \rho\ s\ \rho'\} \wedge \\ & \{\textbf{fresh}\ \rho\ s'\} \notin \varphi'\\
            \{ (\rho \times 0 )\} & \varphi = \{\textbf{free}\ \rho\}\\ 
            \{ (\rho \times 0) \} \cup \text{freeAllocs}(\varphi') & \varphi = \varphi' \times \{\textbf{free}\ \rho\} \wedge \\ &  \{\textbf{fresh}\ \rho\ s\} \notin \varphi'\\
            \text{freeAllocs}(\varphi') & \varphi = \{\textbf{rec}\ \epsilon\ \varphi'\}\\
            \text{freeAllocs}(\varphi') \cup \text{freeAllocs}(\varphi'') & \varphi = \varphi' \times \varphi''\\
            \text{freeAllocs}(\varphi') \cup \text{freeAllocs}(\varphi'') & \varphi = \varphi' \sqcup \varphi''\\
            \emptyset & \text{otherwise}
        \end{array}\end{matrix}\right.
    \end{align*}
    \label{df:free-allocs}
\end{df}
Finally, the usage of a recursive function definition is handled via the
$\{\textbf{rec}\ \epsilon\ \varphi''\}$ effect. This rule
($\times$-\textsc{Rec}) is straightforward, since the burden of ensuring that
free regions in $\varphi''$ are unbounded is placed on the
($\times$-\textsc{VarR}) and ($\times$-\textsc{VarL}) rules. The latent effect
of the recursive function definition $\varphi''$ is simply composed with the
preceding effect $\varphi$. Effect variables in $\varphi''$ are substituted for
$\{\bot\}$ since there is no need to recheck the validity of $\{\epsilon\}$
effects. 

Effect combination is a partial function, since the combination of some effects
is not valid. For example, the combination of two effects which free the same
region is invalid. In this case, the function $\times$ is undefined. Undefined
compositions are rejected during typing.

We conclude this section with an example of a \textsc{Spegion} program and its typing:
\begin{align*}
    \begin{array}{ll}
    \textbf{let}\ x = \textbf{newrgn}\ [5]\ \textbf{in}\\
    \textbf{in}\\ 
    \ \ \ \ (\lambda z . \textbf{newrgn}\ [5];\\
    \ \ \ \ \ \ \ \ \textbf{newrgn}\ [5])\ [1] \textbf{at}\ x\\
    \end{array}
\end{align*}
which defines a function that creates two fresh regions, and is typed as: 
\begin{align*}
    ((\text{Unit}, \rho')\ \xrightarrow{\{\textbf{fresh}\ \rho_1\ 5\} \times \{\textbf{fresh}\ \rho_2\ 5\}} (\text{Unit}, \rho_2), \rho) \mid \{\textbf{alloc}\ 1\ \rho\}
\end{align*}

\section{Dynamic Semantics}
\label{sec:dynamic}
The relation $\langle e \mid \storeOuter \rangle \rightarrow \langle e' \mid
\storeOuter' \rangle$ means that $e$ reduces to $e'$ in a single step. Unlike
the calculus of~\cite{HELSEN20011}, our reduction rules include an explicit
store to enable the use of references. This store is dual layered, with an outer
store $\storeOuter$ mapping region names $\rho$ to pairs of inner stores
$\storeInner_{\rho}^{in.}$, and maximum sizes $s$: 
\begin{align*}
    \begin{array}{rl}
        \storeOuter ::= & \emptyset \mid \rho \mapsto (\storeInner_{\rho}^{in.}, s)
    \end{array}
    \tag{outer store}
\end{align*}
This maximum size which is specified by the programmer as an annotation when the
store is allocated initially via a \textbf{newrgn} expression. Inner stores thus
map locations to values:
\begin{align*}
    \begin{array}{rl}
        \storeInner_{\rho}^{in.} ::= & \emptyset \mid l \mapsto v 
    \end{array}
    \tag{inner store}
\end{align*}
Using our store typing context $\Sigma$ which maps locations to types, we define the following 
judgement for well-typed stores:
\begin{align*}
    K \mid \Gamma \mid \Sigma \vdash \storeOuter
\end{align*}
This judgement asserts that all values in the store are coherent with the store
typing, and that the total size allocated for each region does not exceed the region's
specified maximum size. Figure~\ref{fig:st-outer} gives the typing rules 
for stores.
{\small{
\begin{figure}[H]
    \begin{align*}
        \begin{array}{c}
        \inferrule*[right=st-outer-empty]
            {\quad}
            {K \mid \Gamma \mid \Sigma \vdash \emptyset }
            \;\;\;
        \inferrule*[right=st-outer-region]
            {
             \text{currentSize}(\storeOuter(\rho)) \sqsubseteq s \\\\
                K \mid \Gamma \mid \Sigma \vdash \storeOuter \qquad
             K \mid \Gamma \mid \Sigma \vdash \storeInner_{\rho}^{in.} 
            }
            {K \mid \Gamma \mid \Sigma \vdash \storeOuter, \rho \mapsto (\storeInner_{\rho}^{in.}, s)}
        \end{array}
    \end{align*}
    \caption{Typing rules of $\storeOuter$}
    \label{fig:st-outer}  
\end{figure}
}} From these rules a store is well-typed if it is empty
(\textsc{st-outer-empty}) or (\textsc{st-outer-region}) if each of its inner
stores are well-typed, and the total size of values allocated in a region does
not exceed the region's maximum size $s$. 

The $\text{currentSize}$ function is given by Definition~\ref{def:currentSize},
mapping inner stores to the total size of values allocated in that inner store,
the dynamic counterpart to Definition~\ref{df:sumAllocs} in
Section~\ref{sec:static}:
\begin{df}[Current Size]
\begin{align*}
    \begin{array}{ll}
        \hspace{-1em}\text{currentSize}(\storeInner_{\rho}^{in.}) = \left\{\begin{matrix}\begin{array}{lll}
            \text{currentSize}(\emptyset, s) & = 0\\
            \text{currentSize}((\storeInner_{\rho}^{in'.}, l_\rho \mapsto v), s) & = \text{sizeOf}(v) + \text{currentSize}(\storeInner_{\rho}^{in'.}, s)
        \end{array}\end{matrix}\right.
    \end{array}
\end{align*}
\label{def:currentSize}
\end{df}
The function $\text{sizeOf}(v)$ maps values $v$ to sizes, given 
by Definition~\ref{def:sizeOf}: 
\begin{df}[Value Size]
Given a value $v$, sizeOf($v$) is defined:
\begin{align*}
        \text{sizeOf}(v) = \left\{\begin{matrix}\begin{array}{lll} 
            1 + |\text{freeLocs}(e)| & v = (\lambda x . e)\\
            1 + |\text{freeLocs}(e)| & v = (\Lambda \{ \alpha , \rho , \epsilon \} . \lambda x . e)\\
            1 & \text{otherwise}
    \end{array}\end{matrix}\right.  
\end{align*}
\label{def:sizeOf}
\end{df}
where $\text{freeLocs}(e)$ returns a list of locations in an expression $e$. 

A separate judgement types the inner store:
\begin{align*}
    K \mid \Gamma \mid \Sigma \vdash \storeInner_{\rho}^{in.}
\end{align*}
asserting that all values in the inner store are coherent with the store typing.
The typing rules for inner stores are straightforward and are given in
Figure~\ref{fig:st-inner}. {\small{
\begin{figure}[H]
    \begin{align*}
        \begin{array}{c}
        \inferrule*[right=st-inner-empty]
            {\quad}
            {K \mid \Gamma \mid \Sigma \vdash \emptyset}
        \;\;\;
        \inferrule*[right=st-inner-loc]
            { \tau = \Sigma(l_\rho) \qquad K \mid \Gamma \mid \Sigma \vdash \storeInner_{\rho}^{in.}  \\\\
             K \mid \Gamma \mid \Sigma \vdash v : \tau
            }
            {K \mid \Gamma \mid \Sigma \vdash \storeInner_{\rho}^{in.}, l_\rho \mapsto v}
        \end{array}
    \end{align*}
    \caption{Typing rules of $\storeInner_{\rho}^{in.}$}
    \label{fig:st-inner}
\end{figure}
}}

Figure~\ref{fig:dynamic-reduction-rules} gives the reduction rules for
\textsc{Spegion}. For brevity, we omit congruence rules which are standard. The
complete set of rules can be found in the appendix.
{\small{
\begin{figure}[h]
    \begin{align*}
        \begin{array}{c}
        \eNewrgn
        \\[1em]
        \eFreergnL
        \\[1em]
        \eSplitL
        \\[1em]
        \eCopyL
        \\[1em]
        \eValL
        \\[1em]
        \eAppL
        \\[1em]
        \eLetL
        \\[1em]
        \eIfFalse
        \;\;\;
        \eIfTrue
        \\[1em]
        \eRefL
        \\[1em]
        \eDerefL
        \\[1em]
        \eAssignL
        \;\;\;
        \eSeqNext
        \\[1em]
        \eBigApp
        \;\;\;
        \eFixL
        \end{array}
    \end{align*}
    \caption{Evaluation rules of \textsc{Spegion}}
    \label{fig:dynamic-reduction-rules} 
\end{figure}
}}

The ($\textsc{e-newrgn}$) rule allocates a new region in the store, with the
specified size $s$, and allocates and steps to a new location that points to a unit
value in the fresh region. The ($\textsc{e-freergnL}$) rule simply frees the
region $\rho$ from the store and steps to a pointer to the unit value in the
global region $l_{\globalRegion}^{1}$. 

Splitting a sub-region off from another region is handled by the
($\textsc{e-splitL}$) rule. The rule makes use of the size semiring's truncated
subtraction operation, given by Definition~\ref{df:monus}, to calculate the new
reduced maximum size of the parent region $\rho$. As in (\textsc{e-newrgn}), the
rule then allocates a new region $\rho'$, then allocates and steps to a location
in $\rho'$ pointing to a unit value to act as a pointer to the region.

The (\textsc{e-copyL}) rule copies a location from one region to another. A
fresh location $l'_{\rho'}$ is allocated in the region being copied to ($\rho'$)
pointing to the location being copied from $\rho$ ($l_{\rho}$).  

The reduction rule for allocation (\textsc{e-valL}) compares the programmer
annotated size against the actual size of the value being allocated (calculated
via~\ref{def:sizeOf}). If the annotated size is greater than or equal to the
actual size, the value is allocated into the region at a fresh location
$l'_{\rho}$ in the store. Type safety guarantees that $\rho$ has enough space to
allocate a value of the annotated size. If the annotated size is less than the
actual size, the program gets stuck. Type safety also ensures that such a
situation should never arise.

The reduction rules for application (\textsc{e-appL}), let binding
(\textsc{e-letL}), if statements (\textsc{e-ifFalse}), (\textsc{e-ifTrue}),
referencing (\textsc{e-refL}), dereferencing (\textsc{e-derefL}), assignment
(\textsc{e-assignL}), and sequential composition (\textsc{e-seqNext}) are
standard.

Type application (\textsc{e-bigApp}) is handled by the rule of the same name,
which is a no-op, since it is a purely static construct. Recursive functions are
handled by the (\textsc{e-fixL}) rule, which substitutes the location $l_\rho$,
where the recursive function $\Lambda \{\alpha, \rho, \epsilon\} . \lambda x .
e_1$ is stored, for the variable $f$ in $e_3$. 



\section{Properties of the Type System}
\label{sec:properties}
Type safety is guaranteed by Theorems~\ref{thm:progress}
and~\ref{thm:preservation}. Theorem~\ref{thm:progress} states the property of
\emph{progress}, i.e, a well-typed closed term is either a value, or can be
further reduced. Theorem~\ref{thm:preservation} states the property of type
\emph{preservation} (or \emph{subject reduction}), which states that if a term
is well-typed and can take a step of evaluation, then the resulting term is also
well-typed. The proofs are based on the syntactic type soundness proofs of
Helsen and Thiemann~\cite{HELSEN20011}, and can be found in
the extended version of this paper.

\begin{restatable}[Progress]{thm}{progress}
    \label{thm:progress}
If
\begin{align*}
    K \mid \Gamma \mid \Sigma \vdash e : (\tau, \rho) \mid \varphi 
\end{align*}
and 
\begin{align*}
    K \mid \Gamma \mid \Sigma \vdash \storeOuter 
\end{align*}
then either 
\begin{enumerate}[i]
    \item $e$ is a value or
    \item $e$ has the form ($x$) (a variable), with $x \in fv(e)$ or
    \item there exists an $e'$ such that $\langle e \mid \storeOuter \rangle
    \longrightarrow \langle e' \mid \storeOuter' \rangle$
\end{enumerate}
\end{restatable}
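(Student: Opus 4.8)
The plan is to proceed by induction on the derivation of the expression typing judgement $K \mid \Gamma \mid \Sigma \vdash e : (\tau, \rho) \mid \varphi$, establishing the three-way disjunction for each rule. Since the statement is really about closed terms (as the surrounding prose makes explicit), the variable disjunct (ii) is reached only at the leaf rule (\textsc{t-var}), where $e = x$ and trivially $x \in fv(e)$; it is vacuous whenever $fv(e) = \emptyset$. The other leaf cases are (\textsc{t-use-val}), where $e = l_\rho$ is a value and so case (i) holds, and (\textsc{t-newrgn}), which always reduces via (\textsc{e-newrgn}), giving case (iii). Inspecting the expression rules shows that no rule assigns an expression type to a bare value other than a location (via \textsc{t-use-val}), so case (i) is in fact exactly the situation $e = l_\rho$.

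Before the inductive cases I would prove two supporting lemmas. First, a \emph{canonical forms} lemma: if a value $v$ is stored at a location with $\Sigma(l_\rho) = \tau$ in a well-typed store, its syntactic shape is fixed by $\tau$ — a $\lambda$- or $\Lambda$-abstraction for a (polymorphic) function type, $\text{true}$ or $\text{false}$ for $\text{Bool}$, a location for $\text{Ref}\ \tau'$, and so on. Second, a \emph{store agreement} lemma: every region and location occurring in a well-typed expression is live, i.e.\ $\rho \in \text{dom}(\sigma)$ and $l_\rho \in \text{dom}(\sigma_{\rho}^{in.})$, with the stored value having type $\Sigma(l_\rho)$. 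Together these ensure that the store lookups appearing as premises of the ``$\mathrm{L}$'' reduction rules (\textsc{e-appL}, \textsc{e-derefL}, \textsc{e-ifTrue}/\textsc{e-ifFalse}, \textsc{e-fixL}, and the rest) always succeed and return a value of the expected form.

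With these in hand the inductive cases are routine. For a compound expression the induction hypothesis applies to the subterm(s) in redex position; these are typed under the same $\Gamma$ and are not under any binder introduced by the current construct, hence closed, so each yields either a location or a reduction step. If such a subterm can step, the matching congruence rule (e.g.\ \textsc{e-app1}, \textsc{e-deref}, \textsc{e-if}) steps the whole term. Once all redex-position subterms are locations, the corresponding ``$\mathrm{L}$'' rule fires: canonical forms give a $\lambda$-abstraction at $l_\rho$ for (\textsc{t-app}) so (\textsc{e-appL}) applies, a boolean for (\textsc{t-if}) so one of (\textsc{e-ifTrue}), (\textsc{e-ifFalse}) applies, and analogously for (\textsc{t-deref}), (\textsc{t-assign}), (\textsc{t-ref}), (\textsc{t-seq}), (\textsc{t-let}), (\textsc{t-freergn}), (\textsc{t-split}), (\textsc{t-copy}), and (\textsc{t-fix}). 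The rule (\textsc{t-tyApp}) always steps since (\textsc{e-tyApp}) is a no-op. Note that the effect $\varphi$ plays essentially no role here: progress is a matter of liveness and canonical forms, whereas the quantitative size discipline is a preservation concern.

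The main obstacle is the allocation case (\textsc{t-val}), $v\ [s]\ \textbf{at}\ e$. After $e$ reduces to $l_\rho$, the only applicable rule is (\textsc{e-valL}), whose side condition demands $\text{sizeOf}(v) \sqsubseteq s$ (Definition~\ref{def:sizeOf}). As written, (\textsc{t-val}) imposes only $K \vdash s : \text{Size}$, so to exclude a stuck term I must surface the invariant that a well-typed allocation carries an honest annotation, $s \sqsupseteq \text{sizeOf}(v)$ — either as an elided premise of (\textsc{t-val}) or as a size-agreement condition threaded through the store-well-typedness judgement. Crucially, (\textsc{e-valL}) does \emph{not} itself check the region's capacity; that bound is an invariant of the well-typed store (\textsc{st-outer-region}) maintained by preservation, and so is not needed for progress. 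The second delicate point, absorbed into the store agreement lemma, is region liveness: dereferencing or applying a location in a \emph{freed} region would get stuck, so progress hinges on the effect discipline having guaranteed (through preservation) that no well-typed, well-stored configuration ever reaches a redex over a deallocated region. The well-typed store judgement as presented constrains only the values and sizes actually present, so it likely needs augmenting with a domain-agreement condition relating $\Sigma$, $\sigma$, and the live regions; establishing this agreement — the interface between the static effect system and the dynamic store — is the real content of the proof, after which the remaining cases are mechanical.
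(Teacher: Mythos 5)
Your proposal follows essentially the same route as the paper's proof: induction over the typing derivation (which, the rules being syntax-directed, coincides with the paper's induction on the structure of $e$), a case split on whether the redex-position subterm is a location or can step, congruence rules in the latter case, canonical forms to justify firing the ``L'' rule in the former, and the observation that \textsc{t-var} is exactly the disjunct (ii) escape hatch. Where you differ is in rigor rather than strategy, and the difference is to your credit. The two obstacles you isolate are real and are \emph{not} discharged by the paper's own proof: the case for $v\ [s]\ \textbf{at}\ e$ in the paper simply asserts that reduction applies, without ever justifying the side condition $\text{sizeOf}(v) \sqsubseteq s$ of \textsc{e-valL} (indeed \textsc{t-val} as stated does not force the annotation to be honest, so a term such as $(\lambda x.e)\ [1]\ \textbf{at}\ l_\rho$ with many free locations in $e$ is well-typed yet stuck); and the canonical forms lemma is stated so that $(\storeOuter(\rho))(l)$ is assumed defined, which presupposes precisely the domain agreement between $\Sigma$, $\storeOuter$, and the live regions that you propose to prove as a separate store-agreement lemma. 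The paper leaves both as implicit invariants (the liveness one is gestured at only in the preservation theorem's side conditions), so your plan of surfacing them as explicit lemmas, and your remark that region liveness for progress ultimately rests on the effect discipline via preservation, is the honest version of the argument the paper sketches. One small correction to your reading: case (i) of the theorem covers all syntactic values $v$, not only locations — the value judgement $K \mid \Gamma \mid \Sigma \vdash v : \tau$ is separate from the expression judgement, but bare values other than locations do still occur as expressions inside the $v\ [s]\ \textbf{at}\ e$ form, and the paper's own case analysis treats $l_\rho$ as the representative value case exactly as you do.
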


\begin{restatable}[Preservation]{thm}{preservation}
    \label{thm:preservation}
If
\begin{align*}
\begin{array}{ll}
   & K \mid \Gamma \mid \Sigma \vdash e : \mu \mid \varphi \\
   & K \mid \Gamma \mid \Sigma \vdash \storeOuter\\
   & \langle e \mid \storeOuter \rangle \longrightarrow \langle e' \mid \storeOuter' \rangle 
\end{array}
\end{align*}
for some $\Sigma'$ such that $\Sigma' \supseteq \Sigma$, we have that
\begin{align*}
\begin{array}{ll}
    & K \mid \Gamma \mid \Sigma' \vdash e' : \mu \mid \varphi' \\
    & K \mid \Gamma \mid \Sigma' \vdash \storeOuter' \\
\end{array}
\end{align*}
where 
\begin{align*}
\begin{array}{ll}
    & K \vdash \varphi' \sqsubseteq \varphi : \text{Effect}
\end{array}
\end{align*}
From our assumption of store typing correctness, we assert that $e$ is a
sub-derivation of some larger derivation $e''$ for which correctness with regard
to region size holds, i.e., given 
\begin{align*}
    K \mid \Gamma \mid \Sigma \vdash \storeOuter
\end{align*}
we have that: 
\begin{align*}
    \forall \rho \in \storeOuter . \forall l_\rho \in \storeInner_{\rho}^{in.} . l_{\rho} \in dom(\Sigma) \wedge \text{sizeOf}(\Sigma(l_{\rho})) \sqsupseteq \text{sizeOf}(\storeInner_{\rho}^{in.}(l_{\rho}))
\end{align*}
Finally, we assume the existence of a global region parametrising the calculus
which cannot be freed, as well as a location inside this global region of type
$\text{Unit}$: 
\begin{align*}
\begin{array}{ll}
& \forall \Sigma . l_{\rho_{\text{glob}}}^1 : \text{Unit} \in \Sigma
\end{array}
\end{align*}
\end{restatable}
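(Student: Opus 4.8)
The plan is to prove Theorem~\ref{thm:preservation} by induction on the derivation of the reduction step $\langle e \mid \storeOuter \rangle \longrightarrow \langle e' \mid \storeOuter' \rangle$, following the structure of Helsen and Thiemann's syntactic soundness argument but extended to track effects and the size-indexed store typing. For each reduction rule I would invert the typing derivation of $e$ to expose the shape of its subterms, then establish the three conclusions in turn: typing of $e'$ at the same type-with-place $\mu$, well-typedness of $\storeOuter'$ under an extended store typing $\Sigma' \supseteq \Sigma$, and the subeffecting $\varphi' \sqsubseteq \varphi$. The case analysis is driven by the last rule used to derive the step.

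First I would establish the routine infrastructure lemmas. These are: (i) a store-typing weakening lemma, that typing is preserved under $\Sigma' \supseteq \Sigma$; (ii) a term substitution lemma for locations, needed in the $\beta$-like cases (\textsc{e-appL}), (\textsc{e-letL}), and (\textsc{e-fixL}), where a location $l_\rho$ is substituted for a variable; (iii) a type-and-region substitution lemma for type application, matching the substitution $[\alpha \mapsto \tau, \rho \mapsto \rho']$ performed in (\textsc{t-tyApp}); and (iv) canonical-forms lemmas characterising the values inhabiting function, reference, and boolean types. I would also prove monotonicity of effect composition with respect to $\sqsubseteq$, so that subeffecting of a subterm lifts through $\times$ and $\sqcup$; this discharges all the congruence cases (omitted from Figure~\ref{fig:dynamic-reduction-rules} but present in the appendix) directly from the induction hypothesis.

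The interesting cases are those that mutate the store. For (\textsc{e-newrgn}) and (\textsc{e-splitL}) I would extend $\Sigma'$ with the fresh location(s) and region(s) and check that the new region's annotation $s$ together with the (\textsc{st-outer-region}) side condition $\text{currentSize}(\storeOuter'(\rho)) \sqsubseteq s$ holds for the freshly allocated singleton store; for (\textsc{e-splitL}) this additionally uses the monus $\dot -$ to re-establish the parent region's bound. For (\textsc{e-valL}), (\textsc{e-refL}), (\textsc{e-copyL}) and (\textsc{e-assignL}) I would show that adding the new binding keeps $\text{currentSize}$ within the region's maximum, appealing to the stated store-size invariant and, crucially, to the validity of the allocation effect: the premise $\exists s''.\, s' + s'' + \text{sumAllocs}(\rho, \varphi) \sqsubseteq s$ of (\textsc{$\times$-FreshAlloc})/(\textsc{$\times$-SplitAlloc}) is exactly what guarantees the dynamic bound is not exceeded once a value of annotated size is stored. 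In each store-mutating case the residual effect $\varphi'$ drops the just-performed $\{\textbf{alloc}\ s\ \rho\}$ (or $\{\textbf{fresh}\ \rho\ s\}$), so $\varphi' \sqsubseteq \varphi$ follows from (\textsc{sb-$\bot$}) and the subeffecting rules for $\times$.

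The main obstacle I anticipate is bridging the static and dynamic size accounting: showing that the effect-composition side conditions, phrased via $\text{sumAllocs}(\rho, \varphi)$ over the static effect, correctly bound the dynamic $\text{currentSize}(\storeOuter(\rho))$ as allocations accumulate along a reduction sequence. This needs a dedicated lemma relating the portion of $\varphi$ already executed to the contents of $\storeOuter$, so that the leftover witness $s''$ furnished by the composition rule survives to the point where (\textsc{e-valL}) fires. The delicacy is that allocation validity is only fully decided once an $\textbf{alloc}$ effect is composed with the region's creating effect, since (\textsc{$\times$-Alloc}) accepts allocations \emph{provisionally} when the region's lifetime is not yet in scope. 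To handle these provisional allocations I expect to strengthen the induction hypothesis (or the store invariant quoted in the theorem statement) to carry the precise relationship between $\Sigma$, $\storeOuter$, and the outstanding effect, so that whenever a provisional allocation is finally validated the store already respects the bound.
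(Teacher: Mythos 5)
Your proposal is sound and follows the same overall syntactic-soundness strategy as the paper, but it differs in two respects worth noting. First, you induct on the derivation of the reduction step and then invert the typing judgement, whereas the paper inducts on the typing derivation of $e$ and then sub-cases on the applicable evaluation rules; because the typing rules are syntax-directed the two organisations are interchangeable here, and your inversion, substitution, store-weakening, store-update, and canonical-forms lemmas correspond exactly to Lemmas~\ref{lemma:typing-inversion}, \ref{lemma:subst}, \ref{lemma:store-weak}, \ref{lemma:store-update}, and~\ref{lemma:canonical}. (The paper has no explicit monotonicity-of-composition lemma; in each congruence case it instead rebuilds the subsumption $\varphi' \sqsubseteq \varphi$ by hand from the \textsc{sb-} rules, which is precisely what your lemma would factor out.) Second, and more substantively, the ``main obstacle'' you identify --- bridging the static $\text{sumAllocs}$ accounting with the dynamic $\text{currentSize}$ bound, including the provisional allocations admitted by ($\times$-\textsc{Alloc}) --- is not discharged by a lemma in the paper at all. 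The paper instead bakes the needed invariant into the theorem statement as an assumption (that $e$ is a sub-derivation of some larger derivation $e''$ for which region-size correctness holds), and in the store-mutating cases such as (\textsc{e-valL}) and (\textsc{e-copyL}) it simply asserts $\exists s' .\ \text{currentSize}(\rho) + s + s' \sqsubseteq s_a$ by appeal to that assumption together with inversion. Your plan to prove this correspondence as a standalone invariant (or a strengthened induction hypothesis) is more demanding but would make the argument self-contained where the paper's is not; conversely, the paper's route reaches the store-typing conclusion with far less machinery at the cost of an unproved global assumption. The remainder of your case analysis (fresh regions and locations extending $\Sigma'$, re-establishing the parent bound under $\dot -$ for splitting, and deriving $\varphi' \sqsubseteq \varphi$ from \textsc{sb-$\bot$} when the residual effect collapses to $\{\bot\}$) matches what the paper actually does.
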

Preservation states that if an expression $e$ is well-typed and has effect
$\varphi$, and a reduction step can be made, then the resulting expression $e'$
is also well-typed with an updated store typing $\Sigma'$ and an effect
$\varphi'$ such that $\varphi' \sqsubseteq \varphi$. 

The relation $\sqsubseteq$ defines effect \emph{subsumption}. Effect subsumption
is used to relate the effects of intermediate terms as a program is evaluated,
since evaluation does not preserve syntactic effects. For example, an allocation
expression has an effect which contains an allocation effect $\{\textbf{alloc}\
s\ \rho\}$. However, the reduction step for allocation expressions produces a
location, and locations are typed with an empty effect $\{\bot\}$. To the
programmer, these intermediate effects are not of interest - the effect of
importance is the overall effect of the program. In preservation, however, we
ensure that the intermediate effects are related to this overall effect via the
subsumption judgement $K \vdash \varphi_1 \sqsubseteq \varphi_2 : \text{Effect}$.
These rules are given in Figure~\ref{fig:prop-effect-subsumption}.
\begin{figure}[H]
    {\small{
        \begin{align*}
                \begin{array}{cc}
                        \sbEquiv
                        \;\;\;
                        \sbBot
                        \;\;\;
                        \sbXAbove
                        \\[1em]
                        \sbXBelowOne
                        \;\;\;
                        \sbXBelowTwo
                        \\[1em]
                        \sbJoinAbove
                        \;\;\;
                        \sbJoinBelowOne
                        \;\;\;
                        \sbJoinBelowTwo
                \end{array}
        \end{align*}
    }}
        \caption{Subsumption rules for \textsc{Spegion} ($\sqsubseteq$)}
        \label{fig:prop-effect-subsumption}
\end{figure}
\begin{figure}[H]
    {{
        \begin{align*}
                \begin{array}{cc}
                        \effectEqRefl 
                        \;\;\; 
                        \effectEqSym
                        \;\;\;
                        \effectEqTrans 
                        \\[1em]
                        \effectEqCong 
                        \\[1em] 
                        \effectEqAssoc
                        \\[1em]
                        \effectEqJoinAssoc
                        \\[1em]
                        \effectEqJoinCong
                \end{array}
        \end{align*}
    }}
        \caption{Effect equality rules for \textsc{Spegion} ($\equiv$)}
        \label{fig:prop-effect-equiv} 
\end{figure}
These rules are standard, however, note that we include kinding of
effects and we include rules for the join operator $\sqcup$. The
(\textsc{sb-$\equiv$}) rule requires a judgement of effect equivalence, which is
given in Figure~\ref{fig:prop-effect-equiv}. Again, these rules are completely
standard.

\section{Applications}
\label{sec:applications}
\subsection{System Code Examples}
\label{subsec:code-examples}

\begingroup
\allowdisplaybreaks

Various systems programming idioms are expressible in the
language. Below, we present examples first in C and then in
\textsc{Spegion}. In each case, the translation is shallow: region-size
annotations required by our language can be inferred from the C program. In
one example, our language exceeds what can be expressed in C. In all
cases, the type system statically recognises memory safety errors.

\medskip
\subparagraph*{Use After Free.}
In the C code below, a struct with two members is allocated, written
to, freed, and then erroneously read from. This is \emph{use after
free}, a failure of memory safety.

\begin{lstlisting}[language=C,basicstyle=\ttfamily\footnotesize]
struct mine {int a; int b;};
  
int free_help(int* p) { free(p); }

int main () {
  struct mine *mp = malloc(sizeof(struct mine));
  mp->a = 0;
  mp->b = 1;
  int *bp = &(mp->b);
  free_help(mp);
  return *bp;    // uaf
}
\end{lstlisting}

\noindent The translation of this code to our language is direct: the size of
the new region is taken from the call to \texttt{malloc} in the C code.

\begin{align*}
  &    \textbf{let}\ r = \textbf{newrgn}\ [2]\ \textbf{in}\ \\
  &    \textbf{let}\ (x, y) = (\textbf{ref}\ 0, \textbf{ref}\ 0)\ [2]\ \textbf{at}\ r\ \textbf{in}\ \\
  &    \textbf{let}\ x := 0\ [1]\ \textbf{at}\ \globalRegion\ \textbf{in}\ \\
  &    \textbf{let}\ y := 1\ [1]\ \textbf{at}\ \globalRegion\ \textbf{in}\ \\
  &    \textbf{let}\ bp = y\ \textbf{in}\ \\
  &    \textbf{freergn}\ r;\ \\
  &    \textbf{let}\ b =\, !bp\ \\
\end{align*}

\noindent This program does not pass the static semantics because on
the last line, the type effects record that region $r$ was previously
freed, and in typing the final line, the \textsc{t-ref} rule requires
the region of $bp$ to be live.

\medskip
\subparagraph*{Recursion.} Below, a function allocates, calls itself
recursively, and then frees its prior allocation. There is no memory
safety error: despite being nested by the recursion, all of the
allocations are matched by frees.

\begin{lstlisting}[language=C,basicstyle=\ttfamily\footnotesize]
int alloc_rec_free (int n) {
  if (n) {
    int *p = malloc(sizeof(int));
    int  v = alloc_rec_free (n - 1);
    free(p);
    return v;
  } else
    return 0;
}

int main () {
  return alloc_rec_free(10);
}
\end{lstlisting}

\noindent The translation uses the \textbf{fix} constructor. Again
the translation is shallow: none of the additional annotations in our
language require any analysis to concoct. Indeed the function's type
scheme is empty because it conveys no change to the live regions
to its context.

\begin{align*}
  &  \textbf{let}\ \text{arf} = \textbf{fix}(\text{arf}, (\Lambda \emptyset . \lambda n .\\
  &  \quad\textbf{if}\ (n == 0\ [1]\ \textbf{at}\ \globalRegion)\ \textbf{then}\ \\
  &  \quad\quad\textbf{let}\ r = \textbf{newrgn}\ [1]\ \textbf{in}\ \\
  &  \quad\quad\textbf{let}\ p = \textbf{ref}\ 0\ [1]\ \textbf{at}\ r\ \textbf{in}\ \\
  &  \quad\quad\textbf{let}\ v = (\text{arf}\ @\ \emptyset)\ (n - 1\ [1]\ \textbf{at}\ \globalRegion)\ \textbf{in}\ \\
  &  \quad\quad\textbf{freergn}\ p;\ \\
  &  \quad\quad v\ \\
  &  \quad\textbf{else}\ \\
  &  \quad\quad 0\ [1]\ \textbf{at}\ \globalRegion\ \\
  &  )\ [1]\ \textbf{at}\ \globalRegion)\ \textbf{in}\ \\ 
  &  (\text{arf}\ @\ \emptyset)\ (10\ [1]\ \textbf{at}\ \globalRegion)\
\end{align*}

\noindent The static semantics accepts this program and it is memory safe.

\medskip
\subparagraph*{Modelling Loops.}  Loops can be translated to recursive
functions. The C code below is analogous to the code above but it uses
a loop. The loop body allocates, it records the value of the index and
then frees the prior allocation.

\begin{lstlisting}[language=C,basicstyle=\ttfamily\footnotesize]
int main () {
  int v;
  for (int n = 10; n > 0; n--) {
    int *p = malloc(sizeof(int));
    v = n;
    free(p);
  }  
  return v;
}
\end{lstlisting}

\noindent The loop is translated into a recursive function whose argument serves
as the loop index.

\begin{align*}
  &  \textbf{let}\ \text{loop} = \textbf{fix}(\text{loop}, (\Lambda \emptyset . \lambda n .\\
  &  \quad\textbf{if}\ (n > 0\ [1]\ \textbf{at}\ \globalRegion)\ \textbf{then}\ \\
  &  \quad\quad\textbf{let}\ r = \textbf{newrgn}\ [1]\ \textbf{in}\ \\
  &  \quad\quad\textbf{let}\ p = \textbf{ref}\ (0\ [1]\ \textbf{at}\ \globalRegion)\ [1]\ \textbf{at}\ r\ \textbf{in}\ \\
  &  \quad\quad\textbf{let}\ v := n\ \textbf{in}\ \\
  &  \quad\quad\textbf{freergn}\ r;\ \\
  &  \quad\quad(\text{loop}\ @\ \emptyset)\ (n - 1\ [1]\ \textbf{at}\ \globalRegion)\ \\
  &  \quad\textbf{else}\ ()\ \\
  &  )\ [1]\ \textbf{at}\ \globalRegion)\ \textbf{in}\ \\
  &  (\text{loop}\ @\ \emptyset)\ (10\ [1]\ \textbf{at}\ \globalRegion);\ \\
  &  v\ 
\end{align*}

\medskip
\subparagraph*{Pointer Arithmetic and Finite Buffers.}  The C code
below produces and consumes packets, serialising them in a
buffer. \textsc{Spegion} tracks pointer arithmetic and will catch
patterns of use that exhaust the size of the buffer.

\begin{lstlisting}[language=C,basicstyle=\ttfamily\footnotesize]
struct packet {int len; int payload[];};

struct packet * produce (struct packet *p, int data) {
  p->len = sizeof(int);  p->payload[0] = data;
  return p->payload + p->len;
}

struct packet * consume (struct packet *p, int* sum) {
  *sum += p->payload[0];
  return p->payload + p->len;
}

int main () {
  int sum = 0;
  void *buf = malloc(1000);
  struct packet *pp = buf, *cp = buf;
  pp = produce(pp, 1);  pp = produce(pp, 2);
  cp = consume(cp, &sum);  cp = consume(cp, &sum);
  return sum;
}
\end{lstlisting}

\noindent The \texttt{main} function translates into the following
\textsc{Spegion}. Statically, the loads and stores of calls to $\text{produce}$
and $\text{consume}$ are checked against the bounds of the region of
$\text{buf}$. The return pointers type check, even if they escape the region of
$\text{buf}$, matching idiomatic use of pointer arithmetic in C.

\begin{align*}
  & \textbf{let}\ \text{r1} = \textbf{newrgn}\ \textbf{in} \\ 
  & \textbf{let}\ \text{sum} = \textbf{ref}\ (0\ \textbf{at}\ \text{r1})\ \textbf{in} \\ 
  & \textbf{let}\ \text{buf} = \textbf{newrgn}\ [1000]\ \textbf{in} \\ 
  & \textbf{let}\ \text{init} = \textbf{ref}\ (\text{initPacket}\ [1]\ \textbf{at}\ \text{buf})\ \textbf{in} \\ 
  & \textbf{let}\ (\text{pp},\ \text{cp}) = !\text{init}\ \textbf{in} \\
  & \textbf{let}\ \text{pp} = \text{produce}(\text{pp},\ 1)\ \textbf{in} \\
  & \textbf{let}\ \text{pp} = \text{produce}(\text{pp},\ 2)\ \textbf{in} \\  
  & \textbf{let}\ \text{cp} = \text{consume}(\text{cp},\ \text{sum})\ \textbf{in} \\
  & \textbf{let}\ \text{cp} = \text{consume}(\text{cp},\ \text{sum})\ \textbf{in} \\  
  & !\text{sum}
\end{align*}

\medskip
\subparagraph*{Non-Lexically-Scoped Lifetimes.}  The C code of
Figure~\ref{listing:linked_list} allocates a linked list of three
tasks in list order, takes a reference to the last task, and then
frees the first two tasks. \textsc{Spegion} can represent and statically type
programs like this. The translation is direct:

\begin{align*}
  &  \textbf{let}\ \text{re} = \textbf{newrgn}\ [1]\ \textbf{in}\ \\
  &  \textbf{let}\ \text{end} = 0\ [1]\ \textbf{at}\ \text{re}\ \textbf{in}\ \\
  &  \\    
  &  \textbf{let}\ r_0 = \textbf{newrgn}\ [2]\ \textbf{in}\ \\
  &  \textbf{let}\ t_0 = (0\ [1]\ \textbf{at}\ \globalRegion,\ \text{end})\ [2]\ \textbf{at}\ r_0\ \textbf{in}\ \\
  &  \textbf{let}\ (n_0,\text{next}_0) = t_0\ \textbf{in}\ \\
  &  \\
  &  \textbf{let}\ r_1 = \textbf{newrgn}\ [2]\ \textbf{in}\ \\
  &  \textbf{let}\ t_1 = (0\ [1]\ \textbf{at}\ \globalRegion,\ \text{end})\ [2]\ \textbf{at}\ r_1\ \textbf{in}\ \\
  &  \textbf{let}\ (n_1,\text{next}_1) = t_1\ \textbf{in}\ \\
  &  \textbf{let}\ \text{next}_0 = t_1\ \textbf{in}\ \\
  &  \\
  &  \textbf{let}\ r_2 = \textbf{newrgn}\ [2]\ \textbf{in}\ \\
  &  \textbf{let}\ t_2 = (0\ [1]\ \textbf{at}\ \globalRegion,\ \text{end})\ [2]\ \textbf{at}\ r_2\ \textbf{in}\ \\
  &  \textbf{let}\ (n_2,\text{next}_2) = t_2\ \textbf{in}\ \\
  &  \textbf{let}\ \text{next}_1 = t_2\ \textbf{in}\ \\
  &  \\
  &  \textbf{let}\ n_2 = 1\ [1]\ \textbf{at}\ \globalRegion\ \textbf{in}\ \\
  &  \\
  &  \textbf{freergn}\ t_0;\ \\
  &  \textbf{freergn}\ t_1;\ \\
  &  \textbf{let}\ (n, \_) =\, !t_2\ \textbf{in}\ \\
  &  n
\end{align*}

\medskip
\subparagraph*{Splitting.}  Our language represents idioms that C
cannot represent, in particular \emph{splitting}. In this idiom, the
programmer creates a region. Subsequently they split the region into a
part that is freed, and a part that is kept and used.

C cannot represent this idiom, but the C code below demonstrates the
idea. The \texttt{alloc2} function mirrors the initial allocation, but
to match the limitations of C it is implemented with an indirection
and two allocations matching the subsequent split. The first part of
the allocation made by \texttt{alloc2} is freed, and the second part
is used.

\begin{lstlisting}[language=C,basicstyle=\ttfamily\footnotesize]
struct indirect {int* p1; int* p2;};

struct indirect * alloc2 () {
  struct indirect *p = malloc(sizeof(struct indirect));
  p->p1 = malloc(sizeof(int));
  p->p2 = malloc(sizeof(int));
  return p;
}

int main () {
  struct indirect *pr = alloc2();
  *(pr->p2) = 1;
  free(pr->p1);
  return *(pr->p2);
}
\end{lstlisting}

\noindent Our language can represent this programming idiom. The
region $r$ is created in one call to \textbf{newrgn}, and then split
into two parts. These parts can be used as a pair, but they can also
be independently freed.

\endgroup

\begin{align*}
  &  \textbf{let}\ r = \textbf{newrgn}\ [2]\ \textbf{in}\ \\
  &  \textbf{let}\ r_1 = \textbf{split}\ [1]\ \textbf{in}\ \\    
  &  \textbf{let}\ v_1 = (0\ [1]\ \textbf{at}\ \globalRegion)\ [1]\ \textbf{at}\ r_1\ \textbf{in}\ \\
  &  \textbf{let}\ v_2 = (0\ [1]\ \textbf{at}\ \globalRegion)\ [1]\ \textbf{at}\ r\ \textbf{in}\ \\
  &  \textbf{let}\ (p_1,p_2) = (v_1,v_2)\ [2]\ \textbf{at}\ \globalRegion\ \textbf{in}\ \\
  &  \textbf{let}\ p_2 := 1\ [1]\ \textbf{at}\ \globalRegion \textbf{in}\ \\
  &  \textbf{freergn}\ p_1;\ \\
  &  !p_2
\end{align*}

\subsection{Refinement Types}
\label{subsec:refinement-types} 
Refinement types give the programmer the ability to use \emph{predicates} in the
type as a means of restricting the values described by the
type~\cite{10.1145/113446.113468, 10.1145/944746.944725}. For example, the type:
\begin{align*}
     n : \text{Int} \rightarrow \text{Int}\ [ n' \mid n' \leq n  ]
\end{align*}
describes a function from integers to integers which returns a value  
that must be less than or equal to the input. This is a simple example of a refinement type,
but they can be used to encode many forms of specification. 

A natural extension of our work would be to incorporate refinement types into
our type system, allowing the programmer to reason about region sizes in the
types of their programs. In this section we sketch out a simple refinement
system based on the approach of Jhala and Vazou~\cite{10.1561/2500000032} that
could be used in tandem with \textsc{Spegion}'s region sizes and sized
allocations. We define a new sort of \emph{predicates} which can be used in the
type system to restrict the values described by a type. This section is mostly a
sketch of the ideas, and we leave the full development of the system to future
work. We begin with the syntax of predicates: 
\begin{align*}
    \begin{array}{rl}
        p :=& x \mid n \mid \text{true} \mid \text{false} \mid p + p \mid p \cdot p \mid \neg p \mid \textbf{if}\ p\ \textbf{then}\ p\ \textbf{else}\ p \mid f (\overline{p})
    \end{array}
    \tag{predicates}
\end{align*}
Predicates are drawn from quantifier-free fragment of linear arithmetic and
uninterpreted functions~\cite{Barrett2010TheSS}. That is, a predicate $p$ may be
a variable $x$, a natural number constant $n$, boolean constants
$\text{true}$ and $\text{false}$, the sum or product of two predicates, the
negation of a predicate, a conditional, or the application of an uninterpreted
function $f$ to a list of predicates. For this section, we extend
$\textsc{Spegion}$ with type annotations and a new list datatype:
\begin{align*}
    \begin{array}{rl}
        \text{List}\ (\tau, \rho) = \text{Nil} : \text{List}\ (\tau, \rho) \mid \text{Cons} : (\tau, \rho) \xrightarrow{\{\bot\}} \text{List}\ (\tau, \rho) \xrightarrow{\{\bot\}} \text{List}\ (\tau, \rho)
    \end{array}
\end{align*}
That is, a list is a value which is either empty or a pair of a value of type
$(\tau, \rho)$ and a reference to another list. For simplicity the payload of
the list must be stored in the same region as the list itself. One could
envision a more complex type where the payload and indeed the tail of the list
could be stored in separate regions. With the above type we can construct a
function which takes an integer $n$ and creates a list in a region of arbitrary
size that has length $n$:
\begin{align*}
    \begin{array}{ll}
        \textbf{let}\ r = \textbf{newrgn}\ \\
        \ \ \ \ \textbf{in} \\
        \ \ \ \ \textbf{let}\ \text{buildList} = \textbf{fix} (\text{buildList}, \Lambda \{ \alpha, \rho, \epsilon \} \lambda n . \\
        \ \ \ \ \ \ \ \ \textbf{if}\ n == 0 \\
        \ \ \ \ \ \ \ \ \ \ \ \ \  \textbf{then}\\
        \ \ \ \ \ \ \ \ \ \ \ \ \ \ \ \ \ \text{Nil}\ \textbf{at}\ r\\
        \ \ \ \ \ \ \ \ \ \ \ \ \  \textbf{else}\\
        \ \ \ \ \ \ \ \ \ \ \ \ \ \ \ \ \ \text{Cons}\ (n, \text{buildList}\ @(\text{Int}, \text{regionOf}(r))\ n-1))\ \textbf{at}\ r \\
        \textbf{in}\ ... 
    \end{array}
\end{align*}
With refinement types we could constrain this function to only return lists 
where the size of the list is less than or equal to the size of the region:
\begin{align*}
    \begin{array}{ll}
        \textbf{let}\ r = \textbf{newrgn}\ [5] \\
        \textbf{in} \\
        \ \ \ \ \textbf{let}\ \text{buildList} : ((n : \text{Int}, \text{regionOf}(r)) \xrightarrow{\varphi} (\text{List}\ (\text{Int}, \text{regionOf}(r)))\ [1 + (2 \cdot n) < 5] ) =\\
        \ \ \ \ \ \ \ \ \textbf{fix} (\text{buildList}, \Lambda \{ \alpha, \rho, \epsilon \} \lambda n . \\
        \ \ \ \ \ \ \ \ \ \ \ \ \textbf{if}\ n == 0 \\
        \ \ \ \ \ \ \ \ \ \ \ \ \ \ \ \ \  \textbf{then}\\
        \ \ \ \ \ \ \ \ \ \ \ \ \ \ \ \ \ \ \ \ \ \text{Nil}\ [1]\ \textbf{at}\ r\\
        \ \ \ \ \ \ \ \ \ \ \ \ \  \ \ \ \ \textbf{else}\\
        \ \ \ \ \ \ \ \ \ \ \ \ \ \ \ \ \ \ \ \ \ \text{Cons}\ (n\ \textbf{at}\ [1], \text{buildList}\ @(\text{Int}, \text{regionOf}(r))\ n-1))\ [1]\ \textbf{at}\ r \\
        \ \ \ \ \textbf{in}\ ... 
    \end{array}
\end{align*}
where $\varphi = { \{ \textbf{alloc}\ 1\ \text{regionOf}(r)\} \sqcup (\{\epsilon
\} \times \{ \textbf{alloc}\ 1\ \text{regionOf}(r) \} ) }$. Note we remove
superfluous $\{\bot\}$ effects from the type annotation for brevity. Thus if we
try to apply $\text{buildList}$ to the integer $10$, this application will fail
as the size of the list produced is greater than the size of the region
associated with $r$. This is just a taste of what the interaction between refinement types and region
sizes can offer. We believe that adding refinements to \textsc{Spegion} is a
natural extension of the language, with few changes required to the core
calculus.

%
%

\section{Related Work}
\label{sec:related}
Key areas of related work for this research relate to the Rust programming language, and effect and region types in programming language design.

\subsection{Rust}

Rust is a popular systems programming language that aims to statically enforce memory safety, and it has also been the subject of significant academic research.
Notable formal work includes Rust Belt~\cite{10.1145/3158154} and Oxide~\cite{weiss2021oxideessencerust}.
The key concepts involved in Rust's approach to enforcing memory safety are \emph{ownership} and \emph{lifetimes}, with the latter corresponding to regions.
As a key component of enforcing safety guarantees, Rust often limits or prohibits the introduction of \emph{aliases} of pointers to mutable data, giving it a similar flavour to the substructural type system approaches detailed later in this section.
%
%
This is, in part, because Rust is also concerned with preventing data races in concurrent code.

The type system might be extended to the concurrent context by reconciling the sequence of effects with the orderings imposed by a declarative concurrency semantics, like those of sequential consistency, ARMv8, or C++, but this is left to future work.
Even in sequential code, however, there are difficulties involved in aliasing pointers and (for example) building cyclic data structures.

The following program is rejected by the Rust compiler:
\begin{lstlisting}[language=Rust,basicstyle=\ttfamily\footnotesize]
fn main() {
    let mut a: i32 = 10;
    let b: &i32 = &a;

    {
        let c: &mut i32 = &mut a;
        *c = 20;
    }
    // use a,b ...
}
\end{lstlisting}
The compiler will point out \lstinline{a} being borrowed multiple times (immutably by \lstinline{b} and mutably by \lstinline{c}).
This program can be represented in our calculus with no issue, as our approach allows arbitray aliasing of mutable references.
In general, Rust's expressive type system supports many patterns of memory usage, and we believe there is potential in combining ideas from Rust and from our effect-based approach.

\subsection{Effect Types and Region Calculi}\label{subsec:regions}

There is extensive prior work on using effects and regions in programming language design.
Starting in the 1980s, Lucassen and Gifford, among others, explored the development of polymorphic effect type systems~\cite{10.1145/73560.73564, 10.1145/319838.319848}.
They made use of a \emph{kind} system consisting of types, which describe what sort of value an expression may evaluate to; effects, which describe what side effects may be performed when evaluating an expression; and regions, which describe groups of values in memory that are related.
Building upon the idea of associating types with regions, Tofte and Talpin introduced the region calculus and region-based memory management~\cite{10.1145/174675.177855,TOFTE1997109}, which provided a structured way to control memory in higher order programs by associating types with parameters to statically track their lifetimes.

Region-based memory management formed the basis of ML-Kit, a compiler for Standard ML that used \emph{region inference} to infer region annotations of source programs and statically determine memory usage~\cite{10.1145/291891.291894, 10.1145/237721.237771}.
Rather than requiring programmers to manually annotate their programs with region/lifetime parameters, the compiler determines them automatically.
Subsequent work investigated improving region inference, as it was prone to potentially producing programs that leak memory or unnecessarily extend the lifetime of objects~\cite{10.1145/223428.207137, 10.1023/B:LISP.0000029446.78563.a4}.
ML-Kit was augmented with a runtime garbage collector in order to collect memory left around inside regions which had become garbage but had not yet been de-allocated because it lived inside a still-alive region~\cite{10.1145/543552.512547}.

Existing papers have proven type safety theorems for variations of the region calculus.
Helsen and Thiemann give an elegant syntactic proof of type safety for a region calculus with polymorphic types and recursive let-bindings, which was a primary source for the formalism in this paper~\cite{HELSEN20011}.
They represent region de-allocation by introducing a special ``dead'' region identifier, and when a region goes out of scope they substitute this special identifier for the now-dead reagion.
This also means they do not need to have a phase distinction between handles and region identifiers.
However, they do not have a store in their operational semantics, which leads to a straightforward proof but unfortunately does not allow for mutable data.
Subsequent work extended this approach with an explicit store, but had to remove polymorphic types~\cite{10.1006/inco.2001.3112}.
Prior work also investigated proof of soundness for a fragment of the region calculus via a translation into a target language~\cite{10.5555/788021.788961}.

One of the major issues with using region-based memory management as proposed in the original region calculus is that the lexical scoping of regions means they are required to follow a stack or last-in-first-out (LIFO) ordering.
This was observed by Aiken et al., demonstrating that extending the traditional region calculus by removing the LIFO restriction dramatically reduces memory usage (sometimes asymptotically) in functional programs~\cite{10.1145/223428.207137}.
Like our approach, they de-couple allocation and de-allocation, though their approach is for functional programs without mutable memory.
They are also focused on region annotation inference in order to statically determine memory lifetimes of ML programs, and so they do not consider bounding or splitting regions and cannot track fine-grained effect details like an allocation consuming a specific amount of memory in a region.

Relatedly, Boudol presented a monomorphic variant of the region calculus that permitted early de-allocation~\cite{10.1007/978-3-540-78739-6_10}.
A type system tracks both negative de-allocation effects and positive effects such as reads and writes, ensuring that the positive effects on a region cannot occur after the negative effect.

Further work explored adding explicit region allocation and de-allocation to a C-like language, where the safety of region de-allocation relied on keeping track of a dynamic \emph{reference count} for objects in the heap~\cite{10.1145/277652.277748, 10.1145/381694.378815}.

A prominent line of research that addressed the problem of LIFO lifetimes of regions is the use of \emph{substructural types} in region type systems~\cite{10.1007/11693024_2}.
The Cyclone programming language, and related work on the calcluls of capabilities, allowed for type-safe manual memory management using a combination of type system features including substuctural types, allowing for both LIFO regions as well as first-class dynamic regions and unique pointers~\cite{10.1145/292540.292564, 10.1007/11693024_2}.
This style of type system requires that programmers thread resources linearly through programs and avoid creating aliases.
In contrast, a primary goal of our system is to avoid imposing this requirement of linearity in programs, and allowing for the aliasing of pointers that occurs in idiomatic imperative code.

Region types have also been used in programming language design for reasons beyond memory management.
For example, in Gibbon, region types are used in combination with an effect type system to type programs that create and manipulate serialised data, where each region contains a serialised data structure~\cite{10.1145/3314221.3314631}.







\section{Conclusion}
\label{sec:conclusion}
We have proposed a type system for region-based memory management that allows
implicit regions with non-lexical scoping and explicit size constraints. We have
proven type safety for this system, and shown several directions in
which this system can be extended. Based on this type system, we are currently
developing an implementation of \textsc{Spegion} in the form of both an
interpreter for \textsc{Spegion} as well as a static analysis tool targeting the
C language.

\bibliography{references}

\begin{thebibliography}{10}

\bibitem{10.1145/223428.207137}
Alexander Aiken, Manuel F\"{a}hndrich, and Raph Levien.
\newblock Better static memory management: improving region-based analysis of
  higher-order languages.
\newblock {\em SIGPLAN Not.}, 30(6):174–185, June 1995.
\newblock \href {https://doi.org/10.1145/223428.207137}
  {\path{doi:10.1145/223428.207137}}.

\bibitem{10.5555/788021.788961}
Anindya Banerjee, Nevin Heintze, and Jon~G. Riecke.
\newblock Region analysis and the polymorphic lambda calculus.
\newblock In {\em Proceedings of the 14th Annual IEEE Symposium on Logic in
  Computer Science}, LICS '99, page~88, USA, 1999. IEEE Computer Society.

\bibitem{Barrett2010TheSS}
Clark~W. Barrett, Aaron Stump, and Cesare Tinelli.
\newblock The smt-lib standard version 2.0.
\newblock 2010.
\newblock URL: \url{https://api.semanticscholar.org/CorpusID:7943149}.

\bibitem{10.1145/237721.237771}
Lars Birkedal, Mads Tofte, and Magnus Vejlstrup.
\newblock From region inference to von neumann machines via region
  representation inference.
\newblock In {\em Proceedings of the 23rd ACM SIGPLAN-SIGACT Symposium on
  Principles of Programming Languages}, POPL '96, page 171–183, New York, NY,
  USA, 1996. Association for Computing Machinery.
\newblock \href {https://doi.org/10.1145/237721.237771}
  {\path{doi:10.1145/237721.237771}}.

\bibitem{10.1007/978-3-540-78739-6_10}
G{\'e}rard Boudol.
\newblock Typing safe deallocation.
\newblock In Sophia Drossopoulou, editor, {\em Programming Languages and
  Systems}, pages 116--130, Berlin, Heidelberg, 2008. Springer Berlin
  Heidelberg.

\bibitem{10.1145/780822.781168}
Chandrasekhar Boyapati, Alexandru Salcianu, William Beebee, and Martin Rinard.
\newblock Ownership types for safe region-based memory management in real-time
  java.
\newblock {\em SIGPLAN Not.}, 38(5):324–337, May 2003.
\newblock \href {https://doi.org/10.1145/780822.781168}
  {\path{doi:10.1145/780822.781168}}.

\bibitem{10.1006/inco.2001.3112}
Christiano Calcagno, Simon Helsen, and Peter Thiemann.
\newblock Syntactic type soundness results for the region calculus.
\newblock {\em Inf. Comput.}, 173(2):199–221, March 2002.
\newblock \href {https://doi.org/10.1006/inco.2001.3112}
  {\path{doi:10.1006/inco.2001.3112}}.

\bibitem{10.1145/292540.292564}
Karl Crary, David Walker, and Greg Morrisett.
\newblock Typed memory management in a calculus of capabilities.
\newblock In {\em Proceedings of the 26th ACM SIGPLAN-SIGACT Symposium on
  Principles of Programming Languages}, POPL '99, page 262–275, New York, NY,
  USA, 1999. Association for Computing Machinery.
\newblock \href {https://doi.org/10.1145/292540.292564}
  {\path{doi:10.1145/292540.292564}}.

\bibitem{10.1145/773039.512433}
Morgan Deters and Ron~K. Cytron.
\newblock Automated discovery of scoped memory regions for real-time java.
\newblock {\em SIGPLAN Not.}, 38(2 supplement):25–35, June 2002.
\newblock \href {https://doi.org/10.1145/773039.512433}
  {\path{doi:10.1145/773039.512433}}.

\bibitem{10.1007/11693024_2}
Matthew Fluet, Greg Morrisett, and Amal Ahmed.
\newblock Linear regions are all you need.
\newblock In {\em Proceedings of the 15th European Conference on Programming
  Languages and Systems}, ESOP'06, page 7–21, Berlin, Heidelberg, 2006.
  Springer-Verlag.
\newblock \href {https://doi.org/10.1007/11693024_2}
  {\path{doi:10.1007/11693024_2}}.

\bibitem{10.1145/113446.113468}
Tim Freeman and Frank Pfenning.
\newblock Refinement types for ml.
\newblock {\em SIGPLAN Not.}, 26(6):268–277, May 1991.
\newblock \href {https://doi.org/10.1145/113446.113468}
  {\path{doi:10.1145/113446.113468}}.

\bibitem{10.1145/277652.277748}
David Gay and Alex Aiken.
\newblock Memory management with explicit regions.
\newblock {\em SIGPLAN Not.}, 33(5):313–323, May 1998.
\newblock \href {https://doi.org/10.1145/277652.277748}
  {\path{doi:10.1145/277652.277748}}.

\bibitem{10.1145/381694.378815}
David Gay and Alex Aiken.
\newblock Language support for regions.
\newblock {\em SIGPLAN Not.}, 36(5):70–80, May 2001.
\newblock \href {https://doi.org/10.1145/381694.378815}
  {\path{doi:10.1145/381694.378815}}.

\bibitem{10.1145/319838.319848}
David~K. Gifford and John~M. Lucassen.
\newblock Integrating functional and imperative programming.
\newblock In {\em Proceedings of the 1986 ACM Conference on LISP and Functional
  Programming}, LFP '86, page 28–38, New York, NY, USA, 1986. Association for
  Computing Machinery.
\newblock \href {https://doi.org/10.1145/319838.319848}
  {\path{doi:10.1145/319838.319848}}.

\bibitem{10.1145/3450272}
Colin~S. Gordon.
\newblock Polymorphic iterable sequential effect systems.
\newblock {\em ACM Trans. Program. Lang. Syst.}, 43(1), April 2021.
\newblock \href {https://doi.org/10.1145/3450272} {\path{doi:10.1145/3450272}}.

\bibitem{10.1145/543552.512547}
Niels Hallenberg, Martin Elsman, and Mads Tofte.
\newblock Combining region inference and garbage collection.
\newblock {\em SIGPLAN Not.}, 37(5):141–152, May 2002.
\newblock \href {https://doi.org/10.1145/543552.512547}
  {\path{doi:10.1145/543552.512547}}.

\bibitem{HELSEN20011}
Simon Helsen and Peter Thiemann.
\newblock Syntactic type soundness for the region calculus.
\newblock {\em Electronic Notes in Theoretical Computer Science}, 41(3):1--19,
  2001.
\newblock HOOTS 2000, 4th International Workshop on Higher Order Operational
  Techniques in Semantics (Satellite to PLI 2000).
\newblock URL:
  \url{https://www.sciencedirect.com/science/article/pii/S1571066104808703},
  \href {https://doi.org/10.1016/S1571-0661(04)80870-3}
  {\path{doi:10.1016/S1571-0661(04)80870-3}}.

\bibitem{10.1561/2500000032}
Ranjit Jhala and Niki Vazou.
\newblock Refinement types: A tutorial.
\newblock {\em Found. Trends Program. Lang.}, 6(3–4):159–317, October 2021.
\newblock \href {https://doi.org/10.1561/2500000032}
  {\path{doi:10.1561/2500000032}}.

\bibitem{10.1145/3158154}
Ralf Jung, Jacques-Henri Jourdan, Robbert Krebbers, and Derek Dreyer.
\newblock Rustbelt: securing the foundations of the rust programming language.
\newblock {\em Proc. ACM Program. Lang.}, 2(POPL), December 2017.
\newblock \href {https://doi.org/10.1145/3158154} {\path{doi:10.1145/3158154}}.

\bibitem{10.1145/73560.73564}
J.~M. Lucassen and D.~K. Gifford.
\newblock Polymorphic effect systems.
\newblock In {\em Proceedings of the 15th ACM SIGPLAN-SIGACT Symposium on
  Principles of Programming Languages}, POPL '88, page 47–57, New York, NY,
  USA, 1988. Association for Computing Machinery.
\newblock \href {https://doi.org/10.1145/73560.73564}
  {\path{doi:10.1145/73560.73564}}.

\bibitem{10.1145/362426.362434}
Henning Makholm.
\newblock A region-based memory manager for prolog.
\newblock {\em SIGPLAN Not.}, 36(1):25–34, October 2000.
\newblock \href {https://doi.org/10.1145/362426.362434}
  {\path{doi:10.1145/362426.362434}}.

\bibitem{10.1145/944746.944725}
Yitzhak Mandelbaum, David Walker, and Robert Harper.
\newblock An effective theory of type refinements.
\newblock {\em SIGPLAN Not.}, 38(9):213–225, August 2003.
\newblock \href {https://doi.org/10.1145/944746.944725}
  {\path{doi:10.1145/944746.944725}}.

\bibitem{10.1007/978-3-319-27810-0_1}
Alan Mycroft, Dominic Orchard, and Tomas Petricek.
\newblock Effect systems revisited--control-flow algebra and semantics.
\newblock In {\em Essays Dedicated to Hanne Riis Nielson and Flemming Nielson
  on the Occasion of Their 60th Birthdays on Semantics, Logics, and Calculi -
  Volume 9560}, page 1–32, Berlin, Heidelberg, 2015. Springer-Verlag.
\newblock \href {https://doi.org/10.1007/978-3-319-27810-0_1}
  {\path{doi:10.1007/978-3-319-27810-0_1}}.

\bibitem{10.1145/2914770.2837634}
Dominic Orchard and Nobuko Yoshida.
\newblock Effects as sessions, sessions as effects.
\newblock {\em SIGPLAN Not.}, 51(1):568–581, January 2016.
\newblock \href {https://doi.org/10.1145/2914770.2837634}
  {\path{doi:10.1145/2914770.2837634}}.

\bibitem{10.1145/291891.291894}
Mads Tofte and Lars Birkedal.
\newblock A region inference algorithm.
\newblock {\em ACM Trans. Program. Lang. Syst.}, 20(4):724–767, July 1998.
\newblock \href {https://doi.org/10.1145/291891.291894}
  {\path{doi:10.1145/291891.291894}}.

\bibitem{10.1023/B:LISP.0000029446.78563.a4}
Mads Tofte, Lars Birkedal, Martin Elsman, and Niels Hallenberg.
\newblock A retrospective on region-based memory management.
\newblock {\em Higher Order Symbol. Comput.}, 17(3):245–265, September 2004.
\newblock \href {https://doi.org/10.1023/B:LISP.0000029446.78563.a4}
  {\path{doi:10.1023/B:LISP.0000029446.78563.a4}}.

\bibitem{tofte2001}
Mads Tofte, Lars Birkedal, Martin Elsman, Niels Hallenberg, and Peter Sestoft.
\newblock Programming with regions in the ml kit (for version 4).
\newblock 10 2001.

\bibitem{10.1145/174675.177855}
Mads Tofte and Jean-Pierre Talpin.
\newblock Implementation of the typed call-by-value $\lambda$-calculus using a
  stack of regions.
\newblock In {\em Proceedings of the 21st ACM SIGPLAN-SIGACT Symposium on
  Principles of Programming Languages}, POPL '94, page 188–201, New York, NY,
  USA, 1994. Association for Computing Machinery.
\newblock \href {https://doi.org/10.1145/174675.177855}
  {\path{doi:10.1145/174675.177855}}.

\bibitem{TOFTE1997109}
Mads Tofte and Jean-Pierre Talpin.
\newblock Region-based memory management.
\newblock {\em Information and Computation}, 132(2):109--176, 1997.
\newblock URL:
  \url{https://www.sciencedirect.com/science/article/pii/S0890540196926139},
  \href {https://doi.org/10.1006/inco.1996.2613}
  {\path{doi:10.1006/inco.1996.2613}}.

\bibitem{10.1145/3314221.3314631}
Michael Vollmer, Chaitanya Koparkar, Mike Rainey, Laith Sakka, Milind Kulkarni,
  and Ryan~R. Newton.
\newblock Local: a language for programs operating on serialized data.
\newblock In {\em Proceedings of the 40th ACM SIGPLAN Conference on Programming
  Language Design and Implementation}, PLDI 2019, page 48–62, New York, NY,
  USA, 2019. Association for Computing Machinery.
\newblock \href {https://doi.org/10.1145/3314221.3314631}
  {\path{doi:10.1145/3314221.3314631}}.

\bibitem{weiss2021oxideessencerust}
Aaron Weiss, Olek Gierczak, Daniel Patterson, and Amal Ahmed.
\newblock Oxide: The essence of rust, 2021.
\newblock URL: \url{https://arxiv.org/abs/1903.00982}, \href
  {https://arxiv.org/abs/1903.00982} {\path{arXiv:1903.00982}}.

\end{thebibliography}

\newpage
\appendix
\label{appendix:appendix}
\section{Grammar}
\label{appendix:grammar}

\begin{figure}[h]
\begin{align*}
        \begin{array}{rl}
        \kappa ::= & \text{Type} \mid \kappa \rightarrow \kappa \mid \text{Region} \mid \text{Effect} \mid \text{Size} \\
        \tau ::=& \alpha \mid 
                  \text{Int} \mid \text{Unit} \mid \text{Bool} \mid \text{Ref}\ \tau \mid 
                  \mu \xrightarrow{\varphi} \mu  \mid \forall \{ \alpha, \rho, \epsilon \}. \mu \xrightarrow{\varphi} \mu \\ 
        \mu  ::=& (\tau, \rho)\\
        op ::=& + \mid \cdot \mid \dot - \mid = \mid \neq \mid \sqsubseteq \mid \sqsupseteq \\
        s ::=& x \mid n \mid \omega \mid s\ op\ s\\
        \\
        v ::= & n \mid \text{true} \mid \text{false} \mid \Lambda \{ \alpha, \rho, \epsilon \} . \lambda x . e \mid \lambda x . e \mid () \mid l_\rho\\
        e ::= & x,f  \\
             \mid & v\ [s]\ \textbf{at}\ e\\
             \mid & l_\rho\\
             \mid & e\ e \\ 
             \mid & \textbf{newrgn}\ [s]\\ 
             \mid & \textbf{freergn}\ e \\ 
             \mid & \textbf{split}\ [s]\ e\\
             \mid & \textbf{copy}\ e\ \textbf{into}\ e \\
             \mid & \textbf{ref}\ e \\ 
             \mid & \textbf{if}\ e\ \textbf{then}\ e\ \textbf{else}\ e \\ 
             \mid & !e \\
             \mid & e := e \\ 
             \mid & e ;\ e \\
             \mid & \textbf{let}\ x : \mu = e\ \textbf{in}\ e\\
             \mid & e\ @\ \mu\\
             \mid & \textbf{let}\ f = \textbf{fix}(f, (\Lambda \{ \alpha , \rho , \epsilon \} . \lambda x . e_1)\ [s]\ \textbf{at}\ e_2)\ \textbf{in}\ e_3\\
        \\
        \varphi ::= & \varphi \times \varphi \mid \{\bot\} \mid \{\textbf{fresh}\ \rho\ s \} \mid \{ \textbf{free}\ \rho\} \mid \{\textbf{split}\ \rho\ s\ \rho'\} \mid \{  \textbf{alloc}\ s\ \rho\}  
          \mid \varphi \sqcup \varphi \\ \mid &  \{\epsilon\} \mid \{\textbf{rec}\ \epsilon\ \varphi\} 
        \\
        \Gamma ::= & \emptyset \mid \Gamma, x : \mu \\ 
        \Sigma ::= & \emptyset \mid \Sigma, l_\rho : \tau  \\
        K ::= & \emptyset \mid K, \alpha : \text{Type} \mid K, \rho : \text{Region} \mid K, \epsilon : \text{Effect} 
        \end{array}
\end{align*}
\label{fig:grammar}
\caption{Grammar of \textsc{Spegion}}
\end{figure}

\newpage
\section{Typing}
\label{appendix:typing}

\subsection{Kinding Rules}

\begin{figure}[h]
\begin{align*}
        \begin{array}{cc}
                \kVar 
                \;\;\;
                \kForall
                \;\;\;
                \kRegion
                \\[1em]
                \kArrow
                \;\;\;
                \kApp
                \\[1em]
                \kAppTwo
                \;\;\;
                \kInt 
                \;\;\; 
                \kUnit
                \\[1em]
                \kRef
                \;\;\;
                \kBool
                \;\;\;
                \kSize
                \\[1em]
                \kOp
                \;\;\;
                \kTyWithPlace
                \\[1em]
                \kBot
                \;\;\;
                \kCompose
                \;\;\;
                \kJoin
                \\[1em]
                \kAlloc
                \;\;\;
                \kNew
                \\[1em]
                \kSplit
                \\[1em]
                \kFree
                \;\;\;
                \kEps 
                \;\;\; 
                \kRec
        \end{array}
\end{align*}
\caption{Kinding rules for \textsc{Spegion}}
\label{fig:kinding}
\end{figure}

\subsection{Effect Rules}
\begin{figure}[H]
        \begin{align*}
                \begin{array}{cc}
                        \effectEqRefl 
                        \;\;\; 
                        \effectEqSym
                        \;\;\;
                        \effectEqTrans 
                        \\[1em]
                        \effectEqCong 
                        \\[1em] 
                        \effectEqAssoc
                        \\[1em]
                        \effectEqJoinAssoc
                        \\[1em]
                        \effectEqJoinCong
                \end{array}
        \end{align*}
        \caption{Rules for effect equality ($\equiv$)}
\end{figure}

\begin{figure}[h]
        \begin{align*}
                \begin{array}{cc}
                        \sbEquiv
                        \;\;\;
                        \sbBot
                        \;\;\;
                        \sbXAbove
                        \\[1em]
                        \sbXBelowOne
                        \;\;\;
                        \sbXBelowTwo
                        \\[1em]
                        \sbJoinAbove
                        \;\;\;
                        \sbJoinBelowOne
                        \;\;\;
                        \sbJoinBelowTwo
                \end{array}
        \end{align*}
        \caption{Rules for effect subsumption ($\sqsubseteq$)}
\end{figure}

\begin{figure}
        \begin{align*}
                \begin{array}{cc}
                        \composeEBot
                        \;\;\;
                        \composeEFresh
                        \;\;\;
                        \composeEFree 
                        \\[1em]
                        \composeEFreshAlloc 
                        \;\;\;
                        \composeEAlloc 
                        \\[1em]
                        \composeESplitAlloc 
                        \;\;\;
                        \composeEFreshSplit  
                        \\[1em]
                        \composeESplit
                        \;\;\;
                        \composeESplitSplit 
                        \;\;\;
                        \composeEJoin
                        \\[1em] 
                        \composeEVarRight
                        \;\;\; 
                        \composeEVarLeft 
                        \;\;\;
                        \composeERec
                \end{array}
        \end{align*}
        \caption{Rules for effect composition}
\end{figure}

\newpage

\subsection{Typing Rules}

\begin{figure}[h]
\begin{gather*}
        \begin{array}{cc}
        \intT
        \;\;
        \unitT
        \\[-0.3em]
        \trueT 
        \;\;
        \falseT
        \\[1em]
        \locValT
        \;\;
        \absT
        \\[1em]
        \biglamT
        \end{array}
\end{gather*}
\caption{Typing rules for values in \textsc{Spegion}}
\end{figure}

\begin{figure}
        \vspace{-5em}
\begin{gather*}
        \hspace{-5em}
        \begin{array}{cc}
                \uselocT 
                \;\;
                \valT 
                \\[1em]
                \newrgnT
                \\[1em]
                \freergnT
                \\[1em]
                \splitT
                \\[1em]
                \copyT
                \\[1em]
                \varT 
                \;\;\;
                \letT
                \\[1em]
                \ifT
                \\[1em]
                \refT
                \;\;\;
                \derefT
                \\[1em]
                \assignT
                \;\;\;
                \seqT
                \\[1em]
                \appT
                \\[1em]
                \bigAppT
                \\[1em]
                \fixT
        \end{array}
\end{gather*}
\caption{Typing rules for expressions in \textsc{Spegion}}
\end{figure}


\newpage
\section{Semantics}
\label{appendix:semantics}

\begin{align*}
    \storeInner_{\rho}^{in.} ::= \emptyset \mid \storeInner_{\rho}^{in.}, l_\rho \mapsto v \mid \storeInner_{\rho}^{in.}, l_\rho \mapsto a
\tag{inner store}
\end{align*}

\begin{align*}
    \storeOuter ::= \emptyset \mid \storeOuter, \rho \mapsto (\storeInner_{\rho}^{in.},s_a)
\tag{outer store}
\end{align*}




\subsection{Evaluation Rules}

{\small{
\begin{figure}[H]
\begin{align*}
    \begin{array}{c}
        \eNewrgn
        \\[1.5em]
        \eFreergn
        \\[1.5em]
        \eFreergnL
        \\[1.5em]
        \eSplit
        \\[1.5em]
        \eSplitL
        \\[1.5em]
        \eCopyOne
        \\[1.5em]
        \eCopyTwo
        \\[1.5em]
        \eCopyL
        \\[1.5em]
        \eVal
        \\[1.5em]
        \eValL
    \end{array}
\end{align*}
\label{fig:op-sem}
\caption{Small step rules}
\end{figure}
}}

{\small{
\begin{figure}[H]
\begin{align*}
    \begin{array}{c}
        \eAppOne
        \\[1.5em]
        \eAppTwo
        \\[1.5em]
        \eAppL
        \\[1.5em]
        \eRef 
        \\[1.5em]
        \eRefL
        \\[1.5em]
        \eDeref
        \\[1.5em]
        \eDerefL
        \\[1.5em]
        \eAssignOne
        \\[1.5em]
        \eAssignTwo
        \\[1.5em]
        \eAssignL
        \\[1.5em]
        \eSeq
        \\[1.5em]
        \eSeqNext
    \end{array}
\end{align*}
\label{fig:op-sem-2}
\caption{Small step rules (continued)}
\end{figure}
}}

{\small{
\begin{figure}[H]
\begin{align*}
    \begin{array}{c}
        \eBigApp
        \\[1.5em]
        \eIf
        \\[1.5em]
        \eIfTrue
        \\[1.5em]
        \eIfFalse
        \\[1.5em]
        \eLet
        \\[1.5em]
        \eLetL
        \\[1.5em]
        \eFix
        \\[1.5em]
        \eFixL
    \end{array}
\end{align*}
\label{fig:op-sem-3}
\caption{Small step rules (continued)}
\end{figure}
}}

\newpage
\subsection{Store Typing Rules}
{\small{
\begin{figure}[H]
\begin{align*}
    \hspace{-5em}
    \begin{array}{c}
        \inferrule*[right=emptyOuter]
            {\quad}
            {K \mid \Gamma \mid \Sigma \vdash \emptyset }
        \;\;\;
        \inferrule*[right=inner]
            {K \mid \Gamma \mid \Sigma \vdash \storeOuter \qquad
             K \mid \Gamma \mid \Sigma \vdash \storeInner_{\rho}^{in.} 
             \qquad \text{currentSize}(\rho) \sqsubseteq s_a
            }
            {K \mid \Gamma \mid \Sigma \vdash \storeOuter, \rho \mapsto (\storeInner_{\rho}^{in.}, s_a)}
    \end{array}
\end{align*}
\caption{Typing rules for $\storeOuter$}
\end{figure}
}}

{\small{
\begin{figure}[H]
\begin{align*}
    \hspace{-5em}
    \begin{array}{c}
        \inferrule*[right=emptyInner]
            {\quad}
            {K \mid \Gamma \mid \Sigma \vdash (\emptyset, s_a)}
        \;\;\;
        \inferrule*[right=loc]
            { \tau = \Sigma(l_\rho) \qquad K \mid \Gamma \mid \Sigma \vdash \storeInner_{\rho}^{in.}  \qquad
             K \mid \Gamma \mid \Sigma \vdash v : \tau
            }
            {K \mid \Gamma \mid \Sigma \vdash ((\storeInner_{\rho}^{in.}, l_\rho \mapsto v), s_a)}
    \end{array}
\end{align*}
\caption{Typing rules for $\storeInner_{\rho}^{in.}$}
\end{figure}
}}

\section{Proofs}
\label{appendix:proofs}
\begin{lem}[Preservation of types under substitution]
    \label{lemma:subst}
If 
\begin{align*}
    \begin{array}{l}
    K \mid \Gamma, x : \mu_1 \mid \Sigma \vdash e : \mu_2 \mid \varphi 
    \\
    K \mid \Gamma \mid \Sigma \vdash v : \mu_1 \mid \varphi
    \end{array}
\end{align*}
then
\begin{align*}
    K \mid \Gamma \mid \Sigma \vdash [x \mapsto v] e : \mu_2 \mid \varphi 
\end{align*}
\end{lem}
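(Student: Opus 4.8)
The plan is to proceed by structural induction on the derivation of the typing judgement $K \mid \Gamma, x : \mu_1 \mid \Sigma \vdash e : \mu_2 \mid \varphi$, performing case analysis on the final typing rule applied. Since the expression typing rules are syntax-directed, this coincides with induction on the structure of $e$. Throughout I adopt the usual Barendregt convention, so that every variable bound inside $e$ is chosen distinct from $x$ and not free in $v$; this lets the substitution be pushed under binders without capture. Note that in practice the substituted value $v$ is a location $l_\rho$ (the operational rules \textsc{e-appL}, \textsc{e-letL}, and \textsc{e-fixL} only ever substitute locations for variables), and that such values are typed with the trivial effect $\{\bot\}$, which is exactly the effect with which the variable $x$ is typed by (\textsc{t-var}).

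The base cases are the leaves of the derivation. For (\textsc{t-var}) I split on whether the variable is $x$ or some $y \neq x$. If it is $x$, then $\mu_2 = \mu_1$, the substitution yields $[x \mapsto v]x = v$, and the goal is precisely the second hypothesis. If it is $y \neq x$, the substitution is the identity and the binding $x : \mu_1$ is never consulted, so the judgement still holds after discarding that binding from the context. The (\textsc{t-use-val}) case for a location $l_\rho$ is immediate, as substitution acts as the identity on locations and the rule's premises depend only on $\Sigma$, not on $\Gamma$; the constant leaves (\textsc{t-int}, \textsc{t-unit}, \textsc{t-true}, \textsc{t-false}) are likewise untouched.

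For the compound expression rules---(\textsc{t-app}), (\textsc{t-ref}), (\textsc{t-deref}), (\textsc{t-assign}), (\textsc{t-seq}), (\textsc{t-if}), (\textsc{t-copy}), and the allocation rule (\textsc{t-val})---substitution distributes over the immediate subexpressions, so I apply the induction hypothesis to each premise and reassemble the conclusion with the same rule. The key observation for effect preservation is that each of these rules builds its resulting effect purely by composing ($\times$) and joining ($\sqcup$) the effects of its subderivations, and none of these change under a term-variable substitution: every occurrence of $x$ contributed effect $\{\bot\}$, and the substituted value carries the same trivial effect, so the reconstructed effect is syntactically identical to the original $\varphi$.

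The binding rules are where the work concentrates and constitute the main obstacle. For (\textsc{t-}$\lambda$), (\textsc{t-}$\Lambda$), and the binders in (\textsc{t-let}) and (\textsc{t-fix}), the variable context is extended with the bound variable (and, for the polymorphic and fix cases, $K$ is extended with $\alpha, \rho, \epsilon$). To invoke the induction hypothesis on the body, I first need the typing of $v$ to survive in the extended context, which requires an auxiliary weakening lemma---typing is preserved under adding unused variable and kind bindings---established by a routine prior structural induction. The subtlest case is (\textsc{t-fix}): substitution must be propagated under the $f$-binder as well as the inner $\Lambda\{\alpha,\rho,\epsilon\}.\lambda x'.(-)$, and because the rule rewrites the latent effect via $[\{\epsilon\} \mapsto \{\textbf{rec}\ \epsilon\ \varphi\}]$, I must check that the value substitution commutes with this effect rewriting. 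It does, since substituting a value for a term variable never touches the effect variable $\epsilon$ nor any $\textbf{rec}$ structure. With weakening in hand, every binding case reduces to pushing the substitution inward and appealing to the induction hypothesis on the body.
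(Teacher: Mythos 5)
Your proposal is correct and follows exactly the approach the paper takes: induction on the derivation of $K \mid \Gamma, x : \mu_1 \mid \Sigma \vdash e : \mu_2 \mid \varphi$. The paper's own proof is a one-line stub with no case analysis, so your elaboration of the variable, compound, and binder cases---including the auxiliary weakening lemma and the observation that value substitution commutes with the $[\{\epsilon\} \mapsto \{\textbf{rec}\ \epsilon\ \varphi\}]$ rewriting in (\textsc{t-fix})---supplies considerably more detail than the published argument while remaining the same proof strategy.
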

\begin{proof}
By induction on a derivation of
\begin{align*}
    K \mid \Gamma, x : \mu_1 \mid \Sigma \vdash e : \mu_2 \mid \varphi 
\end{align*}
\end{proof}

\begin{lem}[Inversion of the typing relation]
    \label{lemma:typing-inversion}
For each syntactic form:
\begin{enumerate}
    \item (\textsc{t-loc}): If 
        \begin{align*}
            K \mid \Gamma \mid \Sigma \vdash l_\rho\ : T
        \end{align*}
        , then $l_\rho : T \in \Sigma$.

    \item (\textsc{t-int}): If 
        \begin{align*}
            K \mid \Gamma \mid \Sigma \vdash n : T
        \end{align*}
        , then $T = \text{Int}$.

    \item (\textsc{t-unit}): If 
        \begin{align*}
            K \mid \Gamma \mid \Sigma \vdash ()\ : T
        \end{align*}
        , then $T = \text{Unit}$.

    \item (\textsc{t-true}): If 
        \begin{align*}
            K \mid \Gamma \mid \Sigma \vdash \text{true}\ : T
        \end{align*}
        , then $T = \text{Bool}$.

    \item (\textsc{t-false}): If 
        \begin{align*}
            K \mid \Gamma \mid \Sigma \vdash \text{false}\ : T
        \end{align*}
        , then $T = \text{Bool}$.

    \item (\textsc{t-}$\lambda$): If 
        \begin{align*}
            K \mid \Gamma  \mid \Sigma \vdash \lambda x : (\tau_1, \rho_1) . e : T_1
        \end{align*}
        , then $T_1 = (\tau_1, \rho_1) \xrightarrow{\varphi} (T_2, R_2)$ for some $T_2$, $R_2$, and $\varphi$ such that: 
        \begin{align*}
            K \mid \Gamma, x : (\tau_1, \rho_1) \mid \Sigma \vdash e : (T_2, R_2) \mid \varphi 
        \end{align*}

    \item (\textsc{t-}$\Lambda$): If 
        \begin{align*}
            K \mid \Gamma \mid \Sigma \vdash \Lambda \{ \alpha : \text{Type}, \rho : \text{Region} \} . \lambda x : (\tau_1, \rho_1) . e : T_1
        \end{align*}
        , then $T_1 = (\tau_1, \rho_1) \xrightarrow{\varphi} (T_2, R_2)$ for some $T_2$, $R_2$, and $\varphi$ such that: 
        \begin{align*}
            K \mid \Gamma, x : (\tau_1, \rho_1) \mid \Sigma \vdash e : (T_2, R_2) \mid \varphi 
        \end{align*}

    \item (\textsc{t-var}): If
        \begin{align*}
            K \mid \Gamma \mid \Sigma \vdash x : (T, R) \mid \{\bot\}
        \end{align*}
        , then $x : (T, R) \in \Gamma$.

    \item (\textsc{t-use-loc}): If 
        \begin{align*}
            K \mid \Gamma \mid \Sigma \vdash l_{\rho} : (T, R) \mid \{ \bot \} 
        \end{align*}
        , then $l_{\rho} : (T, R) \in \Sigma$.

    \item (\textsc{t-newrgn}): If 
        \begin{align*}
            K \mid \Gamma \mid \Sigma \vdash \textbf{newrgn}\ [s] : (T, R) \mid \{\textbf{fresh}\ R\ s\} 
        \end{align*}
        , then $T = \text{Unit}$ and $R$ is some fresh region $\rho$.

    \item (\textsc{t-freergn}):
        \begin{align*}
            K \mid \Gamma \mid \Sigma \vdash \textbf{freergn}\ e : (T, R) \mid \varphi \times \{\textbf{free}\ R'\}
        \end{align*}
        , then $T = \text{Unit}$, $R = \globalRegion$, and there is some type $T'$ and some region $R'$ such that: 
        \begin{align*}
            K \mid \Gamma \mid \Sigma \vdash e : (T', R') \mid \varphi
        \end{align*}

    \item (\textsc{t-split}): 
        \begin{align*}
            K \mid \Gamma \mid \Sigma \vdash \textbf{split}\ [s]\ e : (T, R) \mid \varphi \times \{\textbf{alloc}\ s\ R'\} \times \{\textbf{fresh}\ R\ s \} \times \{\textbf{alloc}\ 1\ R \}
        \end{align*}
        , then $T = \text{Unit}$, $R$ is some fresh region $\rho'$, and there is some type $T'$ and some region $R'$ such that: 
        \begin{align*}
            K \mid \Gamma \mid \Sigma \vdash e : (T', R') \mid \varphi
        \end{align*}



    \item (\textsc{t-app}): If 
        \begin{align*}
            K \mid \Gamma \mid \Sigma \vdash e_1\ e_2 : (T_2, R_2) \mid 
        \end{align*}
        , then there is some type $\mu_1$, such that 
        \begin{align*}
            K \mid \Gamma \mid \Sigma \vdash e_1 : (\mu_1 \rightarrow (T_2, R_2), \rho) \mid \varphi_1 
        \end{align*}
        and 
        \begin{align*}
            K \mid \Gamma \mid \Sigma \vdash e_2 : \mu_1 \mid \varphi_2
        \end{align*}




    \item (\textsc{t-seq}): If 
        \begin{align*}
            K \mid \Gamma \mid \Sigma \vdash e_1 ; e_2 : (T, R) \mid \varphi_1 \times \varphi_2 
        \end{align*}
        , then:
        \begin{align*}
            K \mid \Gamma \mid \Sigma \vdash e_1 : (\text{Unit}, \rho) \mid \varphi_1
        \end{align*}
        and: 
        \begin{align*}
            K \mid \Gamma \mid \Sigma \vdash e_2 : (T, R) \mid \varphi_2 
        \end{align*}

    \item (\textsc{t-if}): If 
        \begin{align*}
            K \mid \Gamma \mid \Sigma \vdash \textbf{if}\ e_1\ \textbf{then}\ e_2\ \textbf{else}\ e_3 : (T, R) \mid \varphi_1 \times (\varphi_2 \sqcup \varphi_3)
        \end{align*}
        , then .
        \begin{align*}
            K \mid \Gamma \mid \Sigma \vdash e_1 : (\text{Bool}, \rho)
        \end{align*}
        , and 
        \begin{align*}
            K \mid \Gamma \mid \Sigma \vdash e_2 : (T, R)
            K \mid \Gamma \mid \Sigma \vdash e_3 : (T, R)
        \end{align*}


        

\end{enumerate}
\end{lem}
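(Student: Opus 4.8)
The plan is to prove each clause by inspection of the typing rules, exploiting the fact that the typing relation of \textsc{Spegion} is \emph{syntax-directed}: there is no structural subsumption rule for expressions, and the conclusion of every rule in Figures~\ref{fig:static-typing-values} and~\ref{fig:static-typing-expressions} is organised around a distinct term constructor. Consequently, for a fixed syntactic form there is exactly one rule whose conclusion can match it, and the desired decomposition is obtained simply by reading off that rule's premises. No induction is required; a single case analysis on the final rule of the given derivation suffices for every clause.

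Concretely, for each clause I would first identify the unique rule whose conclusion unifies with the hypothesis. For the base cases --- (\textsc{t-int}), (\textsc{t-unit}), (\textsc{t-true}), (\textsc{t-false}), (\textsc{t-loc}), (\textsc{t-var}), (\textsc{t-use-val}) --- the rule has no interesting premises beyond a lookup in $\Sigma$ or $\Gamma$, so the conclusion is immediate. For the compound forms I would invert the single matching rule and name its premises: for (\textsc{t-freergn}) the hypothesis matches only the rule of that name, forcing the subderivation $K \mid \Gamma \mid \Sigma \vdash e : (T', R') \mid \varphi$ and the result shape $(\text{Unit}, \globalRegion)$; for (\textsc{t-app}) only one rule applies, yielding the two premises for $e_1$ and $e_2$ together with the arrow type; and similarly for (\textsc{t-seq}) and (\textsc{t-if}), where the branching structure $\varphi_2 \sqcup \varphi_3$ of the effect is recovered directly from the conclusion. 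The abstraction clauses (\textsc{t-}$\lambda$) and (\textsc{t-}$\Lambda$) are read off their rules, extending $\Gamma$ --- and, for the latter, $K$ --- as the rule prescribes.

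The only genuine care point, and the step I expect to be the main obstacle, is disambiguating the two typing judgements. Locations $l_\rho$ are simultaneously values and expressions: they are typed by (\textsc{t-loc}) under the value judgement $K \mid \Gamma \mid \Sigma \vdash v : \tau$ and by (\textsc{t-use-val}) under the expression judgement $K \mid \Gamma \mid \Sigma \vdash e : \mu \mid \varphi$. The statement must therefore fix which judgement is being inverted in the (\textsc{t-loc}) and (\textsc{t-use-val}) clauses, and the argument must appeal to the matching one; the two never interfere, since they target syntactically disjoint judgement forms --- one carrying an effect, the other not. Beyond this I would verify that no two expression rules share a head constructor, so that ``the unique matching rule'' really is unique; inspecting the grammar of terms against the rule conclusions confirms this, and the lemma then follows mechanically.
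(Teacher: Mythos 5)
Your proposal is correct and matches the paper's argument, which simply states that the lemma is ``immediate from the definition of the typing relation''; your elaboration via syntax-directedness and case analysis on the final rule is exactly the implicit content of that one-line proof. The observation about distinguishing the value judgement (\textsc{t-loc}) from the effectful expression judgement (\textsc{t-use-val}) for locations is a sensible clarification but does not change the argument.
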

\begin{proof}
Immediate from the definition of the typing relation.
\end{proof}

\begin{lem}[Canonical forms]
    \label{lemma:canonical}
    The canonical forms lemma determines the form of a value, given a location $l$,
    a type $(\tau, \rho)$, and a store $\storeOuter$.
\begin{enumerate}[i]
    \item If $K \mid \Gamma \mid \Sigma \vdash l\ [s]\ \textbf{at}\ e : ((\tau_1,
        \rho_1) \xrightarrow{\varphi} (\tau_2, \rho_2), \rho) \mid \varphi'$
        and $K \mid \Gamma \mid \Sigma \vdash \storeOuter$, then there is a value $v$
        at $(\storeOuter(\rho))(l)$ with type $(\tau_1, \rho_1) \xrightarrow{\varphi}
        (\tau_2, \rho_2)$ and the form $\lambda x . e'$ for some $x$ and $e'$.
    \item If $K \mid \Gamma \mid \Sigma \vdash l\ [s]\ \textbf{at}\ e : (\forall \{\alpha, \rho', \epsilon\} . (\tau_1,
        \rho_1) \xrightarrow{\varphi} (\tau_2, \rho_2), \rho) \mid \varphi'$
        and $K \mid \Gamma \mid \Sigma \vdash \storeOuter$, then there is a value $v$
        at $(\storeOuter(\rho))(l)$ with type $ \forall \{\alpha, \rho', \epsilon\} . (\tau_1, \rho_1) \xrightarrow{\varphi}
        (\tau_2, \rho_2)$ and the form $\Lambda \{\alpha, \rho', \epsilon\} . \lambda x . e'$ for some $x$ and $e'$.
    \item  If $K \mid \Gamma \mid \Sigma \vdash l\ [s]\ \textbf{at}\ e :
        (\text{Unit}, \rho) \mid \varphi $ and $K \mid \Gamma \mid \Sigma
        \vdash \storeOuter$, then there is a value $v$ at
        $(\storeOuter(\rho))(l)$ with type $\text{Unit}$ and the form $()$.
    \item  If $K \mid \Gamma \mid \Sigma \vdash l\ [s]\ \textbf{at}\ e :
        (\text{Int}, \rho) \mid \varphi \mid \Delta$ and $K \mid \Gamma \mid \Sigma
        \vdash \storeOuter$, then there is a value $v$ at
        $(\storeOuter(\rho))(l)$ with type $\text{Int}$ and the form $n$.
\end{enumerate}
\end{lem}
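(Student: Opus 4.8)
The plan is to prove all four clauses by the standard two-stage canonical-forms argument: first invert the typing derivation to pin down the type carried by the value sitting in the store, and then inspect the value typing rules to read off its syntactic shape. Since the clauses differ only in the result type $\tau$ (a function type, a polymorphic type, $\text{Unit}$, or $\text{Int}$), I would run one uniform argument and specialise at the end.

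First I would invert the derivation of $K \mid \Gamma \mid \Sigma \vdash l\ [s]\ \textbf{at}\ e : (\tau, \rho) \mid \varphi'$. The only rule concluding an allocation form is (\textsc{t-val}), so inversion (Lemma~\ref{lemma:typing-inversion}) yields $K \mid \Gamma \mid \Sigma \vdash l : \tau$ as a value, with $\tau$ the first component of the type-with-place. Typing a location as a value can only proceed by (\textsc{t-loc}), so $\tau = \Sigma(l)$; the location carries exactly the claimed type in the store typing. Next I would use the store-typing hypothesis $K \mid \Gamma \mid \Sigma \vdash \storeOuter$. By induction over its derivation, descending through (\textsc{st-outer-region}) and then (\textsc{st-inner-loc}), every $l \in \mathrm{dom}(\Sigma)$ has a value $v = (\storeOuter(\rho))(l)$ in the corresponding inner store with $K \mid \Gamma \mid \Sigma \vdash v : \Sigma(l)$. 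Combining the two steps gives $K \mid \Gamma \mid \Sigma \vdash v : \tau$ for the specific $\tau$ of each clause.

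A final case analysis on the value typing rules (Figure~\ref{fig:static-typing-values}) then discharges each clause, each being a ``only one rule concludes this type'' observation: only (\textsc{t-}$\lambda$) concludes a bare arrow type, forcing $v = \lambda x . e'$; only (\textsc{t-}$\Lambda$) concludes a $\forall$-type, forcing $v = \Lambda\{\alpha, \rho', \epsilon\}.\lambda x . e'$; only (\textsc{t-unit}) concludes $\text{Unit}$, forcing $v = ()$; and only (\textsc{t-int}) concludes $\text{Int}$, forcing $v = n$.

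The main obstacle is that (\textsc{t-loc}) assigns a location whatever type the store typing records for it, so the value-level case analysis is not quite exhaustive: a store slot of function type could \emph{a priori} hold another location $l'$ rather than a literal $\lambda$-abstraction, and similarly for the $\forall$, $\text{Unit}$, and $\text{Int}$ clauses. The one reduction rule that threatens to create such a slot is (\textsc{e-copyL}), since it allocates a fresh target location that preserves the source's type; references are harmless here because (\textsc{t-ref})/(\textsc{e-refL}) confine every location-to-location indirection to $\text{Ref}$-typed slots, and assignment likewise only overwrites $\text{Ref}$-typed slots. To close the gap I would strengthen the store well-formedness invariant so that values stored at locations of function, $\forall$, $\text{Unit}$, or $\text{Int}$ type are syntactic literals, which reinstates exhaustiveness of the final case analysis; its preservation under reduction turns on (\textsc{e-copyL}) depositing the denoted value $v$ rather than fresh indirection. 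I expect establishing and threading this invariant through the reduction rules, rather than the inversions themselves, to be the delicate part of the argument.
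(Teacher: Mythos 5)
The paper states Lemma~\ref{lemma:canonical} without any accompanying proof, so there is no argument of record to compare yours against; your first two paragraphs reconstruct exactly the argument the authors presumably intended (invert the allocation typing via \textsc{t-val}, read off $\tau = \Sigma(l)$ from \textsc{t-loc}, descend through \textsc{st-outer-region} and \textsc{st-inner-loc} to obtain a stored value of type $\tau$, then case-analyse the value typing rules). That part is fine, modulo the observation that the lemma's conclusion looks $l$ up in $\storeOuter(\rho)$ where $\rho$ is the \emph{target} region of the allocation rather than the region subscripting $l$ itself; you silently repair this by reading ``the corresponding inner store'', which is the only sensible reading.

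Your third paragraph identifies a real defect rather than a gap in your own reasoning: because locations are values and \textsc{t-loc} types a location at whatever $\Sigma$ records for it, the concluding case analysis is not exhaustive, and the operational semantics genuinely produces the offending stores. \textsc{e-copyL} binds $v = \storeInner_{\rho}^{in.}(l_\rho)$ in its premises but then writes the indirection $l'_{\rho'} \mapsto l_\rho$ into a slot whose $\Sigma$-type is the copied type $\tau$, which may be a function, $\forall$, Unit, or Int type. Such a store still satisfies \textsc{st-inner-loc}, since the stored location is typeable at $\tau$ by \textsc{t-loc}, so the lemma as stated fails for reachable stores, and the progress proof's appeals to it (for instance in the application case) inherit the hole. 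Your proposed repair --- strengthening the store invariant to exclude indirections at non-Ref types --- is the right shape, but as you yourself note it is only preservable if \textsc{e-copyL} is amended to deposit the denoted value $v$ (or if the invariant instead permits chains of locations that bottom out in a literal, with the lemma's conclusion weakened to match). Either way this amounts to a correction to the paper's development, not merely to your proof, and flagging it is the most valuable part of your submission.
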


\subsection{Progress}

\newpage
\progress*
\begin{proof}
By induction on the structure of $e$: 
\begin{enumerate} 
    \item Case ($l_{\rho}$): Item (i) applies.
    \item Case ($x$): Item (ii) applies.
    \item Case ($v\ [s]\ \textbf{at}\ e$): By induction, there are the following cases for $e$: 
          \begin{enumerate}
            \item One of Item (ii) or Item (iii) applies to $e$.
            \item Item (i) applies to $e$. Hence, Item (i) applies with reduction (\textsc{e-val}).
          \end{enumerate}
    \item Case ($e_1\ e_2$): By induction, there are the following cases for $e_1$:
          \begin{enumerate}
            \item One of Item (ii) or Item (iii) applies to $e_1$.
            \item Item (i) applies to $e_1$. By typability, we have that 
            $K \mid \Gamma \mid \Sigma \vdash e_1\ e_2 : \mu_2 \mid \varphi_1 \times \varphi_2 \times \varphi$
            which must be due to rule (\textsc{t-app}). Thus, $K \mid \Gamma \mid \Sigma e_1 : (\mu_1 \xrightarrow{\varphi} \mu_2, \rho) \mid \varphi_1$. 
            By the canonical forms lemma (Lemma~\ref{lemma:canonical}), $e_1$ has the form of a location which 
            points to a lambda abstraction $\lambda x . e$. Again by induction, there are the following cases for $e_2$:
            \begin{enumerate}
                \item One of Item (ii) or Item (iii) applies to $e_2$.
                \item Item (i) applies to $e_2$, making $e_1\ e_2$ a beta-redex. Thus, Item (i)
                      applies with reduction (\textsc{e-appL}).
            \end{enumerate}
          \end{enumerate}
    \item Case ($\textbf{newrgn}\ [s]$): Item (i) applies with rule (\textsc{e-newrgn}).
    \item Case ($\textbf{freergn}\ e$): By induction there are the following cases for $e$:
            \begin{enumerate}
                \item One of Item (ii) or Item (iii) applies to $e$.
                \item Item (i) applies to $e$. Hence, Item (i) applies with reduction (\textsc{e-freergn}).
            \end{enumerate}
    \item Case ($\textbf{split}\ [s]\ e$): By induction there are the following cases for $e$:
            \begin{enumerate}
                \item One of Item (ii) or Item (iii) applies to $e$.
                \item Item (i) applies to $e$. Hence, Item (i) applies with reduction (\textsc{e-splitL}).
            \end{enumerate}
    \item Case ($\textbf{copy}\ e_1\ \textbf{into}\ e_2$): By induction there are the following cases for $e_1$:
            \begin{enumerate}
                \item One of Item (ii) or Item (iii) applies to $e_1$.
                \item Item (i) applies to $e_1$. By typability, we have that $K \mid \Gamma \mid \Sigma \vdash \textbf{copy}\ e_1\ \textbf{into}\ e_2 : (\tau, \rho') \mid  \varphi_1 \times \varphi_2 \times \{\textbf{alloc}\ 1\ \rho'\} $ which 
                must be due to rule (\textsc{t-copy}). Thus, $K \mid \Gamma \mid \Sigma \vdash e_1 : (\tau, \rho) \mid \varphi_1$ 
                By the canonical forms lemma (Lemma~\ref{lemma:canonical}),
                $e_1$ has the form of a location which points to a value of type $\tau$.
                Again by induction, there are the following cases for $e_2$:
                \begin{enumerate}
                    \item One of Item (ii) or Item (iii) applies to $e_2$.
                    \item Item (i) applies to $e_2$. Hence, Item (i) applies with reduction (\textsc{e-copyL}).
                \end{enumerate}
            \end{enumerate} 
    \item Case ($\textbf{ref}\ e$): By induction there are the following cases for $e$:
            \begin{enumerate}
                \item One of Item (ii) or Item (iii) applies to $e$.
                \item Item (i) applies to $e$. Hence, Item (i) applies with reduction (\textsc{e-refL}).
            \end{enumerate}
    \item Case ($\textbf{if}\ e_1\ \textbf{then}\ e_2\ \textbf{else}\ e_3$):
            By induction, there are the following cases for $e_1$:
            \begin{enumerate}
                \item One of Item (ii) or Item (iii) applies to $e_1$.
                \item Item (i) applies to $e_1$. By typability, we have that $K \mid \Gamma \mid \Sigma \vdash \textbf{if}\ e_1\ \textbf{then}\ e_2\ \textbf{else}\ e_3 : (\tau, \rho) \mid \varphi_1 \times \varphi_2 \times \varphi_3$ which must be due to 
                    rule (\textsc{t-if}). Thus, $K \mid \Gamma \mid \Sigma \vdash e_1 : (\text{Bool}, \rho') \mid \varphi_1$.
                    By the canonical forms lemma (Lemma~\ref{lemma:canonical}), $e_1$ has the form of a location which points to a value of type $\text{Bool}$.
                    Again by induction, there are the following cases for $e_2$:
                    \begin{enumerate}
                        \item One of Item (ii) or Item (iii) applies to $e_2$.
                        \item Item (i) applies to $e_2$. By typability we have that 
                              $K \mid \Gamma \mid \Sigma \vdash e_2 : (\tau, \rho) \mid \varphi_2$ and by the canonical
                              forms lemma (Lemma~\ref{lemma:canonical}), $e_2$ has the form of a location which points to a value of type $\tau$.
                              By induction, there are the following cases for $e_3$:
                              \begin{enumerate}
                                \item One of Item (ii) or Item (iii) applies to $e_3$.
                                \item Item (i) applies to $e_3$. Hence, Item (i) applies with reduction (\textsc{e-ifL}).
                              \end{enumerate}                              
                    \end{enumerate}
            \end{enumerate}
    \item Case ($!e$): By induction there are the following cases for $e$:
            \begin{enumerate}
                \item One of Item (ii) or Item (iii) applies to $e$.
                \item Item (i) applies to $e$. Hence, Item (i) applies with reduction (\textsc{e-derefL}).
            \end{enumerate}
    \item Case ($e_1 := e_2$): By induction, there are the following cases for $e_1$:
          \begin{enumerate}
            \item One of Item (ii) or Item (iii) applies to $e_1$.
            \item Item (i) applies to $e_1$. By typability, we have that $K \mid \Gamma \mid \Sigma \vdash e_1 := e_2 : (\text{Unit}, \rho) \mid \varphi_1 \times \varphi_2 $ which must be due to 
                  rule (\textsc{t-assign}). Thus, $K \mid \Gamma \mid \Sigma \vdash e_1 : (\text{Ref}\ \tau, \rho') \mid \varphi_1$.
                  By the canonical forms lemma (Lemma~\ref{lemma:canonical}), $e_1$ has the form 
                  of a location which points to a reference $\textbf{ref}\ e$.
                  Again by induction, there are the following cases for $e_2$:
                  \begin{enumerate}
                    \item One of Item (ii) or Item (iii) applies to $e_2$.
                    \item Item (i) applies to $e_2$. Hence, Item (i) applies with reduction (\textsc{e-assignL}).
                  \end{enumerate}
          \end{enumerate}
    \item Case ($e_1;\ e_2$): By induction there are the following cases for $e_1$:
          \begin{enumerate}
            \item One of Item (ii) or Item (iii) applies to $e_1$.
            \item Item (i) applies to $e_1$. By typability, we have that $K \mid \Gamma \mid \Sigma \vdash e_1 ; e_2 : \mu \mid \varphi_1 \times \varphi_2$ which must be due to 
                  rule (\textsc{t-seq}). Thus, $K \mid \Gamma \mid \Sigma \vdash e_1 : (\text{Unit}, \rho) \mid \varphi_1$.
                  By the canonical forms lemma (Lemma~\ref{lemma:canonical}), $e_1$ has the form 
                  of a location which points to a value of type $\text{Unit}$. 
                  Again by induction, there are the following cases for $e_2$:
                  \begin{enumerate}
                    \item One of Item (ii) or Item (iii) applies to $e_2$.
                    \item Item (i) applies to $e_2$. Hence, Item (i) applies with reduction (\textsc{e-seqL}).
                  \end{enumerate}
          \end{enumerate}
    \item Case ($\textbf{let}\ x : \mu\ = e_1\ \textbf{in}\ e_2$): By induction there are the following cases 
          for $e_1$:
          \begin{enumerate}
            \item One of Item (ii) or Item (iii) applies to $e_1$: 
            \item Item (i) applies to $e_1$. By typability, we have that $K \mid \Gamma \mid \Sigma \vdash \textbf{let}\ x : \mu_1 = e_1\ \textbf{in}\ e_2 : \mu_2 \mid \varphi_1 \times \varphi_2$ which must be by 
                  rule (\textsc{t-let}). Thus, $K \mid \Gamma \mid \Sigma \vdash e_1 : \mu \mid \varphi_1$.
                  By the canonical forms lemma (Lemma~\ref{lemma:canonical}), $e_1$ has the form of a location which points to a value of type $\mu$.
                  Again by induction, there are the following cases for $e_2$:
                  \begin{enumerate}
                    \item One of Item (ii) or Item (iii) applies to $e_2$.
                    \item Item (i) applies to $e_2$. Hence, Item (i) applies with reduction (\textsc{e-letL}).
                  \end{enumerate}
          \end{enumerate}
    \item Case ($e\ @\ \mu$): Item (i) applies with rule (\textsc{e-apply}).
    \item Case ($\textbf{let}\ f = \textbf{fix}(f, (\Lambda \{ \alpha , \rho ,
    \epsilon \} . \lambda x . e_1)\ [s]\ \textbf{at}\ e_2)\ \textbf{in}\ e_3$):
    By induction there are the following cases for $(\Lambda \{ \alpha , \rho ,
    \epsilon \} . \lambda x . e_1)\ [s]\ \textbf{at}\ e_2$:
      \begin{enumerate}
            \item Item (i) applies to $(\Lambda \{ \alpha , \rho , \epsilon \} .
    \lambda x . e_1)\ [s]\ \textbf{at}\ e_2$. By typability, we have that $K
    \mid \Gamma \mid \Sigma \vdash \textbf{let}\ f = \textbf{fix}(f, (\Lambda \{
    \alpha , \rho , \epsilon \} . \lambda x . e_1)\ [s]\ \textbf{at}\ e_2)\
    \textbf{in}\ e_3 : \mu_2 \mid \varphi_1 \times \varphi_2 $ which must be by
    rule (\textsc{t-fix}). Thus, $K \mid \Gamma, f : (\forall \{ \alpha, \rho,
    \epsilon  \} . (\alpha, \rho) \xrightarrow{\{ \epsilon \}} \mu_1, \rho_f)
    \mid \Sigma \vdash (\Lambda\{ \alpha, \rho, \epsilon \}. \lambda x . e_1)\
    [s]\ \textbf{at}\ e_2 : (\forall \{ \alpha , \rho, \epsilon  \} . (\alpha,
    \rho) \xrightarrow{\varphi} \mu_1, \rho_f)  \mid \varphi_1 $. By the
    canonical forms lemma (Lemma~\ref{lemma:canonical}), $(\Lambda \{ \alpha , \rho ,
    \epsilon \} . \lambda x . e_1)\ [s]\ \textbf{at}\ e_2$ has the form of a
    location which points to a value of type $(\forall \{ \alpha , \rho,
    \epsilon  \} . (\alpha, \rho) \xrightarrow{\varphi} \mu_1, \rho_f)$. Again
    by induction, there are the following cases for $e_3$:
                  \begin{enumerate}
                    \item One of Item (ii) or Item (iii) applies to $e_3$.
                    \item Item (i) applies to $e_3$. Hence, Item (i) applies with reduction (\textsc{e-fixL}).
                  \end{enumerate}
          \end{enumerate}
\end{enumerate}
\end{proof}

\subsection{Preservation}

\newpage
\begin{lem}[Store update]
    \label{lemma:store-update} For updating a store we have that:
If
\begin{align*}
\begin{array}{l}
    K \mid \Gamma \mid \Sigma \vdash \storeOuter \\ 
    \tau = \Sigma(l_\rho) \\
    K \mid \Gamma \mid \Sigma \vdash v : \tau
\end{array}
\end{align*}
then 
\begin{align*}
    K \mid \Gamma \mid \Sigma \vdash [ \rho \mapsto ((\storeInner, l_\rho \mapsto v), s_a) ] \storeOuter
\end{align*}
\end{lem}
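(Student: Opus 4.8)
The plan is to prove the statement by induction on the derivation of $K \mid \Gamma \mid \Sigma \vdash \storeOuter$, keeping the store typing $\Sigma$ (and hence the recorded type $\tau = \Sigma(l_\rho)$) fixed throughout, and exploiting the hypothesis $K \mid \Gamma \mid \Sigma \vdash v : \tau$. Since the functional update $[\rho \mapsto ((\storeInner_{\rho}^{in.}, l_\rho \mapsto v), s_a)]\storeOuter$ touches only the region $\rho$ (with $\storeOuter(\rho) = (\storeInner_{\rho}^{in.}, s_a)$ and $l_\rho \in dom(\storeInner_{\rho}^{in.})$, which follows from coherence of $\Sigma$ with $\storeOuter$), the outer induction serves only to locate $\rho$ inside $\storeOuter$ and to rebuild every other binding verbatim.

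First I would dispatch the outer store. The base case (\textsc{st-outer-empty}) does not arise, since $l_\rho \in dom(\Sigma)$ together with store coherence forces $\rho \in dom(\storeOuter)$. In the inductive case (\textsc{st-outer-region}), write $\storeOuter = \storeOuter', \rho' \mapsto (\storeInner_{\rho'}^{in.}, s')$. If $\rho' \neq \rho$, the update rewrites only the tail $\storeOuter'$; I apply the induction hypothesis to $\storeOuter'$ and re-apply (\textsc{st-outer-region}), observing that both $\text{currentSize}$ and the bound $s'$ of $\rho'$ are untouched. If $\rho' = \rho$, then $\storeInner_{\rho'}^{in.} = \storeInner_{\rho}^{in.}$ and $s' = s_a$, and it remains to discharge the two premises of (\textsc{st-outer-region}) for the new binding $\rho \mapsto ((\storeInner_{\rho}^{in.}, l_\rho \mapsto v), s_a)$: that the modified inner store is well-typed, and that its current size still lies within $s_a$.

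For the inner-store premise I would establish a small auxiliary fact by induction on the derivation of $K \mid \Gamma \mid \Sigma \vdash \storeInner_{\rho}^{in.}$: functionally updating the binding of $l_\rho$ to $v$ preserves inner-store typing. The crucial point is that $\Sigma$ is held fixed, so the recorded type $\tau = \Sigma(l_\rho)$ does not move; at the binding for $l_\rho$ I simply re-apply (\textsc{st-inner-loc}) with $v$ in place of the old value, discharging its value-typing premise directly with $K \mid \Gamma \mid \Sigma \vdash v : \tau$, while every binding for a location other than $l_\rho$ is reconstructed unchanged.

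The remaining premise, $\text{currentSize}((\storeInner_{\rho}^{in.}, l_\rho \mapsto v)) \sqsubseteq s_a$, is the main obstacle, because $\text{sizeOf}$ is not determined by type in general: by Definition~\ref{def:sizeOf} a function value carries a size depending on its captured locations, so two values of the same type may differ in size. Updating $l_\rho$ alters $\text{currentSize}$ only by the difference between $\text{sizeOf}(v)$ and the size of the value previously bound to $l_\rho$, so it suffices that these two sizes coincide. This holds in the one reduction that invokes the lemma, namely (\textsc{e-assignL}): the assigned value is a location $l_{\rho'}$, whose size is $1$, and since (\textsc{t-assign}) gives the assignment target $l_\rho$ a reference type, the value previously stored there was itself a location, also of size $1$. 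Hence $\text{currentSize}$ is unchanged, the bound $\sqsubseteq s_a$ inherited from the original derivation of (\textsc{st-outer-region}) is preserved, and the case — and with it the induction — closes.
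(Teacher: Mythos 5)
The paper states this lemma without any accompanying proof, so there is no official argument to compare against; your proposal supplies the missing content, and its overall shape --- outer induction on the store-typing derivation to locate $\rho$, an inner induction to re-type the updated inner store against the \emph{fixed} $\Sigma$ using (\textsc{st-inner-loc}) with the hypothesis $K \mid \Gamma \mid \Sigma \vdash v : \tau$, and a separate discharge of the $\text{currentSize}$ premise of (\textsc{st-outer-region}) --- is the right one. You have also correctly isolated the one genuine difficulty: the hypotheses $\tau = \Sigma(l_\rho)$ and $K \mid \Gamma \mid \Sigma \vdash v : \tau$ do not bound $\text{sizeOf}(v)$ by the size of the value previously stored at $l_\rho$ (by Definition~\ref{def:sizeOf} two closures of the same type may capture different numbers of locations), so the premise $\text{currentSize}(\cdot) \sqsubseteq s_a$ is simply not derivable from the lemma's stated assumptions.

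Your repair --- observing that the lemma's only client, the (\textsc{e-assignL}) subcase of preservation, overwrites a location with a location, both of size $1$, so $\text{currentSize}$ is unchanged --- is sound, but you should be explicit that you are then proving the lemma under an additional hypothesis such as $\text{sizeOf}(v) \sqsubseteq \text{sizeOf}(\storeInner_{\rho}^{in.}(l_\rho))$ (or with $v$ specialised to a location), not the lemma as literally stated; as written it is false for, e.g., replacing a closure by a larger closure of the same function type. Two smaller points: your dismissal of the (\textsc{st-outer-empty}) base case appeals to a coherence property in the wrong direction --- the invariant quoted in Theorem~\ref{thm:preservation} says locations in $\storeOuter$ appear in $\Sigma$, not the converse --- so $\rho \in dom(\storeOuter)$ should instead be taken from the call site, where (\textsc{e-assignL})'s premise $(\storeInner_{\rho}^{in.}, s_a) = \storeOuter(\rho)$ guarantees it; and your canonical-forms step (the old inhabitant of a $\text{Ref}\ \tau$ cell must be a location, hence of size $1$) deserves a one-line justification from the value typing rules, since only (\textsc{t-loc}) can assign a $\text{Ref}$ type.
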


\begin{lem}[Store weakening]
    \label{lemma:store-weak}
If 
\begin{align*}
    \begin{array}{l}
    K \mid \Gamma \mid \Sigma \vdash e : \mu \mid \varphi \\
    \Sigma' \supseteq \Sigma\\ 
    \end{array}
\end{align*}
then 
\begin{align*}
    \begin{array}{l}
    K \mid \Gamma \mid \Sigma' \vdash e : \mu \mid \varphi \\ 
    K \mid \Gamma \mid \Sigma' \vdash \storeOuter
    \end{array}
\end{align*}
\end{lem}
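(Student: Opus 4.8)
The plan is to prove the first conclusion, $K \mid \Gamma \mid \Sigma' \vdash e : \mu \mid \varphi$, by induction on the derivation of $K \mid \Gamma \mid \Sigma \vdash e : \mu \mid \varphi$, reading $\Sigma' \supseteq \Sigma$ as the statement that $\Sigma'$ agrees with $\Sigma$ on every location in $\mathrm{dom}(\Sigma)$ and may additionally bind fresh locations. The crucial observation is that $\Sigma$ is consulted in exactly one place across all the typing rules: the location lookups $\tau = \Sigma(l_\rho)$ in (\textsc{t-loc}) and (\textsc{t-use-val}). In these base cases $\Sigma' \supseteq \Sigma$ guarantees $\Sigma'(l_\rho) = \Sigma(l_\rho) = \tau$, so the same rule fires under $\Sigma'$. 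Every other rule either ignores $\Sigma$ entirely or merely threads it unchanged through its premises, so the inductive step is a direct appeal to the induction hypothesis on each subderivation followed by reapplication of the same rule.

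The remaining bookkeeping is to check that nothing else is disturbed by the change of store typing. The binding rules (\textsc{t-}$\lambda$, \textsc{t-}$\Lambda$, \textsc{t-let}, \textsc{t-fix}, and so on) extend $\Gamma$ or $K$ but leave $\Sigma$ fixed, so the induction hypothesis applies verbatim to their premises. The kinding side-conditions such as $K \vdash \tau : \text{Type}$ and $s \sqsupseteq 1$ are independent of $\Sigma$ and carry over unchanged. Most importantly, every rule that produces a compound effect does so via the $\times$ and $\sqcup$ operators, whose validity (Figure~\ref{fig:effect-comp}) is defined purely on effects and sizes and never mentions $\Sigma$; hence the effect $\varphi$ in the conclusion is literally the same effect, and its well-formedness and composition validity are unaffected. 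This makes the lemma non-strengthening in the effect component: $\varphi$ is preserved exactly, not merely up to subsumption.

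For the second conclusion I would read the statement as carrying the implicit hypothesis $K \mid \Gamma \mid \Sigma \vdash \storeOuter$ (the store being well typed before the extension), and prove $K \mid \Gamma \mid \Sigma' \vdash \storeOuter$ by a parallel induction on the store-typing derivation. The rules (\textsc{st-outer-region}) and (\textsc{st-inner-loc}) use $\Sigma$ only to look up $\tau = \Sigma(l_\rho)$ and to type each stored value $v : \tau$; the lookup is preserved as above, and the value typing is preserved by appeal to the first part of the lemma. The $\text{currentSize}(\storeOuter(\rho)) \sqsubseteq s$ side-condition concerns only concrete sizes in the store and does not reference $\Sigma$, so it is untouched.

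I do not expect a genuine obstacle, since this is a standard weakening argument. The one point demanding care is pinning down the meaning of $\Sigma' \supseteq \Sigma$: the lemma is false if $\Sigma'$ were permitted to re-type an already-bound location, so the whole argument rests on the convention that store extension only introduces bindings for fresh locations. This is exactly the setting in which the lemma is invoked in the preservation proof, where the reduction rules (\textsc{e-valL}), (\textsc{e-refL}), (\textsc{e-newrgn}), and (\textsc{e-splitL}) all allocate at $\text{freshLoc}(\rho)$, so $\Sigma'$ genuinely only grows.
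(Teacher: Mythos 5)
Your proposal is correct and matches the paper's approach exactly: the paper's proof is simply ``Straightforward induction,'' and your write-up is a faithful elaboration of that induction, correctly identifying the location-lookup cases as the only ones that touch $\Sigma$. Your observation that the second conclusion needs the implicit hypothesis that the store was well typed under $\Sigma$ is a fair reading of an imprecision in the lemma statement, and consistent with how the lemma is used in the preservation proof.
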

\begin{proof}
Straightforward induction.
\end{proof}

\begin{lem}[Fresh Region Consistency]
    \label{lemma:fresh-region-consist}
If 
\begin{align*}
    \begin{array}{l}
    K \mid \Gamma \mid \Sigma \vdash \textbf{newrgn}\ [s] : (\text{Unit}, \rho) \mid \{\textbf{fresh}\ (\rho, s)\} \\
    K \mid \Gamma \mid \Sigma \vdash \storeOuter
    \end{array}
\end{align*}
then 
\begin{align*}
    \begin{array}{l}
        \rho' \equiv \rho
    \end{array}
\end{align*}
\end{lem}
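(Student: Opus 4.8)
The plan is to combine inversion on the typing derivation with the freshness guarantee built into the reduction rule (\textsc{e-newrgn}). First I would invert the hypothesis $K \mid \Gamma \mid \Sigma \vdash \textbf{newrgn}\ [s] : (\text{Unit}, \rho) \mid \{\textbf{fresh}\ \rho\ s\}$ using Lemma~\ref{lemma:typing-inversion} (case \textsc{t-newrgn}), which is the only rule that can derive it. The crucial feature is that the region $\rho$ appearing in the result type and in the $\textbf{fresh}$ effect is introduced by the premise $K, \rho : \text{Region} \vdash s : \text{Size}$; that is, $\rho$ is a region name that does not occur in the ambient contexts $K$, $\Gamma$, or $\Sigma$, and it is in effect a bound name of the derivation.

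Next I would appeal to the reduction rule (\textsc{e-newrgn}), which sets $\rho' = \text{freshRegion()}$ so that by construction $\rho' \notin dom(\storeOuter)$. The connecting step is the store-typing hypothesis $K \mid \Gamma \mid \Sigma \vdash \storeOuter$: I would extract from it the invariant that the region names occurring in $\storeOuter$ correspond to those mentioned in $\Sigma$, so that every location $l_\rho$ recorded in $\storeOuter$ lies in a region named in $\Sigma$, and $\Sigma$ types only locations of live regions of $\storeOuter$. This lets me transport the freshness of $\rho'$ with respect to $dom(\storeOuter)$ into freshness with respect to the region names the typing derivation must avoid.

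With both $\rho$ fresh for the static contexts and $\rho'$ fresh for $\storeOuter$, and with regions identified only up to renaming of their bound names, I would conclude that the typing derivation remains valid under the renaming $[\rho \mapsto \rho']$, giving $\rho' \equiv \rho$. Equivalently, since the name $\rho$ chosen by (\textsc{t-newrgn}) is arbitrary among fresh names, we are free to take it to be precisely the name produced by $\text{freshRegion()}$.

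The hard part will be making precise the bridge between the static and dynamic notions of freshness, namely formalising that $dom(\storeOuter)$ and the region names occurring in the typing contexts are in correspondence under $K \mid \Gamma \mid \Sigma \vdash \storeOuter$. This is really a well-formedness property of store typings that I would isolate as an auxiliary observation (or fold into the global store-typing invariant already used in the preservation proof), after which the renaming argument is routine. A secondary subtlety is the $\{\textbf{alloc}\ 1\ \rho\}$ component of the effect produced by inversion but suppressed in the statement; I would note that it mentions the same $\rho$ and hence imposes no further constraint on the renaming.
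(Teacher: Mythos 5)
Your proposal cannot be meaningfully compared against the paper's own argument, because the paper does not really give one: its proof of Lemma~\ref{lemma:fresh-region-consist} reads, in full, ``By induction on the definition of $\textbf{fresh}\ (\rho, s)$ and $\text{freshRegion}()$ or something'' --- a placeholder. Your sketch is therefore considerably more substantive than the source, and it identifies the right shape of argument: the lemma is really an $\alpha$-renaming (equivariance) statement, asserting that the statically bound region name introduced by (\textsc{t-newrgn}) and the dynamically generated name from $\text{freshRegion}()$ can be identified because both are fresh for the respective ambient structures, and the store-typing judgement is what keeps those structures in correspondence. That is exactly how this kind of ``consistency of fresh name generation'' lemma is normally discharged, and your observation that the residual $\{\textbf{alloc}\ 1\ \rho\}$ component imposes no extra constraint is correct.

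That said, you should be clear that what you have is a plan rather than a proof, and the part you defer is the entire content of the lemma. Two things would need to be pinned down. First, the renaming step requires an explicit equivariance lemma --- that typing derivations, effects, and store typings are preserved under injective renamings of region names not free in the conclusion --- which the paper never states; without it, ``we are free to take $\rho$ to be the name produced by $\text{freshRegion}()$'' is an assertion, not a deduction. Second, the bridge you flag as ``the hard part'' (that region names in $K$, $\Gamma$, $\Sigma$ and $dom(\storeOuter)$ coincide under $K \mid \Gamma \mid \Sigma \vdash \storeOuter$) is not derivable from the store-typing rules as given in Figure~\ref{fig:st-outer}, which only constrain sizes and the typing of inner stores; it would have to be added as a separate well-formedness invariant and threaded through preservation. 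Neither gap is a flaw in your approach --- they are gaps in the paper's formal development that your sketch correctly exposes --- but a complete proof must supply both.
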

\begin{proof}
    By induction on the definition of $\textbf{fresh}\ (\rho, s)$ and $\text{freshRegion}()$ or something.
\end{proof}


\newpage

\preservation*

\begin{proof}
By induction on a derivation of $K \mid \Gamma \mid \Sigma \vdash e : \mu \mid \varphi$: 
\begin{itemize}

    \item Case (\textsc{t-var}):\\
        From (\textsc{t-var}) we have the following assumptions:
        \begin{enumerate}[i]
            \itemsep0em 
            \item\label{ite:t-var1} $e = x$
            \item\label{ite:t-var2} $x : \mu \in \Gamma$
        \end{enumerate}
        Not possible (no evaluation rules with a variable as the left-hand side).
 
    \item Case (\textsc{t-use-val}):\\
        From (\textsc{t-use-val}) we have the following assumptions:
        \begin{enumerate}[i]
            \itemsep0em 
                \item\label{ite:t-use-val1} $e = l_\rho$
                \item\label{ite:t-use-val2} $\tau = \Sigma(l_\rho)$
        \end{enumerate}
        Not possible (no evaluation rules with a location as the left-hand side).

    \item Case (\textsc{t-newrgn}):\\
        From (\textsc{t-newrgn}) we have the following assumptions:
        \begin{enumerate}[i]
            \itemsep0em 
            \item\label{ite:t-newrgn1} $e = \textbf{newrgn}\ [s]$
            \item\label{ite:t-newrgn2} $s \sqsupseteq 1$
            \item\label{ite:t-newrgn3} $K \vdash s : \text{Size}$
            \item\label{ite:t-newrgn4} $\varphi = \{ \textbf{fresh}\ \rho\ s\} \times \{\textbf{alloc}\ 1\ \rho\}$
        \end{enumerate}
        There is one evaluation rule by which $\langle e \mid \storeOuter
        \rangle \longrightarrow \langle e' \mid \storeOuter' \rangle$ can be derived:
        (\textsc{e-newrgn}).
        \begin{itemize}
            \item Subcase (\textsc{e-newrgn}): \\
                From (\textsc{e-newrgn}) we have the following assumptions:
                \begin{enumerate}[a.]
                    \itemsep0em 
                        \item\label{ite:e-newrgn1} $\rho' = \text{freshRegion}()$
                        \item\label{ite:e-newrgn2} $l_\rho' = \text{freshLoc}(\rho)$
                        \item\label{ite:e-newrgn3} $e' = l_{\rho'}$  
                        \item\label{ite:e-newrgn4} $\storeOuter' = \storeOuter, \rho' \mapsto (l_{\rho'} \mapsto (), s)$
                \end{enumerate}

                From assumptions~\ref{ite:t-newrgn4}, and~\ref{ite:e-newrgn4} we have that $\rho$ and $\rho'$ are fresh region names, 
                and from fresh region name consistency (Lemma~\ref{lemma:fresh-region-consist}), we know
                that $\rho' \equiv \rho$. Thus we can freely substitute $\rho$ for $\rho'$ throughout the proof case.

                Since we make no inductive step in this case  
                and from assumptions~\ref{ite:e-newrgn3},~\ref{ite:e-newrgn4}, along with the derivation:
                \begin{equation}
                    \inferrule*[right=t-unit]{\quad}
                    {K \mid \Gamma \mid \Sigma \vdash () : \text{Unit}}
                \end{equation}
                we let $\Sigma'$ equal $\Sigma$ extended with a single location $l_\rho$,
                such that $\text{Unit} = \Sigma'(l_\rho)$ \hypertarget{newrgn1}{(*)}. 
                From this we can construct the following typing derivation for $e'$: 
                \begin{align*}
                    \inferrule*[right=t-use-val]
                        { \inferrule*[right=t-loc]
                            {  \texorpdfstring{\protect\hyperlink{newrgn1}{\text{(*)}}}{}\qquad \inferrule*[right=$\kappa$-unit]{\quad}{K \vdash \text{Unit} : \text{Type}}}
                            {K \mid \Gamma \mid \Sigma \vdash l_\rho : \text{Unit} }}
                        { K \mid \Gamma \mid \Sigma' \vdash l_\rho : (\text{Unit}, \rho) \mid \{\bot\} }
                \end{align*}
                From which we have that $\varphi' = \{\bot \}$ and from assumption~\ref{ite:t-newrgn4} that 
                $\varphi =  \{ \textbf{fresh}\ (\rho, s) \}$. From this, we can construct the following derivation for 
                $\{ \bot \} \sqsubseteq \{ \textbf{fresh}\ (\rho, s) \} $:

                \begin{equation}
                    \inferrule*[right=$\kappa$-alloc]
                                {
                                    \inferrule*[right=$\kappa$-reg]
                                        {\quad}
                                        {K \vdash \rho : \text{Region}}
                                    \qquad
                                    \inferrule*[right=$\kappa$-size]
                                        {\quad}
                                        {K \vdash 1 : \text{Size}}
                                }
                                {K \vdash \{\textbf{alloc}\ 1\ \rho\} : \text{Effect}}
                    \label{eq:newrgn-phi-1}
                \end{equation}

                \begin{align*} 
                    \inferrule*[right=sb-$\bot$]  
                        {\inferrule*[right=$\kappa$-times]
                            {
                            \inferrule*[right=$\kappa$-fresh]
                                {\inferrule*[right=$\kappa$-size]
                                    {\quad}
                                    {K \vdash s : \text{Size}}}
                                {K \vdash \{\textbf{fresh}\ \rho\ s \} : \text{Effect}}
                            \qquad 
                            \eqref{eq:newrgn-phi-1} 
                            }
                            {K \vdash \{ \textbf{fresh}\ \rho\ s \} \times \{\textbf{alloc}\ 1\ \rho\} : \text{Effect} } }
                        {K \vdash \{ \bot \} \sqsubseteq \{ \textbf{fresh}\ \rho\ s \} \times \{\textbf{alloc}\ 1\ \rho \} : \text{Effect} }                     
                \end{align*}

                From our inductive hypothesis we have that:
                \begin{align*}
                    K \mid \Gamma \mid \Sigma \vdash \storeOuter
                \end{align*}
                i.e., that the store prior to evaluation is well-typed~(\hypertarget{newrgn2}{\textdagger}). 
                As per assumption~\ref{ite:e-newrgn4}, (\textsc{e-newrgn}) extends the store with a new region $\rho$ of 
                size $s$, where from assumption~\ref{ite:t-newrgn2} we have that $s \sqsupseteq 1$~(\hypertarget{newrgn3}{\textdagger\textdagger}).
                This region contains a pointer to value $()$, and from Definition~\ref{def:sizeOf}, we know that 
                this value has the size $1$. 
                
                Thus, as $\rho$ contains only one element, we have from 
                Definition~\ref{def:region-alloc} that currentSize($\rho$) = $1$, which we can use with $\texorpdfstring{\protect\hyperlink{newrgn3}{\text{(\textdagger\textdagger)}}}{}$ to satisfy the store 
                typing constraint $1 \sqsubseteq s$ (\hypertarget{newrgn4}{\textdagger\textdagger\textdagger}). 
            
                From the above we can construct the following typing derivation for the output store $\storeOuter'$: 

                \begin{equation}
                    \inferrule*[right=loc]
                        {
                            \texorpdfstring{\protect\hyperlink{newrgn1}{\text{(*)}}}{}
                            \qquad 
                            \inferrule*[right=emptyInner]
                                {\quad}
                                {K \mid \Gamma \mid \Sigma \vdash (\emptyset, s) }
                            \qquad 
                            \inferrule*[right=t-unit]
                                {\quad}
                                {K \mid \Gamma \mid \Sigma \vdash () : \text{Unit} }
                            }
                        {K \mid \Gamma \mid \Sigma \vdash (l_\rho \mapsto (), s) } 
                    \label{eq:newrgn-store-1}
                \end{equation}

                \begin{align*}
                    \inferrule*[right=inner]
                        {
                        \texorpdfstring{\protect\hyperlink{newrgn2}{\text{(\textdagger)}}}{}
                        \qquad 
                        \eqref{eq:newrgn-store-1}
                        \qquad 
                        \texorpdfstring{\protect\hyperlink{newrgn4}{\text{(\textdagger\textdagger\textdagger)}}}{}
                        }
                        {K \mid \Gamma \mid \Sigma \vdash \storeOuter, \rho \mapsto (l_\rho \mapsto (), s) }             
                \end{align*}
                
        \end{itemize}

    \item Case (\textsc{t-freergn}):\\
        From (\textsc{t-freergn}) we have the following assumptions:
        \begin{enumerate}[i]
            \itemsep0em 
                \item\label{ite:t-freergn1} $e = \textbf{freergn}\ e_1$
                \item\label{ite:t-freergn2} $K \mid \Gamma \mid \Sigma \vdash e_1 : (\tau, \rho) \mid \varphi_1$
                \item\label{ite:t-freergn3} $\varphi = \varphi_1 \times \{\textbf{free}\ \rho\}$
        \end{enumerate}
        There are two evaluation rules by which $\langle e \mid \storeOuter
        \rangle \longrightarrow \langle e' \mid \storeOuter' \rangle$ can be derived:
        (\textsc{e-freergn}) and (\textsc{e-freergnL}).
        \begin{itemize}

            \item Subcase (\textsc{e-freergn}): \\
                From (\textsc{e-freergn}) we have the following assumptions:
                \begin{enumerate}[a.]
                    \itemsep0em 
                        \item\label{ite:e-freergn1} $\eformat{e_1}{\storeOuter}{e'_1}{\storeOuter'}$
                        \item\label{ite:e-freergn2} $e' = \textbf{freergn}\ e'_1$
                \end{enumerate}

                By induction on assumption~\ref{ite:t-freergn2} we have: 
                \begin{equation}
                    K \mid \Gamma \mid \Sigma' \vdash e'_1 : (\tau, \rho) \mid \varphi_1' 
                    \label{eq:freergn1}
                \end{equation}
                Thus, let $\Sigma'$ equal any store typing such that $\Sigma' \supseteq \Sigma$ and the above typing holds.
                From this, we can construct the following typing derivation for $e'$:
                \begin{align*}
                    \inferrule*[right=t-freergn]
                        {
                           \eqref{eq:freergn1} }
                        {K \mid \Gamma \mid \Sigma' \vdash \textbf{freergn}\ e'_1 : (\text{Unit}, \globalRegion) \mid  \varphi_1' \times \{\textbf{free}\ \rho\}   }
                \end{align*}
                From which, we have that $\varphi' = \varphi'_1 \times \{ \textbf{free}\ s\}$ and from assumption~\ref{ite:t-freergn3} that
                $\varphi = \varphi_1 \times \{\textbf{free}\ \rho\} $. 
                 
                By the typability of assumption~\ref{ite:t-freergn2} we have that 
                $K \vdash \varphi_1 : \text{Effect}$ \hypertarget{freergn1}{(*)} and by induction that $\varphi'_1 \sqsubseteq \varphi_1$ \hypertarget{freergn2}{(**)}.
                From this, we can construct the following derivation for $\varphi'_1 \times \{ \textbf{free}\ s\} \sqsubseteq \varphi_1 \times \{ \textbf{free}\ s\}$: 
                \begin{equation}
                    \inferrule*[right=sb-$\times_{below}^{2}$]
                            {
                                \texorpdfstring{\protect\hyperlink{freergn1}{\text{(*)}}}{}  
                                \qquad 
                                \inferrule*[right=sb-$\equiv$]
                                {
                                    \inferrule*[right=eq-refl]
                                    {\inferrule*[right=$\kappa$-free]
                                    {\inferrule*[right=$\kappa$-reg]
                                        {\quad}
                                        {K \vdash \rho : \text{Region}}}
                                    {K \vdash \{\textbf{free}\ \rho\} : \text{Effect} }}{K \vdash \{\textbf{free}\ \rho\} \equiv \{\textbf{free}\ \rho\} : \text{Effect}} }
                                {K \vdash \{\textbf{free}\ \rho\} \sqsubseteq \{\textbf{free}\ \rho\} : \text{Effect}} 
                                } 
                            {K \vdash \{ \textbf{free}\ \rho\} \sqsubseteq  \varphi_1 \times \{ \textbf{free}\ \rho\} : \text{Effect}} 
                    \label{eq:e-freergn1}
                \end{equation}

                \begin{equation} 
                    \inferrule*[right=sb-$\times_{below}^{1}$]
                         { \texorpdfstring{\protect\hyperlink{freergn2}{\text{(**)}}}{}
                        \qquad \inferrule*[right=$\kappa$-free]
                            {\inferrule*[right=$\kappa$-reg]
                                {\quad}
                                {K \vdash \rho : \text{Region} }} 
                            {K \vdash \{\textbf{free}\ \rho\} : \text{Effect}} }
                        { K \vdash \varphi'_1 \sqsubseteq \varphi_1  \times \{\textbf{free}\ s\}  : \text{Effect}}
                    \label{eq:e-freergn2}
                \end{equation}

                \begin{align*} 
                    \inferrule*[right=sb-$\times_{above}$]
                        {   \eqref{eq:e-freergn1} 
                            \qquad 
                            \eqref{eq:e-freergn2}
                            }
                        {K \vdash \varphi'_1 \times \{ \textbf{free}\ \rho\}   \sqsubseteq  \varphi_1 \times \{ \textbf{free}\ \rho\} : \text{Effect} }
                \end{align*}
        
                Finally, we can type the output store $\storeOuter'$ by induction:
                \begin{align*}
                    K \mid \Gamma \mid \Sigma' \vdash \storeOuter' 
                \end{align*}

            \item Subcase (\textsc{e-freergnL}): \\
                From (\textsc{e-freergnL}) we have the following assumptions:
                \begin{enumerate}[a.]
                    \itemsep0em 
                        \item\label{ite:e-freergnL1} $e' = l_{\globalRegion}^1$
                        \item\label{ite:e-freergnL2} $\storeOuter' = \storeOuter \setminus \rho$
                \end{enumerate}
                Since we make no inductive step in this case, we can assume that 
                $\Sigma' = \Sigma$. 

                From the premises of our theorem, we have that $\text{Unit} = \Sigma'(l^{1}_{\globalRegion})$ \hypertarget{freergnL1}{(*)}. From this we can construct the following typing derivation for $e'$:

                \begin{align*}
                    \inferrule*[right=t-use-val]{
                        \inferrule*[right=t-loc]
                            {\texorpdfstring{\protect\hyperlink{freergnL1}{\text{(*)}}}{} \qquad \inferrule*[right=$\kappa$-unit]{\quad}{K \vdash \text{Unit} : \text{Type}}}
                            {K \mid \Gamma \mid \Sigma' \vdash l_{\globalRegion}^1 : \text{Unit} } }
                        {K \mid \Gamma \mid  \Sigma' \vdash l_{\globalRegion}^1 : (\text{Unit},\globalRegion) \mid \{\bot \} } 
                \end{align*}
                From which we have that $\varphi' = \{\bot \}$ and from assumption~\ref{ite:t-freergn3} that 
                $\varphi = \varphi_1 \times \{ \textbf{free}\ \rho\}$. From the typability of assumption~\ref{ite:t-freergn2} we
                have that $K \vdash \varphi_1 : \text{Effect}$ \hypertarget{freergnL2}{(**)}, from which we can construct the 
                following derivation for $ \{\bot \} \sqsubseteq \varphi_1 \times \{ \textbf{free}\ \rho\}$:

                \begin{align*}
                    \inferrule*[right=sb-$\bot$]
                        {\inferrule*[right=$\kappa$-$\times$]
                        {
                               \texorpdfstring{\protect\hyperlink{freergnL2}{\text{(**)}}}{}     
                        \qquad
                            \inferrule*[right=$\kappa$-free]
                            {\inferrule*[right=$\kappa$-reg]
                                {\quad}
                                {K \vdash \rho : \text{Region}}}
                            {K \vdash \{\textbf{free}\ \rho\} : \text{Effect}}
                        }
                        {K \vdash \varphi_1 \times \{\textbf{free}\ \rho\} : \text{Effect}}
                        }
                        {K \vdash \{\bot\} \sqsubseteq \varphi_1 \times \{\textbf{free}\ \rho\} : \text{Effect} }    
                \end{align*}
                Using store weakening (Lemma~\ref{lemma:store-weak}) as we no longer require 
                the ability to type $\rho$ despite its presence in $\Sigma'$, we can type the output 
                store $\storeOuter'$: 
                \begin{align*}
                    K \mid \Gamma \mid \Sigma' \vdash \storeOuter \setminus \rho
                \end{align*}
        \end{itemize}

    \item Case (\textsc{t-split}):\\
        From (\textsc{t-split}) we have the following assumptions:
        \begin{enumerate}[i]
            \itemsep0em 
            \item\label{ite:t-split1} $e = \textbf{split}\ [s]\ e'$
            \item\label{ite:t-split2} $K \mid \Gamma \mid \Sigma \vdash e_1 : (\tau, \rho) \mid \varphi_1 $
            \item\label{ite:t-split3} $s \sqsupseteq 1$ 
            \item\label{ite:t-split4} $K \vdash s : \text{Size}$
            \item\label{ite:t-split5} $\varphi = \varphi_1 \times \{ \textbf{split}\ \rho\ s\ \rho' \} \times \{ \textbf{alloc}\ 1\ \rho' \}  $
        \end{enumerate}
        There are two evaluation rules by which $\langle e \mid \storeOuter
        \rangle \longrightarrow \langle e' \mid \storeOuter' \rangle$ can be derived:
        (\textsc{e-split}) and (\textsc{e-splitL}).
        \begin{itemize}

            \item Subcase (\textsc{e-split}): \\
                From (\textsc{e-split}) we have the following assumptions:
                \begin{enumerate}[a.]
                    \itemsep0em 
                    \item\label{ite:e-split1} $\eformat{e_1}{\storeOuter}{e'_1}{\storeOuter'}$
                    \item\label{ite:e-split2} $e' = \textbf{split}\ [s]\ e'_1$
                \end{enumerate}
                By induction on assumption~\ref{ite:t-split2} we have:
                \begin{equation}
                    K \mid \Gamma \mid \Sigma' \vdash e'_1 : (\tau, \rho) \mid \varphi'_1
                    \label{eq:split1}
                \end{equation}
                Let $\Sigma'$ equal any store typing such that $\Sigma' \supseteq
                \Sigma$. 
                From this, along with assumptions~\ref{ite:t-split3} and~\ref{ite:t-split4}, we can construct the following typing derivation for $e'$: 
                \begin{align*}
                    \hspace{-3em}
                    \inferrule*[right=t-split]
                        {\eqref{eq:split1} \qquad \ref{ite:t-split3} \qquad \ref{ite:t-split4}}
                        {K \mid \Gamma \mid \Sigma' \vdash \textbf{split}\ [s]\ e'_1 : (\text{Unit}, \rho) \mid \varphi'_1  \times \{ \textbf{split}\ \rho\ s\ \rho' \} \times \{ \textbf{alloc}\ 1\ \rho'\}  }
                \end{align*}

                From which we have that $\varphi' = \varphi'_1  \times \{ \textbf{split}\ \rho\ s\ \rho' \} \times \{\textbf{alloc}\ 1\ \rho' \} $ 
                and from assumption~\ref{ite:t-split5} that 
                $\varphi =  \varphi_1 \times \{ \textbf{split}\ \rho\ s\ \rho' \} \times \{\textbf{alloc}\ 1\ \rho'\} $. 
                From the typability of assumption~\ref{ite:t-split2}, we have that $K \vdash \varphi_1 : \text{Effect}$ \hypertarget{split1}{(*)},             
                and by induction that $\varphi'_1 \sqsubseteq \varphi_1$ \hypertarget{split2}{(**)}. From this, we can construct the following derivation for 
                $ \varphi_1 \times \{ \textbf{split}\ \rho\ s\ \rho' \} \times \{\textbf{alloc}\ 1\ \rho'\}
                \sqsubseteq \varphi'_1 \times \{ \textbf{split}\ \rho\ s\ \rho' \} \times \{\textbf{alloc}\ 1\ \rho'\} $:

                \begin{equation}
                    \inferrule*[right=$\kappa$-split]
                                                        {
                                                            \inferrule*[right=$\kappa$-reg]
                                                                {\quad}
                                                                {K \mid \Gamma \mid \Sigma \vdash \rho : \text{Region}}
                                                            \qquad 
                                                            \inferrule*[right=$\kappa$-size]
                                                                {\quad}
                                                                {K \mid \Gamma \mid \Sigma \vdash s : \text{Size}}
                                                        } 
                                                        {K \mid \Gamma \mid \Sigma' \vdash \{\textbf{split}\ \rho\ s\ \rho' \} : \text{Effect} }
                    \label{eq:split15}
                \end{equation}

                \begin{equation}
                            \inferrule*[right=sb-$\times_{below}^{1}$]
                                                {
                                                    \texorpdfstring{\protect\hyperlink{split2}{\text{(**)}}}{}
                                                    \qquad
                                                    \eqref{eq:split15} 
                                                }
                                                {K \mid \Gamma \mid \Sigma \vdash \varphi'_1 \sqsubseteq \varphi_1 \times \{\textbf{split}\ \rho\ s\ \rho'\} : \text{Effect} }
                    \label{eq:split13}
                \end{equation}

                \begin{equation}
                        \inferrule*[right=$\kappa$-alloc]
                                                {
                                                    \inferrule*[right=$\kappa$-reg]
                                                        {\quad}
                                                        {K \mid \Gamma \mid \Sigma \vdash \rho' : \text{Region}}
                                                    \qquad 
                                                    \inferrule*[right=$\kappa$-size]
                                                        {\quad}
                                                        {K \mid \Gamma \mid \Sigma \vdash 1 : \text{Size}}  
                                                }
                                                {K \mid \Gamma \mid \Sigma \vdash \{\textbf{alloc}\ 1\ \rho'\} : \text{Effect} }
                    \label{eq:split14}
                \end{equation}

                \begin{equation}
                        \inferrule*[right=sb-$\times_{below}^{1}$]
                                        {
                                            \eqref{eq:split13} 
                                            \qquad 
                                            \eqref{eq:split14}    
                                        }
                                        {K \mid \Gamma \mid \Sigma \vdash \varphi'_1 \sqsubseteq \varphi_1 \times \{\textbf{split}\ \rho\ s\ \rho' \} \times \{\textbf{alloc}\ 1\ \rho'\} : \text{Effect}  }
                    \label{eq:split8}
                \end{equation}

                \begin{equation}
                            \inferrule*[right=sb-$\equiv$]
                                                        {
                                                            \inferrule*[right=eq-refl]
                                                                {
                                                                    \inferrule*[right=$\kappa$-split]
                                                                        {
                                                                            \inferrule*[right=$\kappa$-reg]
                                                                                {\quad}
                                                                                {K \vdash \rho : \text{Region}}
                                                                            \qquad 
                                                                            \inferrule*[right=$\kappa$-size]
                                                                                {\quad}
                                                                                {K \vdash s : \text{Size}}
                                                                        }
                                                                        {K \mid \Gamma \mid \Sigma \vdash \{\textbf{split}\ \rho\ s\ \rho' \} : \text{Effect} }
                                                                }
                                                                {K \mid \Gamma \mid \Sigma \vdash \{\textbf{split}\ \rho\ s\ \rho' \} \equiv \{\textbf{split}\ \rho\ s\ \rho' \} : \text{Effect} }
                                                        }
                                                        {K \mid \Gamma \mid \Sigma \vdash \{\textbf{split}\ \rho\ s\ \rho' \} \sqsubseteq \{\textbf{split}\ \rho\ s\ \rho' \} : \text{Effect} }
                    \label{eq:split12}
                \end{equation}

                \begin{equation}
                        \inferrule*[right=sb-$\times_{below}^{2}$]
                                                {
                                                    \texorpdfstring{\protect\hyperlink{split1}{\text{(*)}}}{}
                                                    \qquad 
                                                    \eqref{eq:split11} 
                                                }
                                                {K \mid \Gamma \mid \Sigma \vdash  \{\textbf{split}\ \rho\ s\ \rho' \} \sqsubseteq \varphi_1 \times \{\textbf{split}\ \rho\ s\ \rho' \} : \text{Effect} }
                    \label{eq:split10}
                \end{equation}

                \begin{equation}
                            \inferrule*[right=$\kappa$-alloc]
                                                {
                                                    \inferrule*[right=$\kappa$-reg]
                                                        {\quad}
                                                        {K \mid \Gamma \mid \Sigma \vdash \rho' : \text{Region}}
                                                    \qquad 
                                                    \inferrule*[right=$\kappa$-size]
                                                        {\quad}
                                                        {K \mid \Gamma \mid \Sigma \vdash 1 : \text{Size}}
                                                }
                                                {K \mid \Gamma \mid \Sigma \vdash \{\textbf{alloc}\ 1\ \rho'\} : \text{Effect}  }
                    \label{eq:split11}
                \end{equation}

                \begin{equation}
                        \inferrule*[right=sb-$\times_{below}^{1}$]
                                        {
                                            \eqref{eq:split10} 
                                            \qquad 
                                            \eqref{eq:split11}                                         
                                        }
                                        {K \mid \Gamma \mid \Sigma \vdash \{\textbf{split}\ \rho s\ \rho' \}  \sqsubseteq \varphi_1 \times \{\textbf{split}\ \rho\ s\ \rho' \} \times \{\textbf{alloc}\ 1\ \rho'\} : \text{Effect} }
                    \label{eq:split9}
                \end{equation}

                \begin{equation}
                    \inferrule*[right=sb-$\times_{above}$]
                                {
                                    \eqref{eq:split8} 
                                    \qquad 
                                    \eqref{eq:split9}                                    
                                }
                                {K \mid \Gamma \mid \Sigma \vdash \varphi'_1 \times \{\textbf{split}\ \rho s\ \rho' \}  \sqsubseteq \varphi_1 \times \{\textbf{split}\ \rho\ s\ \rho' \} \times \{\textbf{alloc}\ 1\ \rho'\} : \text{Effect} }
                    \label{eq:split2}
                \end{equation}

                \begin{equation}
                            \inferrule*[right=$\kappa$-split]
                                                {
                                                    \inferrule*[right=$\kappa$-reg]
                                                        {\quad}
                                                        {K \mid \Gamma \mid \Sigma \vdash \rho : \text{Region}}
                                                    \qquad 
                                                    \inferrule*[right=$\kappa$-size]
                                                        {\quad}
                                                        {K \mid \Gamma \mid \Sigma \vdash s : \text{Size}}
                                                }
                                                {K \mid \Gamma \mid \Sigma \vdash \{\textbf{split}\ \rho\ s\ \rho' \} : \text{Effect} }
                    \label{eq:split5}
                \end{equation}

                \begin{equation}
                                \inferrule*[right=eq-refl]
                                                        {
                                                            \inferrule*[right=$\kappa$-alloc]
                                                                {
                                                                    \inferrule*[right=$\kappa$-reg]
                                                                        {\quad}
                                                                        {K \mid \Gamma \mid \Sigma \vdash \rho' : \text{Region}}
                                                                    \qquad 
                                                                    \inferrule*[right=$\kappa$-size]
                                                                        {\quad}
                                                                        {K \mid \Gamma \mid \Sigma \vdash 1 : \text{Size}}
                                                                }
                                                                {K \mid \Gamma \mid \Sigma \vdash \{\textbf{alloc}\ 1\ \rho' \} : \text{Effect} }
                                                        }
                                                        {K \mid \Gamma \mid \Sigma \vdash \{\textbf{alloc}\ 1\ \rho'\} \equiv \{\textbf{alloc}\ 1\ \rho'\} : \text{Effect} }
                    \label{eq:split7}
                \end{equation}

                \begin{equation}
                            \inferrule*[right=sb-$\equiv$]
                                                {
                                                    \eqref{eq:split7} 
                                                }
                                                {K \mid \Gamma \mid \Sigma \vdash \{\textbf{alloc}\ 1\ \rho'\} \sqsubseteq \{\textbf{alloc}\ 1\ \rho'\} : \text{Effect} }
                    \label{eq:split6}
                \end{equation}

                \begin{equation}
                    \inferrule*[right=sb-$\times_{below}^{2}$]
                                        {
                                            \eqref{eq:split5}
                                            \qquad 
                                            \eqref{eq:split6} 
                                        }
                                        {K \mid \Gamma \mid \Sigma \vdash \{\textbf{alloc}\ 1\ \rho'\} \sqsubseteq \{\textbf{split}\ \rho\ s\ \rho' \} \times \{\textbf{alloc}\ 1\ \rho'\} : \text{Effect} }
                    \label{eq:split4}
                \end{equation}

                \begin{equation}
                        \inferrule*[right=sb-$\times_{below}^{2}$]
                                {
                                    \texorpdfstring{\protect\hyperlink{split1}{\text{(*)}}}{}
                                    \qquad 
                                    \eqref{eq:split4} 
                                }
                                {K \mid \Gamma \mid \Sigma \vdash \{\textbf{alloc}\ 1\ \rho'\} \sqsubseteq \varphi_1 \times \{\textbf{split}\ \rho\ s\ \rho' \} \times \{\textbf{alloc}\ 1\ \rho'\} : \text{Effect}  }
                    \label{eq:split3}
                \end{equation}

                \begin{align*}
                    \hspace{-6em}
                    \inferrule*[right=sb-$\times_{above}$]
                        {
                            \eqref{eq:split2} 
                            \qquad 
                            \eqref{eq:split3} 
                        }
                        {K \mid \Gamma \mid \Sigma \vdash \varphi'_1 \times \{\textbf{split}\ \rho s\ \rho' \} \times \{\textbf{alloc}\ 1\ \rho'\} \sqsubseteq \varphi_1 \times \{\textbf{split}\ \rho\ s\ \rho' \} \times \{\textbf{alloc}\ 1\ \rho'\} : \text{Effect} }
                \end{align*}

                Finally, we can type the output store $\storeOuter'$ by induction:
                \begin{align*}
                    K \mid \Gamma \mid \Sigma' \vdash \storeOuter'
                \end{align*}

            \item Subcase (\textsc{e-splitL}): \\
                From (\textsc{e-splitL}) we have the following assumptions:
                \begin{enumerate}[a.]
                    \itemsep0em 
                        \item\label{ite:e-splitL1} $\rho'' = \text{freshRegion()}$
                        \item\label{ite:e-splitL2} $(\storeInner_{\rho}^{in.}) = \storeOuter(\rho)$
                        \item\label{ite:e-splitL3} $l_{\rho''} = \text{freshLoc}(\rho'')$
                        \item\label{ite:e-splitL4} $e' = l_{\rho''}$
                        \item\label{ite:e-splitL5} $\storeOuter' = [\rho \mapsto (\storeInner_{\rho}^{in.}, s_a - s), \rho'' \mapsto (l_{\rho''} \mapsto (), s)]\storeOuter$
                \end{enumerate}

                From assumptions~\ref{ite:t-split5}, and~\ref{ite:e-splitL1} we have that $\rho'$ and $\rho''$ are 
                fresh region names, and from fresh region name consistency (Lemma~\ref{lemma:fresh-region-consist}), we know
                that $\rho'' \equiv \rho'$. Thus we can freely substitute $\rho'$ for $\rho''$ throughout the proof case.

                Since we make no inductive step in this case
                and from assumptions~\ref{ite:e-splitL3},~\ref{ite:e-splitL5}, along with the derivation:
                \begin{equation}
                    \inferrule*[right=t-unit]{\quad}
                    {K \mid \Gamma \mid \Sigma \vdash () : \text{Unit}}
                \end{equation}
                we let $\Sigma'$ equal $\Sigma$ extended with a single location $l_\rho$,
                such that $\tau = \Sigma'(l_\rho)$ \hypertarget{splitL1}{(*)}. 
                From this we can construct the following typing derivation for $e'$: 
                \begin{align*}
                    \inferrule*[right=t-use-val]
                        { \inferrule*[right=t-loc]
                            {  \texorpdfstring{\protect\hyperlink{splitL1}{\text{(*)}}}{}\qquad \inferrule*[right=$\kappa$-unit]{\quad}{K \vdash \text{Unit} : \text{Type}}}
                            {K \mid \Gamma \mid \Sigma \vdash l_{\rho'} : \text{Unit} }}
                        { K \mid \Gamma \mid \Sigma' \vdash l_{\rho'} : (\text{Unit}, \rho') \mid \{\bot\} }
                \end{align*}
                From which we have that $\varphi' = \{\bot \}$ and from assumption~\ref{ite:t-split5} that 
                $\varphi =  \varphi_1 \times \{\textbf{split}\ \rho\ s\ \rho' \} \times \{\textbf{alloc}\ 1\ \rho'\}$. 
                From the typability of assumption~\ref{ite:t-split2}, we have that $K \vdash \varphi_1 : \text{Effect}$ \hypertarget{splitL2}{(**)},             
                from which we can construct the following derivation for 
                $\{ \bot \} \sqsubseteq \varphi_1 \times \{ \textbf{split}\ \rho\ s\ \rho' \} \times \{\textbf{alloc}\ 1\ \rho'\}$:

                \begin{equation}
                    \inferrule*[right=$\kappa$-$\times$]
                        {
                            \texorpdfstring{\protect\hyperlink{splitL2}{\text{(**)}}}{}
                            \qquad 
                            \inferrule*[right=$\kappa$-split]
                                {
                                    \inferrule*[right=$\kappa$-reg]
                                        {\quad}
                                        {K \mid \Gamma \mid \Sigma \vdash \rho : \text{Region}}
                                    \qquad 
                                    \inferrule*[right=$\kappa$-size]
                                        {\quad}
                                        {K \mid \Gamma \mid \Sigma \vdash s : \text{Size}}
                                }
                                {K \mid \Gamma \mid \Sigma \vdash \{\textbf{split}\ \rho\ s\ \rho'\} : \text{Effect}}
                        }
                        {K \mid \Gamma \mid \Sigma \vdash \varphi_1 \times \{ \textbf{split}\ \rho\ s\ \rho' \} : \text{Effect}}
                    \label{eq:splitL2}
                \end{equation}

                \begin{align*}
                    \inferrule*[right=sb-$\bot$]
                        {\inferrule*[right=$\kappa$-$\times$]
                            {\eqref{eq:splitL2}
                            \qquad
                            \inferrule*[right=$\kappa$-alloc]
                                {
                                    \inferrule*[right=$\kappa$-reg]
                                        {\quad}
                                        {K \vdash \rho' : \text{Region}}
                                    \qquad
                                    \inferrule*[right=$\kappa$-size]
                                        {\quad}
                                        {K \vdash 1 : \text{Size}}
                                }
                                {K \vdash \{\textbf{alloc}\ 1\ \rho'\} : \text{Effect}}                            
                            }
                            {K \vdash \varphi_1 \times \{ \textbf{split}\ \rho\ s\ \rho' \} \times \{\textbf{alloc}\ 1\ \rho'\} : \text{Effect} }
                            }
                        {K \vdash \{ \bot \} \sqsubseteq \varphi_1 \times \{ \textbf{split}\ \rho\ s\ \rho' \} \times \{\textbf{alloc}\ 1\ \rho'\} : \text{Effect} }
                \end{align*}

        \end{itemize}

    \item Case (\textsc{t-copy}):\\
        From (\textsc{t-copy}) we have the following assumptions:
        \begin{enumerate}[i]
            \itemsep0em 
            \item\label{ite:t-copy1} $e = \textbf{copy}\ e_1\ \textbf{into}\ e_2$
            \item\label{ite:t-copy2} $K \mid \Gamma \mid \Sigma \vdash e_1 : (\tau, \rho) \mid \varphi_1 $
            \item\label{ite:t-copy3} $K \mid \Gamma \mid \Sigma \vdash e_2 : (\tau', \rho') \mid \varphi_2 $
            \item\label{ite:t-copy4} $\varphi = \varphi_1 \times \varphi_2 \times \{\textbf{alloc}\ 1\ \rho' \} $
        \end{enumerate}
        There are three evaluation rules by which $\langle e \mid \storeOuter
        \rangle \longrightarrow \langle e' \mid \storeOuter' \rangle$ can be derived:
        (\textsc{e-copy1}), (\textsc{e-copy2}) and (\textsc{e-copyL}).
        \begin{itemize}

            \item Subcase (\textsc{e-copy1}):\\ 
                From (\textsc{e-copy1}) we have the following assumptions:
                \begin{enumerate}[a.]
                    \itemsep0em 
                    \item\label{ite:e-copy11} $\eformat{e_1}{\storeOuter}{e'_1}{\storeOuter'}$
                    \item\label{ite:e-copy12} $e' = \textbf{copy}\ e'_1\ \textbf{into}\ e_2$
                \end{enumerate}

                By induction on assumption~\ref{ite:t-copy2} we have that:
                \begin{equation}
                    K \mid \Gamma \mid \Sigma' \vdash e'_1 : (\tau, \rho) \mid \varphi'_1 
                    \label{eq:copy11}
                \end{equation}
                which we can use with assumption~\ref{ite:t-copy3} to obtain: 
                \begin{equation}
                    K \mid \Gamma \mid \Sigma' \vdash e_2 : (\tau', \rho') \mid \varphi_2 
                    \label{eq:copy12}
                \end{equation}
                Let $\Sigma'$ equal any context such that $\Sigma' \supseteq
                \Sigma$ and the above typings hold.
                From this we can construct the following typing derivation for 
                $e'$:
                \begin{align*}
                    \inferrule*[right=t-copy]
                        { \eqref{eq:copy11}
                            \qquad
                            \eqref{eq:copy12}
                        }
                        {K \mid \Gamma \mid \Sigma' \vdash \textbf{copy}\ e'_1\ \textbf{into}\ e_2 : (\tau, \rho') \mid \varphi'_1 \times \varphi_2 \times \{\textbf{alloc}\ 1\ \rho' \} }
                \end{align*}
                From this we have that $\varphi' = \varphi'_1 \times \varphi_2 \times \{\textbf{alloc}\ 1\ \rho' \} $ and from assumption
                ~\ref{ite:t-copy4} that $\varphi = \varphi_1 \times \varphi_2 \times \{\textbf{alloc}\ 1\ \rho' \} $. By the typability of
                assumption~\ref{ite:t-copy2} we have that $K \vdash \varphi_1 : \text{Effect}$ \hypertarget{copy11}{(*)}, from the typability of assumption~\ref{ite:t-copy3}
                we have that $K \vdash \varphi_2 : \text{Effect}$ \hypertarget{copy12}{(**)}, and by induction we have that $\varphi'_1 \sqsubseteq \varphi_1$ \hypertarget{copy13}{(***)}. From this we can construct the following derivation for $\varphi'_1 \times \varphi_2 \times \{\textbf{alloc}\ 1\ \rho' \} \sqsubseteq \varphi_1 \times \varphi_2 \times \{\textbf{alloc}\ 1\ \rho' \}$:

                \begin{equation}
                    \inferrule*[right=$\kappa$-$\times$]
                                        {
                                            \qquad
                                            \inferrule*[right=$\kappa$-alloc]
                                                {   \inferrule*[right=$\kappa$-reg]  
                                                        {\quad} 
                                                        {K \vdash \rho' : \text{Region}}
                                                    \qquad 
                                                    \inferrule*[right=$\kappa$-size] 
                                                        {\quad}
                                                        {K \vdash 1 : \text{Size}}
                                                    }
                                                {K \vdash \{\textbf{alloc}\ 1\ \rho' \} : \text{Effect} }
                                        }
                                        {K \vdash \varphi_2 \times  \{\textbf{alloc}\ 1\ \rho' \} : \text{Effect}  }
                    \label{eq:copy15}
                \end{equation}

                \begin{equation}
                        \inferrule*[right=sb-$\times_{below}^{}$]
                                {
                                    \texorpdfstring{\protect\hyperlink{copy13}{\text{(***)}}}{}
                                    \qquad 
                                    \eqref{eq:copy15}
                                }
                                {K \vdash \varphi'_1 \sqsubseteq  \varphi_1 \times \varphi_2 \times \{\textbf{alloc}\ 1\ \rho' \} : \text{Effect}  }
                    \label{eq:copy13}             
                \end{equation}

                \begin{equation}
                        \inferrule*[right=sb-$\equiv$]
                                                        {\inferrule*[right=eq-refl]
                                                            {
                                                                \texorpdfstring{\protect\hyperlink{copy12}{\text{(**)}}}{}
                                                            }
                                                            {K \vdash \varphi_2 \equiv \varphi_2 : \text{Effect}}}
                                                        {K \vdash \varphi_2 \sqsubseteq \varphi_2 : \text{Effect}}
                    \label{eq:copy110}
                \end{equation}

                \begin{equation}
                    \inferrule*[right=$\kappa$-alloc]
                                                        {
                                                            \inferrule*[right=$\kappa$-reg]
                                                                {\quad}
                                                                {K \vdash \rho' : \text{Region}}
                                                            \qquad
                                                            \inferrule*[right=$\kappa$-size]
                                                                {\quad}
                                                                {K \vdash 1 : \text{Size}}
                                                        }
                                                        {K \vdash  \{\textbf{alloc}\ 1\ \rho' \} : \text{Effect}  }                
                    \label{eq:copy111}
                \end{equation}

                \begin{equation}
                            \inferrule*[right=sb-$\times_{below}^{2}$]
                                        {
                                            \texorpdfstring{\protect\hyperlink{copy11}{\text{(*)}}}{}
                                            \qquad 
                                            \inferrule*[right=sb-$\times_{below}^{1}$]
                                                {
                                                    \eqref{eq:copy110}    
                                                    \qquad                                                 
                                                    \eqref{eq:copy111}
                                                }
                                                {K \vdash \varphi_2 \sqsubseteq \varphi_2 \times \{\textbf{alloc}\ 1\ \rho' \} : \text{Effect} }
                                        }
                                        {K \vdash \varphi_2 \sqsubseteq \varphi_1 \times \varphi_2 \times \{\textbf{alloc}\ 1\ \rho' \} : \text{Effect}}
                    \label{eq:copy16}
                \end{equation}

                \begin{equation}
                        \inferrule*[right=$\kappa$-$\times$]
                                                { 
                                                    \texorpdfstring{\protect\hyperlink{copy11}{\text{(*)}}}{}
                                                    \qquad
                                                     \texorpdfstring{\protect\hyperlink{copy12}{\text{(**)}}}{}
                                                }
                                                {K \vdash \varphi_1 \times \varphi_2 : \text{Effect}}
                    \label{eq:copy18}                
                \end{equation}

                \begin{equation}
                    \inferrule*[right=sb-$\equiv$]
                                                {\inferrule*[right=sb-refl]
                                                    { \inferrule*[right=$\kappa$-alloc]
                                                        {\inferrule*[right=$\kappa$-reg]{\quad}{K \vdash \rho' : \text{Region}} \qquad \inferrule*[right=$\kappa$-size]{\quad}{K \vdash 1 : \text{Size}}}{K \vdash \{\textbf{alloc}\ 1\ \rho' \} : \text{Effect} }}
                                                    {K \vdash \{\textbf{alloc}\ 1\ \rho' \} \equiv \{\textbf{alloc}\ 1\ \rho' \} : \text{Effect}}}
                                                {K \vdash \{\textbf{alloc}\ 1\ \rho' \} \sqsubseteq \{\textbf{alloc}\ 1\ \rho' \} : \text{Effect}}                
                    \label{eq:copy19}
                \end{equation}

                \begin{equation}
                        \inferrule*[right=sb-$\times_{above}$]
                                        {
                                           \eqref{eq:copy18} 
                                            \qquad                                          
                                            \eqref{eq:copy19}
                                        }
                                        {K \vdash \{\textbf{alloc}\ 1\ \rho' \} \sqsubseteq \varphi_1 \times \varphi_2 \times \{\textbf{alloc}\ 1\ \rho' \} : \text{Effect} } 
                    \label{eq:copy17}
                \end{equation}

                \begin{equation}
                        \inferrule*[right=sb-$\times_{above}$]
                                {
                                    \eqref{eq:copy16}    
                                    \qquad
                                    \eqref{eq:copy17}                                    
                                }
                                {K \vdash \varphi_2 \times \{\textbf{alloc}\ 1\ \rho' \} \sqsubseteq \varphi_1 \times \varphi_2 \times \{\textbf{alloc}\ 1\ \rho' \} : \text{Effect}}
                    \label{eq:copy14}
                \end{equation}

                \begin{align*}
                    \inferrule*[right=sb-$\times_{above}$]
                        {                         
                           \eqref{eq:copy13} 
                            \qquad 
                            \eqref{eq:copy14}                            
                        }
                        {K \vdash \varphi'_1 \times \varphi_2 \times \{\textbf{alloc}\ 1\ \rho' \} \sqsubseteq \varphi_1 \times \varphi_2 \times \{\textbf{alloc}\ 1\ \rho' \} : \text{Effect} }             
                \end{align*}

                Finally, we can type the output store $\storeOuter'$ by induction;
                \begin{align*}
                    K \mid \Gamma \mid \Sigma' \vdash \storeOuter' 
                \end{align*}

            \item Subcase (\textsc{e-copy2}): \\
                From (\textsc{e-copy2}) we have the following assumptions:
                \begin{enumerate}[a.]
                    \itemsep0em 
                    \item\label{ite:e-copy21} $\eformat{e_2}{\storeOuter}{e'_2}{\storeOuter'}$
                    \item\label{ite:e-copy22} $e' = \textbf{copy}\ l_\rho\ \textbf{into}\ e'_2$
                \end{enumerate}

                By induction on assumption~\ref{ite:t-copy3} we have that:
                \begin{equation}
                    K \mid \Gamma \mid \Sigma' \vdash e'_2 : (\tau', \rho') \mid \varphi'_2 
                    \label{eq:copy21}
                \end{equation}
                which we can use with assumption~\ref{ite:t-copy3} to obtain: 
                \begin{equation}
                    K \mid \Gamma \mid \Sigma' \vdash l_\rho : (\tau, \rho) \mid \varphi_1 
                    \label{eq:copy22}
                \end{equation}
                Let $\Sigma'$ equal any context such that $\Sigma' \supseteq
                \Sigma$ and the above typings hold.
                From this we can construct the following typing derivation for 
                $e'$:
                \begin{align*}
                    \inferrule*[right=t-copy]
                        { \eqref{eq:copy21}
                            \qquad
                            \eqref{eq:copy22}
                        }
                        {K \mid \Gamma \mid \Sigma' \vdash \textbf{copy}\ l_\rho\ \textbf{into}\ e'_2 : (\tau, \rho') \mid \varphi_1 \times \varphi'_2 \times \{\textbf{alloc}\ 1\ \rho' \} }
                \end{align*}
                From this we have that $\varphi' = \varphi_1 \times \varphi'_2 \times \{\textbf{alloc}\ 1\ \rho' \} $ and from assumption
                ~\ref{ite:t-copy4} that $\varphi = \varphi_1 \times \varphi_2 \times \{\textbf{alloc}\ 1\ \rho' \} $. By the typability of
                assumption~\ref{ite:t-copy2} we have that $K \vdash \varphi_1 : \text{Effect}$ \hypertarget{copy21}{(*)}, from the typability of assumption~\ref{ite:t-copy3}
                we have that $K \vdash \varphi_2 : \text{Effect}$ \hypertarget{copy22}{(**)}, and by induction we have that $\varphi'_2 \sqsubseteq \varphi_2$ \hypertarget{copy23}{(***)}. From this we can construct the following derivation for $\varphi_1 \times \varphi'_2 \times \{\textbf{alloc}\ 1\ \rho' \} \sqsubseteq \varphi_1 \times \varphi_2 \times \{\textbf{alloc}\ 1\ \rho' \}$: 

                \begin{equation}
                \inferrule*[right=$\kappa$-$\times$]
                                        {
                                            \texorpdfstring{\protect\hyperlink{copy22}{\text{(**)}}}{}
                                            \qquad 
                                            \inferrule*[right=$\kappa$-alloc]
                                                {
                                                    \inferrule*[right=$\kappa$-reg]
                                                        {\quad}
                                                        {K \vdash \rho' : \text{Region}}
                                                    \qquad
                                                    \inferrule*[right=$\kappa$-size]
                                                        {\quad}
                                                        {K \vdash 1 : \text{Size}}
                                                }
                                                {K \vdash \{\textbf{alloc}\ 1\ \rho' \} : \text{Effect} }
                                        }
                                        {K \vdash \varphi_2 \times \{\textbf{alloc}\ 1\ \rho' \} : \text{Effect} }
                    \label{eq:copy26}
                \end{equation}

                \begin{equation}
                \inferrule*[right=sb-$\times_{below}^{1}$]
                                {
                                    \texorpdfstring{\protect\hyperlink{copy21}{\text{(*)}}}{}
                                    \qquad                                     
                                    \eqref{eq:copy26}
                                } 
                                {K \vdash \varphi_1 \sqsubseteq \varphi_1 \times \varphi_2 \times \{\textbf{alloc}\ 1\ \rho' \} : \text{Effect}}
                    \label{eq:copy23}
                \end{equation}

                \begin{equation}
                    \inferrule*[right=sb-$\times_{below}^{1}$]  
                                                {
                                                    \texorpdfstring{\protect\hyperlink{copy23}{\text{(***)}}}{}
                                                    \qquad
                                                    \inferrule*[right=$\kappa$-alloc]
                                                        {
                                                            \inferrule*[right=$\kappa$-reg]
                                                                {\quad}
                                                                {K \vdash \rho' : \text{Region}}
                                                            \qquad
                                                            \inferrule*[right=$\kappa$-size]
                                                                {\quad}
                                                                {K \vdash 1 : \text{Size}}
                                                        }                                            
                                                        {K \vdash \{\textbf{alloc}\ 1\ \rho' \} : \text{Effect} }
                                                }
                                                {K \vdash \varphi'_2 \sqsubseteq \varphi_2 \times \{\textbf{alloc}\ 1\ \rho' \} : \text{Effect} }
                    \label{eq:copy27}
                \end{equation}

                \begin{equation}
                        \inferrule*[right=sb-$\times_{below}^{1}$]
                                        {
                                            \texorpdfstring{\protect\hyperlink{copy22}{\text{(**)}}}{}
                                            \qquad 
                                           \eqref{eq:copy27} 
                                        }
                                        {K \vdash \varphi'_2 \sqsubseteq  \varphi_1 \times \varphi_2 \times \{\textbf{alloc}\ 1\ \rho' \} : \text{Effect} }
                    \label{eq:copy29}
                \end{equation}

                \begin{equation}
                \inferrule*[right=sb-$\equiv$]
                                                        {\inferrule*[right=eq-refl]
                                                            {\inferrule*[right=$\kappa$-alloc]
                                                                {
                                                                    \inferrule*[right=$\kappa$-reg]
                                                                        {\quad}
                                                                        {K \vdash \rho' : \text{Region}}
                                                                    \qquad
                                                                    \inferrule*[right=$\kappa$-size]
                                                                        {\quad}
                                                                        {K \vdash 1 : \text{Size}}
                                                                }
                                                                {K \vdash \{\textbf{alloc}\ 1\ \rho' \} : \text{Effect} }}
                                                            {K \vdash \{\textbf{alloc}\ 1\ \rho' \} \equiv \{\textbf{alloc}\ 1\ \rho' \} : \text{Effect}}}
                                                        {K \vdash \{\textbf{alloc}\ 1\ \rho' \} \sqsubseteq \{\textbf{alloc}\ 1\ \rho' \} : \text{Effect} }
                    \label{eq:copy28}
                \end{equation}

                \begin{equation}
                    \inferrule*[right=sb-$\times_{below}^{2}$]
                                        {
                                            \texorpdfstring{\protect\hyperlink{copy21}{\text{(*)}}}{}
                                            \qquad
                                            \inferrule*[right=sb-$\times_{below}^{2}$]
                                                {
                                                    \texorpdfstring{\protect\hyperlink{copy22}{\text{(**)}}}{}
                                                    \qquad
                                                    \eqref{eq:copy28} 
                                                }
                                                {K \vdash \{\textbf{alloc}\ 1\ \rho' \} \sqsubseteq \varphi_2 \times \{\textbf{alloc}\ 1\ \rho' \} : \text{Effect}}
                                        }
                                        { K \vdash \{\textbf{alloc}\ 1\ \rho' \} \sqsubseteq \varphi_1 \times \varphi_2 \times \{\textbf{alloc}\ 1\ \rho' \} : \text{Effect} }
                    \label{eq:copy25}
                \end{equation}

                \begin{equation}
                    \inferrule*[right=sb-$\times_{below}^{1}$]
                                {
                                   \eqref{eq:copy29} 
                                    \qquad
                                    \eqref{eq:copy25}                              
                                }
                                {K \vdash \varphi'_2 \times \{\textbf{alloc}\ 1\ \rho' \} \sqsubseteq \varphi_1 \times \varphi_2 \times \{\textbf{alloc}\ 1\ \rho' \} : \text{Effect} }                
                    \label{eq:copy24}
                \end{equation}

                \begin{align*}
                    \inferrule*[right=sb-$\times_{above}$]
                        {   
                            \eqref{eq:copy23}
                            \qquad 
                            \eqref{eq:copy24} 
                        }
                        {K \vdash \varphi_1 \times \varphi'_2 \times \{\textbf{alloc}\ 1\ \rho' \} \sqsubseteq \varphi_1 \times \varphi_2 \times \{\textbf{alloc}\ 1\ \rho' \} : \text{Effect}}
                \end{align*}

                Finally, we can type the output store $\storeOuter'$ by induction;
                \begin{align*}
                    K \mid \Gamma \mid \Sigma' \vdash \storeOuter' 
                \end{align*}

            \item Subcase (\textsc{e-copyL}): \\
                From (\textsc{e-copyL}) we have the following assumptions:
                \begin{enumerate}[a.]
                    \itemsep0em 
                    \item\label{ite:e-copyL1} $(\storeInner_{\rho}^{in.}, s_a) = \storeOuter(\rho)$
                    \item\label{ite:e-copyL2} $(\storeInner_{\rho'}^{in.}, s'_a) = \storeOuter(\rho')$
                    \item\label{ite:e-copyL3} $v = \storeInner_{\rho'}^{in.} (l_\rho)$
                    \item\label{ite:e-copyL4} $l'_{\rho'}  = \text{freshLoc}(\rho')$
                    \item\label{ite:e-copyL5} $e' = l'_{\rho'}$
                    \item\label{ite:e-copyL6} $\storeOuter' = [\rho' \mapsto ((\storeInner_{\rho'}^{in.}, l'_{\rho'} \mapsto l_{\rho}), s'_a)] \storeOuter$
                \end{enumerate}

                From assumptions~\ref{ite:e-copyL4} and~\ref{ite:e-copyL6} we have that $l'_{\rho'}$ is a fresh location in
                $\rho'$ that points to $v$. Since this evaluation rule takes no inductive steps, let $\Sigma'$ equal $\Sigma$ with the 
                additional location $l'_{\rho'}$ such that $\tau = \Sigma(l'_{\rho'})$ \hypertarget{copyL1}{(*)}. 
                Furthermore, from the typability of assumption~\ref{ite:t-copy2} we know that $K \vdash \tau : \text{Type}$ \hypertarget{copyL2}{(**)}.
                We can then construct the following typing derivation for $e'$:

                \begin{align*}
                        \inferrule*[right=t-use-val]
                                { \inferrule*[]
                                    { \texorpdfstring{\protect\hyperlink{copyL1}{\text{(*)}}}{} \qquad \texorpdfstring{\protect\hyperlink{copyL2}{\text{(**)}}}{} }
                                    {K \mid \Gamma \mid \Sigma' \vdash l'_{\rho'} : \tau } }
                                {K \mid \Gamma \mid \Sigma' \vdash l'_{\rho'} : (\tau, \rho') \mid \{\bot\}}
                        \end{align*}
                        For which we have that $\varphi' = \{\bot\} $, and from assumption~\ref{ite:t-copy4} that 
                        $\varphi = \varphi_1 \times \varphi_2 \times \{\textbf{alloc}\ 1\ \rho' \}$. 
                        By the typability of
                assumption~\ref{ite:t-copy2} we have that $K \vdash \varphi_1 : \text{Effect}$ \hypertarget{copyL3}{(***)}, and from the typability of assumption~\ref{ite:t-copy3}
                we have that $K \vdash \varphi_2 : \text{Effect}$ \hypertarget{copyL4}{(****)}. From this we can construct the following derivation for $\{\bot \} \sqsubseteq \varphi_1 \times \varphi_2 \times \{\textbf{alloc}\ 1\ \rho' \}$:

                \begin{equation}
                \inferrule*[right=$\kappa$-$\times$]
                                        {
                                            \texorpdfstring{\protect\hyperlink{copyL3}{\text{(***)}}}{}
                                            \qquad
                                            \texorpdfstring{\protect\hyperlink{copyL4}{\text{(****)}}}{}
                                        }
                                        {K \vdash \varphi_1 \times \varphi_2 : \text{Effect}}
                    \label{eq:copyL2}
                \end{equation}

                \begin{equation}
                    \inferrule*[right=$\kappa$-alloc]
                                        {
                                            \inferrule*[right=$\kappa$-reg]
                                                {\quad}
                                                {K \vdash \rho' : \text{Region}}
                                            \qquad
                                            \inferrule*[right=$\kappa$-size]
                                                {\quad}
                                                {K \vdash 1 : \text{Size}}
                                        }
                                        {K \vdash \{\textbf{alloc}\ 1\ \rho' \} : \text{Effect} }
                    \label{eq:copyL1}
                \end{equation}

                        \begin{align*}
                        \inferrule*[right=sb-$\bot$]  
                            {\inferrule*[right=$\kappa$-$\times$]
                                {
                                   \eqref{eq:copyL2} 
                                    \qquad
                                   \eqref{eq:copyL1} 
                                }
                                {K \vdash \varphi_1 \times \varphi_2 \times \{\textbf{alloc}\ 1\ \rho' \} : \text{Effect} }}
                            {K \vdash \{\bot\} \sqsubseteq  \varphi_1 \times \varphi_2 \times \{\textbf{alloc}\ 1\ \rho' \} : \text{Effect}  } 
                        \end{align*}

                From our inductive hypothesis we have that: 
                \begin{align*}
                    K \mid \Gamma \mid \Sigma \vdash \storeOuter
                \end{align*}
                i.e., that the store prior to evaluation is well-typed~(\hypertarget{copyLStore1}{\text{\textdagger}}). 
                Therefore we have that currentSize($\rho$) $\sqsubseteq s_a$ prior to 
                evaluation~(\hypertarget{copyLStore2}{\text{\textdagger\textdagger}}). From this we have:
                \begin{align*}   
                   K \mid \Gamma \mid \Sigma \vdash \storeInner_{\rho}^{in.}
                \end{align*} 
                i.e., that the store for region $\rho$ is well-typed prior to the evaluation of $e$~(\hypertarget{copyLStore2}{\text{\textdagger\textdagger}})
                As per assumption~\ref{ite:e-copyL6}, (\textsc{e-copyL}) extends the region $\rho$ in the store with an additional location 
                $l'_{\rho}$ which points to location $l_{\rho}$.

                From the premise of our theorem statement, we know that $e$ will eventually be a part of a derivation 
                for which the typing of this allocation holds. Thus, using inversion (Lemma~\ref{lemma:typing-inversion}) on 
                the typing relation, along with the fact that sizeOf($l_{\rho}$) $= 1$ (as per Definition~\ref{def:sizeOf}) we have that $\exists s' . \text{currentSize}(\rho) + 1 + s' \sqsubseteq s_a$.

                From this and $\texorpdfstring{\protect\hyperlink{copyLStore3}{(\text{\textdagger\textdagger\textdagger})}}{}$ we know that 
                currentSize($\rho$) + $s \sqsubseteq s_a$~(\hypertarget{copyLStore4}{\text{\textdagger\textdagger\textdagger\textdagger}})
                , i.e., that the size of the region after the evaluation of $e$ does not exceed the maximum allocation bound for $\rho$.

                From the above we can construct the following typing derivation for the output store $\storeOuter'$: 

                \begin{equation}
                    \inferrule*[right=loc]
                        {
                            \texorpdfstring{\protect\hyperlink{copyL1}{(*)}}{}
                            \qquad 
                            (\texorpdfstring{\protect\hyperlink{copyLStore2}{\text{\textdagger\textdagger}}}{})
                            \qquad
                            \ref{ite:t-copy3}
                        }
                        {K \mid \Gamma \mid \Sigma \vdash ((\storeInner_{\rho}^{in.}, l'_{\rho} \mapsto l_{\rho}), s_a) }
                        \label{eq:copyLStore1}
                \end{equation}

                \begin{align*}
                    \inferrule*[right=subst]
                        { 
                            \texorpdfstring{\protect\hyperlink{copyLStore1}{(\text{\textdagger})}}{}
                            \qquad
                            \eqref{eq:copyLStore1}
                            \qquad
                            \texorpdfstring{\protect\hyperlink{copyLStore3}{(\text{\textdagger\textdagger\textdagger})}}{}
                        }
                        {K \mid \Gamma \mid \Sigma \vdash [\rho \mapsto ((\storeInner_{\rho}^{in.}, l'_{\rho} \mapsto l_{\rho}), s_a)] \storeOuter }
                \end{align*}

        \end{itemize}

    \item Case (\textsc{t-val}):\\
        From (\textsc{t-val}) we have the following assumptions:
        \begin{enumerate}[i]
            \itemsep0em 
            \item\label{ite:t-val1} $e = v\ [s]\ \textbf{at}\ e_1$
            \item\label{ite:t-val2} $K \mid \Gamma \mid \Sigma \vdash e_1 : (\tau', \rho) \mid \varphi_1$
            \item\label{ite:t-val3} $K \mid \Gamma \mid \Sigma \vdash v : \tau$
            \item\label{ite:t-val4} $K \vdash s : \text{Size}$
            \item\label{ite:t-val5} $\varphi = \varphi_1 \times \{ \textbf{alloc}\ s\ \rho\} $
        \end{enumerate}
        There are two evaluation rules by which $\langle e \mid \storeOuter
        \rangle \longrightarrow \langle e' \mid \storeOuter' \rangle $ can be
        derived: (\textsc{e-val}) and (\textsc{e-valL}).
        \begin{itemize}

            \item Subcase (\textsc{e-val}):\\
                From (\textsc{e-val}) we have the following assumptions:
                \begin{enumerate}[a.]
                    \itemsep0em 
                    \item\label{ite:e-val1} $\eformat{e'_1}{\storeOuter}{e'_1}{\storeOuter'}$
                    \item\label{ite:e-val2} $e' = v\ [s]\ \textbf{at}\ e'_1$
                \end{enumerate}
                By induction on assumption~\ref{ite:t-val2} we have: 
                \begin{equation}
                    K \mid \Gamma \mid \Sigma' \vdash e'_1 : (\tau', \rho) \mid \varphi_1'
                    \label{eq:e-val1}
                \end{equation}
                Let $\Sigma'$ equal any store typing such that $\Sigma' \supseteq \Sigma$ and the above typing holds. We can thus use $\Sigma'$ with assumption~\ref{ite:t-val3} to obtain: 
                \begin{equation}
                    K \mid \Gamma  \mid \Sigma' \vdash v : \tau
                    \label{eq:e-val2}
                \end{equation} 
                Using~\ref{eq:e-val1} and~\ref{eq:e-val2}, along with assumption~\ref{ite:t-val4}, we can construct the following typing
                derivation for $e'$:
                \begin{align*}
                    \inferrule*[right=t-val]
                        {
                            \eqref{eq:e-val1}
                            \qquad
                            \eqref{eq:e-val2}
                            \qquad 
                        \ref{ite:t-val4}
                        }             
                        {K \mid \Gamma \mid \Sigma' \vdash v \ [s]\ \textbf{at}\ e'_1 : (\tau, \rho) \mid \varphi'_1 \times \{ \textbf{alloc}\ s\ \rho\} }
                \end{align*}
                From which we have $\varphi' = \varphi'_1 \times \{ \textbf{alloc}\ s\ \rho\}$ and by 
                assumption~\ref{ite:t-val5} we have $\varphi = \varphi_1 \times \{ \textbf{alloc}\ s\ \rho \} $. 
                From the typability of assumption~\ref{ite:t-val2} we have that $K \vdash \varphi_1 : \text{Effect}$ \hypertarget{val1}{(*)}, and by induction 
                we have that $\varphi_1 \sqsubseteq \varphi'_1$  \hypertarget{val2}{(**)} which we can use to construct the following 
                derivation for $\varphi'_1 \times \{ \textbf{alloc}\ s\ \rho\} \sqsubseteq \varphi_1 \times \{ \textbf{alloc}\ s\ \rho\} $:

                \begin{equation}
                                \inferrule*[right=$\kappa$-alloc]
                                    {\inferrule*[right=$\kappa$-size]
                                        {\quad}
                                        {K \vdash s : \text{Size}}
                                        \qquad 
                                        \inferrule*[right=$\kappa$-reg]
                                            {\quad}
                                            {K \vdash \rho : \text{Region}}}
                                    {K \vdash \{\textbf{alloc}\ s\ \rho \} : \text{Effect}}
                \label{eq:e-val-phi2}
                \end{equation}

                \begin{equation}
                    \inferrule*[right=$\times_{below}^1$]
                            {
                                \texorpdfstring{\protect\hyperlink{val2}{\text{(**)}}}{}
                                \qquad 
                                \eqref{eq:e-val-phi2}
                            }
                            {K \vdash \varphi'_1 \sqsubseteq \{\textbf{alloc}\ s\ \rho \} \times \varphi_1 : \text{Effect}} 
                    \label{eq:e-val-phi1}
                \end{equation}

                \begin{equation}
                    \inferrule*[right=sb-$\equiv$]
                                {
                                    \inferrule*[right=eq-refl]
                                        {
                                            \inferrule*[right=$\kappa$-alloc]  
                                                {\inferrule*[right=$\kappa$-reg]{\quad}{K \vdash \rho : \text{Region}} \qquad \inferrule*[right=$\kappa$-size]{\quad}{K \vdash s : \text{Size}}}
                                                {K \vdash : \{\textbf{alloc}\ s\ \rho \} : \text{Effect}}
                                        }
                                        {K \vdash \{\textbf{alloc}\ s\ \rho\} \equiv \{\textbf{alloc}\ s\ \rho\} : \text{Effect}}
                                }
                                {K \vdash \{ \textbf{alloc}\ s\ \rho \} \sqsubseteq \{ \textbf{alloc}\ s\ \rho \} : \text{Effect} }
                            \label{eq:e-val-phi4}
                \end{equation}

                \begin{equation}
                        \inferrule*[right=sb-$\times_{below}^{1}$]
                            { 
                                \eqref{eq:e-val-phi4}
                                \qquad 
                                \texorpdfstring{\protect\hyperlink{val1}{\text{(*)}}}{}
                                }
                            {K \vdash  \{ \textbf{alloc}\ s\ \rho \}  \sqsubseteq \varphi_1 \times \{ \textbf{alloc}\ s\ \rho \} : \text{Effect} }
                    \label{eq:e-val-phi3}
                \end{equation}

                \begin{align*}
                    \inferrule*[right=sb-$\times_{above}$]
                        { \eqref{eq:e-val-phi1}
                        \qquad 
                        \eqref{eq:e-val-phi3}
                        }
                        {K \vdash \varphi'_1 \times \{ \textbf{alloc}\ s\ \rho\} \sqsubseteq \varphi_1 \times \{ \textbf{alloc}\ s\ \rho\} : \text{Effect} }
                \end{align*}

                Finally, we can type the output store $\storeOuter'$ by induction:
                \begin{align*}
                    K \mid \Gamma \mid \Sigma' \vdash \storeOuter'
                \end{align*}
                
            \item Subcase (\textsc{e-valL}):\\
                From (\textsc{e-valL}) we have the following assumptions:
                \begin{enumerate}[a.]
                    \itemsep0em 
                    \item\label{ite:e-valL1} $(\storeInner_{\rho}^{in.}, s_a) = \storeOuter(\rho)$
                    \item\label{ite:e-valL2} $s_v = \text{sizeOf}(v)$
                    \item\label{ite:e-valL3} $s_v \sqsubseteq s$ 
                    \item\label{ite:e-valL4} $l'_{\rho} = \text{freshLoc}(\rho)$
                    \item\label{ite:e-valL5} $e' = l'_{\rho}$
                    \item\label{ite:e-valL6} $[\rho \mapsto ((\storeInner_{\rho}^{in.}, l'_{\rho} \mapsto v), s_a)] \storeOuter$
                \end{enumerate}

                From assumptions~\ref{ite:e-valL4} and~\ref{ite:e-valL6} we have that $l'_{\rho}$ is a fresh location in
                $\rho$ that points to $v$. Since this evaluation rule takes no inductive steps, let $\Sigma'$ equal $\Sigma$ with the 
                additional location $l'_{\rho}$ such that $\tau = \Sigma(l'_\{\rho\})$ \hypertarget{valL1}{(*)}. 
                Furthermore, from the typability of assumption~\ref{ite:t-val1} we know that $K \vdash \tau : \text{Type}$ \hypertarget{valL2}{(**)}.
                We can then construct the following typing derivation for $e'$:
                
                \begin{align*}
                    \inferrule*[right=t-use-val]
                        { \inferrule*[right=t-loc]
                            { \texorpdfstring{\protect\hyperlink{valL1}{\text{(*)}}}{} \qquad  \texorpdfstring{\protect\hyperlink{valL2}{\text{(**)}}}{} }
                            {K \mid \Gamma \mid \Sigma' \vdash l'_{\rho} : \tau} }
                        {K \mid \Gamma \mid \Sigma' \vdash l'_{\rho} : (\tau, \rho) \mid \{ \bot \} }
                \end{align*}
                From which we have that $\varphi' = \{\bot\}$ and by assumption~\ref{ite:t-val5} we have that  
                $\varphi = \varphi_1 \times \{\textbf{alloc}\ s\ \rho\}$. By the typability of assumption~\ref{ite:t-val1}
                we have that $\varphi_1 : \text{Effect}$ \hypertarget{valL3}{(***)}, from which we can construct the following 
                derivation for $\{\bot \} \sqsubseteq \varphi_1 \times \{\textbf{alloc}\ s\ \rho\} $:

                \begin{equation}
                    \inferrule*[right=$\kappa$-alloc]{
                                \inferrule*[right=$\kappa$-size]
                                    {\quad}
                                    { K \vdash s : \text{Size}}
                            \qquad 
                            \inferrule*[right=$\kappa$-reg]
                                {\quad}
                                {K \vdash \rho : \text{Region}}
                            }{ K \vdash \{\textbf{alloc}\ s\ \rho\} : \text{Effect}}
                    \label{eq:valL-phi-1}
                \end{equation}

                \begin{align*}
                    \inferrule*[right=sb-$\bot$]
                        { \inferrule*[right=$\kappa$-$\times$]
                            { \texorpdfstring{\protect\hyperlink{valL3}{\text{(***)}}}{} \qquad \eqref{eq:valL-phi-1} }
                            {K \vdash \varphi_1 \times \{\textbf{alloc}\ s\ \rho\} : \text{Effect} }}
                        {K \vdash \{ \bot \} \sqsubseteq \varphi_1 \times \{\textbf{alloc}\ s\ \rho\} : \text{Effect} }
                \end{align*}

                From our inductive hypothesis we have that: 
                \begin{align*}
                    K \mid \Gamma \mid \Sigma \vdash \storeOuter
                \end{align*}
                i.e., that the store prior to evaluation is well-typed~(\hypertarget{valLStore1}{\text{\textdagger}}). 
                Therefore we have that currentSize($\rho$) $\sqsubseteq s_a$ prior to 
                evaluation~(\hypertarget{valLStore2}{\text{\textdagger\textdagger}}). From this we have:
                \begin{align*}   
                   K \mid \Gamma \mid \Sigma \vdash \storeInner_{\rho}^{in.}
                \end{align*} 
                i.e., that the store for region $\rho$ is well-typed prior to the evaluation of $e$~(\hypertarget{valLStore2}{\text{\textdagger\textdagger}})
                As per assumption~\ref{ite:e-valL6}, (\textsc{e-valL}) extends the region $\rho$ in the store with an additional location 
                $l'_{\rho}$ which points to value $v$.

                From the premise of our theorem statement, we know that $e$ will eventually be a part of a derivation 
                for which the typing of this allocation holds. Thus, using inversion (Lemma~\ref{lemma:typing-inversion}), on 
                this typing relation, we have that $\exists s' . \text{currentSize}(\rho) + s + s' \sqsubseteq s_a$.

                Furthermore, from assumption~\ref{ite:e-valL2} we have that $s_v$ is the actual size of $v$ (the value being allocated), 
                and from~\ref{ite:e-valL3} we have that $s_v \sqsubseteq s$. Therefore, we can substitute $s_v$ for $s$ in 
                the above constraint to yield: $\exists s' . \text{currentSize}(\rho) + s_v + s' \sqsubseteq s_a$, which holds 
                by transitivity.

                From this and $\texorpdfstring{\protect\hyperlink{valLStore2}{(\text{\textdagger\textdagger})}}{}$ we know that 
                currentSize($\rho$) + $s_v \sqsubseteq s_a$~(\hypertarget{valLStore3}{\text{\textdagger\textdagger\textdagger}})
                , i.e., that the size of the region after the evaluation of $e$ does not exceed the maximum allocation bound for $\rho$.

                From the above we can construct the following typing derivation for the output store $\storeOuter'$: 

                \begin{equation}
                    \inferrule*[right=loc]
                        {
                            \texorpdfstring{\protect\hyperlink{valL1}{(*)}}{}
                            \qquad 
                            (\texorpdfstring{\protect\hyperlink{valLStore2}{\text{\textdagger\textdagger}}}{})
                            \qquad
                            \ref{ite:t-val3}
                        }
                        {K \mid \Gamma \mid \Sigma \vdash ((\storeInner_{\rho}^{in.}, l'_{\rho} \mapsto v), s_a) }
                        \label{eq:valStore1}
                \end{equation}

                \begin{align*}
                    \inferrule*[right=subst]
                        { 
                            \texorpdfstring{\protect\hyperlink{valLStore1}{(\text{\textdagger})}}{}
                            \qquad
                            \eqref{eq:valStore1}
                            \qquad
                            \texorpdfstring{\protect\hyperlink{valLStore3}{(\text{\textdagger\textdagger\textdagger})}}{}
                        }
                        {K \mid \Gamma \mid \Sigma \vdash [\rho \mapsto ((\storeInner_{\rho}^{in.}, l'_{\rho} \mapsto v), s_a)] \storeOuter }
                \end{align*}

        \end{itemize}

    \item Case (\textsc{t-app}):\\
        From (\textsc{t-app}) we have the following assumptions:
        \begin{enumerate}[i]
            \itemsep0em 
            \item\label{ite:t-app1} $e = e_1\ e_2$
            \item\label{ite:t-app2} $K \mid \Gamma \mid \Sigma \vdash e_1 : (\mu_1 \xrightarrow{\varphi} \mu_2, \rho) \mid \varphi_1$
            \item\label{ite:t-app3} $K \mid \Gamma \mid \Sigma \vdash e_2 : \mu_1 \mid \varphi_2$
            \item\label{ite:t-app4} $\varphi = \varphi_1 \times \varphi_2 \times \varphi$
        \end{enumerate}
        There are three evaluation rules by which $\langle e \mid \storeOuter
        \rangle \longrightarrow \langle e' \mid \storeOuter' \rangle$ can be derived:
        (\textsc{e-app1}), (\textsc{e-app2}) and (\textsc{e-appL}).
        \begin{itemize}

            \item Subcase (\textsc{e-app1}): \\ 
                From (\textsc{e-app1}) we have the following assumptions:
                \begin{enumerate}[a.]
                    \itemsep0em 
                    \item\label{ite:e-app11} $\eformat{e_1}{\storeOuter}{e'_1}{\storeOuter'}$
                    \item\label{ite:e-app12} $e' = e'_1\ e_2$
                \end{enumerate}
                By induction on assumption~\ref{ite:t-app2} we have: 
                \begin{equation}
                    K \mid \Gamma \mid \Sigma' \vdash e'_1 : (\mu_1 \xrightarrow{\varphi} \mu_2, \rho) \mid \varphi'_1  
                    \label{eq:app11}
                \end{equation}
                which we can combine with assumption~\ref{ite:t-app3} to type: 
                \begin{equation}
                    K \mid \Gamma \mid \Sigma' \vdash e_2 : \mu_1 \mid \varphi_2 
                    \label{eq:app12}
                \end{equation} 
                Let $\Sigma'$ equal any context such that $\Sigma' \supseteq
                \Sigma$ and the above typings hold.

                From~\ref{eq:app11} and~\ref{eq:app12}, we can construct the following typing derivation for $e'$:
                \begin{align*}
                    \inferrule*[right=t-app]
                        { \eqref{eq:app11}  \qquad \eqref{eq:app12} }
                        {K \mid \Gamma \mid \Sigma' \vdash e'_1\ e_2 : \mu_2 \mid \varphi'_1 \times \varphi_2 \times \varphi }
                \end{align*}
                From which we have that $\varphi' = \varphi'_1 \times \varphi_2 \times \varphi$
                and from assumption~\ref{ite:t-app4}, that $\varphi = \varphi_1 \times \varphi_2 \times \varphi$. 
                By the typability of~\ref{ite:t-app1}, we have that $K \vdash \varphi_1 : \text{Effect}$  \hypertarget{app11}{(*)},  
                that $K \vdash \varphi_2 : \text{Effect}$ \hypertarget{app12}{(**)}, $K \vdash \varphi : \text{Effect}$ \hypertarget{app13}{(***)}, 
                and by induction we have that $\varphi'_1 \sqsubseteq \varphi_1$ \hypertarget{app14}{(****)}. 
                From this we can construct the following derivation for $\varphi'_1 \times \varphi_2 \times \varphi 
                \sqsubseteq \varphi_1 \times \varphi_2 \times \varphi$:

                \begin{equation}
                    \inferrule*[right=sb-$\times_{below}^{1}$]
                                {
                                     \texorpdfstring{\protect\hyperlink{app14}{\text{(****)}}}{} 
                                    \qquad 
                                    \inferrule*[right=$\kappa$-$\times$]
                                        {\texorpdfstring{\protect\hyperlink{app12}{\text{(**)}}}{} 
                                        \qquad 
                                        \texorpdfstring{\protect\hyperlink{app13}{\text{(***)}}}{} }
                                        {K \vdash \varphi_2 \times \varphi : \text{Effect}}
                                }
                                { K \vdash \varphi'_1 \sqsubseteq \varphi_1 \times \varphi_2 \times \varphi : \text{Effect} }
                        \label{eq:app13} 
                \end{equation}

                \begin{equation}
                            \inferrule*[right=sb-$\equiv$]
                                                {\inferrule*[right=eq-refl]
                                                    { \texorpdfstring{\protect\hyperlink{app12}{\text{(**)}}}{}  }
                                                    { K \vdash \varphi_2 \equiv \varphi_2 : \text{Effect}}}
                                                {K \vdash \varphi_2 \sqsubseteq \varphi_2 : \text{Effect}}
                    \label{eq:app18}
                \end{equation}
                
                \begin{equation}
                    \inferrule*[right=sb-$\times_{below}^{1}$]
                                { 
                                    \inferrule*[right=sb-$\times_{below}^{2}$]
                                        {  
                                            \texorpdfstring{\protect\hyperlink{app11}{\text{(*)}}}{}
                                            \qquad 
                                            \eqref{eq:app18} 
                                            }
                                        {K \vdash \varphi_2 \sqsubseteq \varphi_1 \times \varphi_2 : \text{Effect} }
                                    \qquad  
                                        \texorpdfstring{\protect\hyperlink{app13}{\text{(***)}}}{}                               
                                }
                                { K \vdash \varphi_2 \sqsubseteq \varphi_1 \times \varphi_2 \times \varphi : \text{Effect}}                
                                \label{eq:app15}
                \end{equation}

                \begin{equation}
                    \inferrule*[right=sb-$\times_{above}$]
                            {
                                \eqref{eq:app13}
                                \qquad 
                                \eqref{eq:app15}                            
                                }
                            { K \vdash \varphi'_1 \times \varphi_2 \sqsubseteq \varphi_1 \times \varphi_2 \times \varphi : \text{Effect} }
                    \label{eq:app14}             
                \end{equation}
                
                \begin{equation}
                        \inferrule*[right=$\kappa$-$\times$]
                                    { 
                                       \texorpdfstring{\protect\hyperlink{app11}{\text{(*)}}}{}
                                        \qquad 
                                        \texorpdfstring{\protect\hyperlink{app12}{\text{(**)}}}{}
                                     }
                                    {K \vdash \varphi_1 \times \varphi_2 : \text{Effect}}                
                    \label{eq:app17}
                \end{equation}

                \begin{equation}
                    \inferrule*[right=sb-$\times_{below}^{2}$]
                            {
                                \eqref{eq:app17} 
                                 \qquad
                                 \inferrule*[right=sb-$\equiv$]  
                                    {\inferrule*[right=eq-refl]
                                        { \texorpdfstring{\protect\hyperlink{app13}{\text{(***)}}}{}}
                                        { K \vdash \varphi \equiv \varphi : \text{Effect}}}
                                    {K \vdash \varphi \sqsubseteq \varphi : \text{Effect} }                                 
                                 }
                            {K \vdash \varphi \sqsubseteq  \varphi_1 \times \varphi_2 \times \varphi : \text{Effect} }
                \label{eq:app16}
                \end{equation}

                \begin{align*}
                    \inferrule*[right=sb-$\times_{above}$]
                        { \eqref{eq:app14}
                            \qquad 
                            \eqref{eq:app16}
                        }
                        {K \vdash \varphi'_1 \times \varphi_2 \times \varphi \sqsubseteq \varphi_1 \times \varphi_2 \times \varphi : \text{Effect} }
                \end{align*}

                Finally, we can type the output store $\storeOuter'$ by induction:
                \begin{align*}
                    \Gamma \mid \Delta'_i \mid \Sigma' \vdash \storeOuter'
                \end{align*}

            \item Subcase (\textsc{e-app2}): \\
                From (\textsc{e-app2}) we have the following assumptions:
                \begin{enumerate}[a.]
                    \itemsep0em 
                    \item\label{ite:e-app21} $\eformat{e_2}{\storeOuter}{e'_2}{\storeOuter'}$
                    \item\label{ite:e-app22} $e' = l_\rho\ e'_2$
                \end{enumerate}
                By induction on assumption~\ref{ite:t-app3} we have: 
                \begin{equation}
                    K \mid \Gamma \mid \Sigma' \vdash e'_2 : \mu_1 \mid \varphi'_2 
                    \label{eq:app21}
                \end{equation} 
                which we can combine with assumption~\ref{ite:t-app2} to type: 
                \begin{equation}
                    K \mid \Gamma \mid \Sigma' \vdash l_\rho : (\mu_1 \xrightarrow{\varphi} \mu_2, \rho) \mid \varphi_1  
                    \label{eq:app22}
                \end{equation}
                Let $\Sigma'$ equal any context such that $\Sigma' \supseteq
                \Sigma$ and the above typings hold.

                From~\ref{eq:app21} and~\ref{eq:app22}, we can construct the following typing derivation for $e'$:
                \begin{align*}
                    \inferrule*[right=t-app]
                        { \eqref{eq:app21}  \qquad \eqref{eq:app22} }
                        {K \mid \Gamma \mid \Sigma' \vdash l_\rho\ e'_2 : \mu_2 \mid \varphi_1 \times \varphi'_2 \times \varphi }
                \end{align*}
                From which we have that $\varphi' = \varphi_1 \times \varphi'_2 \times \varphi$
                and from assumption~\ref{ite:t-app4}, that $\varphi = \varphi_1 \times \varphi_2 \times \varphi$. 
                By the typability of~\ref{ite:t-app1}, we have that $K \vdash \varphi_1 : \text{Effect}$  \hypertarget{app21}{(*)},  
                that $K \vdash \varphi_2 : \text{Effect}$ \hypertarget{app22}{(**)}, $K \vdash \varphi : \text{Effect}$ \hypertarget{app23}{(***)}, 
                and by induction we have that $\varphi'_2 \sqsubseteq \varphi_2$ \hypertarget{app24}{(****)}. 
                From this we can construct the following derivation for $\varphi_1 \times \varphi'_2 \times \varphi 
                \sqsubseteq \varphi_1 \times \varphi_2 \times \varphi$:

                \begin{equation}
                        \inferrule*[right=sb-$\times_{below}^{1}$]
                                    { 
                                        \inferrule*[right=sb-$\equiv$]
                                            {\inferrule*[right=eq-refl]
                                                { 
                                                    \texorpdfstring{\protect\hyperlink{app21}{\text{(*)}}}{}
                                                }
                                                {K \vdash \varphi_1 \equiv \varphi_1 : \text{Effect}}}
                                            {K \vdash \varphi_1 \sqsubseteq  \varphi_1 : \text{Effect}}
                                        \qquad
                                                \texorpdfstring{\protect\hyperlink{app23}{\text{(***)}}}{}     
                                    }
                                    {K \vdash \varphi_1 \sqsubseteq \varphi_1 \times \varphi_2 \times \varphi : \text{Effect}}                
                    \label{eq:app27}
                \end{equation}

                \begin{equation}
                    \inferrule*[right=sb-$\times_{below}^{1}$]
                                    { 
                                        \inferrule*[right=sb-$\times_{below}^{2}$]
                                            {
                                                \texorpdfstring{\protect\hyperlink{app21}{\text{(*)}}}{}     
                                                \qquad
                                                \texorpdfstring{\protect\hyperlink{app24}{\text{(****)}}}{}     
                                            }
                                            {K \vdash \varphi'_2 \sqsubseteq \varphi_1 \times \varphi_2 : \text{Effect}}
                                        \qquad
                                            \texorpdfstring{\protect\hyperlink{app23}{\text{(***)}}}{}
                                    }
                                    {K \vdash \varphi'_2 \sqsubseteq \varphi_1 \times \varphi_2 \times \varphi : \text{Effect}}    
                    \label{eq:app28}
                \end{equation}

                \begin{equation}
                    \inferrule*[right=sb-$\times_{above}$]
                        { 
                               \eqref{eq:app27} 
                                \qquad  
                                \eqref{eq:app28} 
                            }
                            {K \vdash \varphi_1 \times \varphi'_2 \sqsubseteq \varphi_1 \times \varphi_2 \times \varphi : \text{Effect}}
                    \label{eq:app23}
                \end{equation}

                \begin{equation}                
                    \inferrule*[right=$\kappa$-$\times$]
                                    {   \texorpdfstring{\protect\hyperlink{app21}{\text{(*)}}}{} 
                                        \qquad 
                                        \texorpdfstring{\protect\hyperlink{app22}{\text{(**)}}}{} }
                                    {K \vdash \varphi_1 \times \varphi_2 : \text{Effect}}
                    \label{eq:app24}
                \end{equation}

                \begin{equation}
                    \inferrule*[right=sb-$\equiv$]
                                    { \inferrule*[right=eq-refl]
                                        { \texorpdfstring{\protect\hyperlink{app23}{\text{(***)}}}{} }
                                        {K \vdash \varphi \equiv \varphi : \text{Effect}}}
                                    {K \vdash \varphi \sqsubseteq \varphi : \text{Effect}}
                \label{eq:app25}
                \end{equation}

                \begin{equation}
                    \inferrule*[right=sb-$\times_{below}^{2}$]
                            {                                 
                            \eqref{eq:app24}
                                \qquad 
                            \eqref{eq:app25}                             
                            }
                            {K \vdash \varphi \sqsubseteq \varphi_1 \times \varphi_2 \times \varphi : \text{Effect}}
                    \label{eq:app26}
                \end{equation}

                \begin{align*}
                    \inferrule*[right=sb-$\times_{above}$]
                        { 
                         \eqref{eq:app23} 
                         \qquad
                         \eqref{eq:app26}                       
                          }
                        {K \vdash \varphi_1 \times \varphi'_2 \times \varphi \sqsubseteq \varphi_1 \times \varphi_2 \times \varphi : \text{Effect}}
                \end{align*}

                Finally, we can type the output store $\storeOuter'$ by induction:
                \begin{align*}
                    \Gamma \mid \Delta'_i \mid \Sigma' \vdash \storeOuter'
                \end{align*}

            \item Subcase (\textsc{e-appL}): \\
                From (\textsc{e-appL}) we have the following assumptions:
                \begin{enumerate}[a.]
                    \itemsep0em 
                    \item\label{ite:e-appL1} $(\storeInner, s_a) = \storeOuter(\rho)$
                    \item\label{ite:e-appL2} $(\lambda x . e_3) = \storeInner(l_\rho)$
                    \item\label{ite:e-appL3} $e' = [x \mapsto l'_{\rho'}]e_3$ 
                    \item\label{ite:e-appL4} $\storeOuter' = \storeOuter$
                \end{enumerate}
        
        Since the evaluation rule does not take additional steps $\Sigma'$, $\varphi'_1$, and
        $\varphi'_2$ are all equal to their original counterparts.

        Thus, using inversion (Lemma~\ref{lemma:typing-inversion}), we can deconstruct 
        the typing derivation for $\lambda x . e_3$, (the value at location 
        $l_\rho$, from assumptions~\ref{ite:e-appL1} and~\ref{ite:e-appL2}), yielding: 
        \begin{equation}
            K \mid \Gamma, x : \mu_1 \mid \Sigma \vdash e_3 : \mu_2 \mid  \varphi_1 \times \varphi_2 \times \varphi 
            \label{eq:appL1}
        \end{equation}
        Using this with substitution (Lemma~\ref{lemma:subst}), we can construct the following typing derivation for 
        $e'$: 
        \begin{align*}
            K \mid \Gamma \mid \Sigma \vdash [x \mapsto {l'}_{\rho'}] e_3 : \mu_2 \mid \varphi_1 \times \varphi_2 \times \varphi
        \end{align*}
        Since $\varphi_1$, $\varphi_2$ are unchanged, we trivially know that $\varphi' \sqsubseteq \varphi$.

        By assumption~\ref{ite:e-appL4}, we have that the output store is unchanged, thus we can type $\storeOuter'$ by appealing 
        to the typability of assumption~\ref{ite:t-app1}.

        \end{itemize}

    \item Case (\textsc{t-ref}):\\
        From (\textsc{t-ref}) we have the following assumptions:
        \begin{enumerate}[i]
            \itemsep0em 
            \item\label{ite:t-ref1} $e = \textbf{ref}\ e_1$
            \item\label{ite:t-ref2} $K \vdash \Gamma \mid \Sigma \vdash e_1 : (\tau, \rho) \mid \varphi_1 $
            \item\label{ite:t-ref3} $ \varphi = \varphi_1 \times \{\textbf{alloc}\ 1\ \rho \} $
        \end{enumerate}
        There are two evaluation rules by which $\langle e \mid \storeOuter
        \rangle \longrightarrow \langle e' \mid \storeOuter' \rangle$ can be derived:
        (\textsc{e-ref}) and (\textsc{e-refL}).
        \begin{itemize}

            \item Subcase (\textsc{e-ref}): \\
                From (\textsc{e-ref}) we have the following assumptions:
                \begin{enumerate}[a.]
                    \itemsep0em 
                    \item\label{ite:e-ref1} $\eformat{e_1}{\storeOuter}{e'_1}{\storeOuter'}$
                    \item\label{ite:e-ref2} $e' = \textbf{ref}\ e'_1$
                \end{enumerate}

                By induction on assumption~\ref{ite:t-ref2} we have: 
                \begin{equation}
                    K \mid \Gamma \mid \Sigma' \vdash e'_1 : (\tau, \rho) \mid \varphi'_1
                    \label{eq:ref1}
                \end{equation} 
                Let $\Sigma'$ equal any context such that $\Sigma' \supseteq
                \Sigma$ and the above typings hold.
                From this, we can construct the following typing derivation for $e'$:
                \begin{align*}
                    \inferrule*[right=t-ref]
                        {\eqref{eq:ref1}}
                        {K \mid \Gamma \mid \Sigma' \vdash \textbf{ref}\ e'_1 : (\text{Ref}\ \tau, \rho) \mid \varphi_1' \times \{ \textbf{alloc}\ 1\ \rho \} }
                \end{align*}

                From which we have that $\varphi' = \varphi'_1 \times \{\textbf{alloc}\ 1\ \rho \}$
                and from assumption~\ref{ite:t-ref3}, that $\varphi = \varphi'_1 \times \{ \textbf{alloc}\ 1\ \rho \} $. 
                By the typability of~\ref{ite:t-ref1}, we have that $K \vdash \varphi_1 : \text{Effect}$  \hypertarget{ref1}{(*)}, 
                and by induction we have that $\varphi'_1 \sqsubseteq \varphi_1$ \hypertarget{ref2}{(**)}. 
                From this we can construct the following derivation for $ \varphi'_1 \times \{\textbf{alloc}\ 1\ \rho\}\sqsubseteq \varphi_1 \times \{\textbf{alloc}\ 1\ \rho\}$:

                \begin{equation}
                        \inferrule*[right=$\kappa$-alloc]
                                        {\inferrule*[right=$\kappa$-reg]
                                            {\quad}
                                            {K \vdash \rho : \text{Region}}
                                        \qquad 
                                        \inferrule*[right=$\kappa$-size]
                                            {\quad}
                                            {K \vdash 1 : \text{Size} }}
                                        {K \vdash \{\textbf{alloc}\ 1\ \rho\} : \text{Effect} }        
                    \label{eq:ref4}
                \end{equation}

                \begin{equation}
                \inferrule*[right=sb-$\times_{below}^{1}$]
                                { 
                                    \texorpdfstring{\protect\hyperlink{ref2}{\text{(**)}}}{}
                                    \qquad 
                                    \eqref{eq:ref4} 
                                }
                                {K \vdash \varphi'_1 \sqsubseteq \varphi_1 \times \{\textbf{alloc}\ 1\ \rho\} : \text{Effect} }
                    \label{eq:ref2}
                \end{equation}

                \begin{equation}
                    \inferrule*[right=sb-$\equiv$] 
                                        {\inferrule*[right=eq-refl]
                                            { \inferrule*[right=$\kappa$-alloc]
                                                { \inferrule*[right=$\kappa$-reg]{\quad}{K \vdash \rho : \text{Region}} \qquad \inferrule*[right=$\kappa$-size]{\quad}{K \vdash 1 : \text{Size}}}
                                                { K \vdash \{\textbf{alloc}\ 1\ \rho \} : \text{Effect} } }
                                            {K \vdash \{ \textbf{alloc}\ 1\ \rho \} \equiv \{ \textbf{alloc}\ 1\ rho \} : \text{Effect} }}
                                        {K \vdash \{\textbf{alloc}\ 1\ \rho \} \sqsubseteq \{\textbf{alloc}\ 1\ \rho \} : \text{Effect}}
                    \label{eq:ref5}
                \end{equation}
                
                \begin{equation}
                    \inferrule*[right=sb-$\times_{below}^{2}$]
                                { 
                                    \texorpdfstring{\protect\hyperlink{ref1}{\text{(*)}}}{}
                                    \qquad 
                                    \eqref{eq:ref5}                                    
                                }
                                {K \vdash \{\textbf{alloc}\ 1\ \rho\}\sqsubseteq \varphi_1 \times \{\textbf{alloc}\ 1\ \rho\} : \text{Effect} }                
                    \label{eq:ref3}
                \end{equation}

                \begin{align*}
                    \inferrule*[right=sb-$\times_{above}$]
                        {
                            \eqref{eq:ref2} 
                            \qquad
                            \eqref{eq:ref3} 
                        }
                        {K \vdash  \varphi'_1 \times \{\textbf{alloc}\ 1\ \rho\} \sqsubseteq \varphi_1 \times \{\textbf{alloc}\ 1\ \rho\} : \text{Effect} }
                \end{align*}

                Finally, we can type the output store $\storeOuter'$ by induction:
                \begin{align*}
                    K \mid \Gamma \mid \Sigma' \vdash \storeOuter'
                \end{align*}

            \item Subcase (\textsc{e-refL}): \\
                From (\textsc{e-refL}) we have the following assumptions:
                \begin{enumerate}[a.]
                    \itemsep0em 
                    \item\label{ite:e-refL1} $(\storeInner_{\rho}^{in.}, s_a) = \storeOuter(\rho)$
                    \item\label{ite:e-refL2} $l'_{\rho} = \text{freshLoc}(\rho)$
                    \item\label{ite:e-refL3} $e' = l'_\rho$
                    \item\label{ite:e-refL4} $\storeOuter' = [\rho \mapsto ((\storeInner_{\rho}^{in.}, l'_{\rho} \mapsto l_\rho), s_a)] \storeOuter$
                \end{enumerate}

                From assumptions~\ref{ite:e-refL2} and~\ref{ite:e-refL4}, we have that $l'_{\rho}$ is a fresh location in
                $\rho$ that contains a reference to $l_\rho$. Therefore, let $\Sigma'$ equal $\Sigma$ with the
                additional location $l'_{\rho}$ such that $\text{Ref}\ \tau = \Sigma'(l'_{\rho})$ \hypertarget{refL1}{(*)} and 
                from the typability of assumption~\ref{ite:t-ref1} we have that $K \vdash \text{Ref}\ \tau : \text{Type}$ \hypertarget{refL2}{(**)} 
                
                From this we can construct the following typing derivation for $e'$:                
                \begin{align*}
                    \inferrule*[right=t-use-val]
                        {\inferrule*[right=t-loc]
                            { \texorpdfstring{\protect\hyperlink{refL1}{\text{(*)}}}{} \qquad \texorpdfstring{\protect\hyperlink{refL2}{\text{(**)}}}{}  }
                            {K \mid \Gamma \mid \Sigma' \vdash l'_{\rho} : \text{Ref}\ \tau }}
                        {K \mid \Gamma \mid \mid \Sigma' \vdash l'_{\rho} : (\text{Ref}\ \tau, \rho) \mid \{ \bot \} }
                \end{align*}

                From which we have that $\varphi' = \{ \bot \} $ and from assumption~\ref{ite:t-ref3} that $\varphi = \varphi_1 \times \{\textbf{alloc}\ 1\ \rho \}$. From the typability of~\ref{ite:t-ref1} we have that $K \vdash \varphi_1 : \text{Effect}$ \hypertarget{refL3}{(***)}
                which we can use to construct the following derivation for $\{ \bot \} \sqsubseteq \varphi_1 \times \{\textbf{alloc}\ 1\ \rho \}$: 

                \begin{equation} 
                        \inferrule*[right=$\kappa$-alloc]
                                    { \inferrule*[right=$\kappa$-reg]{\quad}{K \vdash \rho : \text{Region}} \qquad \inferrule*[right=$\kappa$-size]{\quad}{K \vdash 1 : \text{Size}} }
                                    {K \vdash \{\textbf{alloc}\ 1\ \rho \} : \text{Effect} }
                    \label{eq:refL1}
                \end{equation}

                \begin{align*}
                    \inferrule*[right=sb-$\bot$]
                        {\inferrule*[right=$\kappa$-$\times$]
                            {  \texorpdfstring{\protect\hyperlink{refL3}{\text{(***)}}}{} \qquad \eqref{eq:refL1}
                                }
                            {K \vdash \varphi_1 \times \{\textbf{alloc}\ 1\ \rho \} : \text{Effect}  }}
                        {K \vdash \{ \bot \} \sqsubseteq \varphi_1 \times \{\textbf{alloc}\ 1\ \rho \} }
                \end{align*}

                From our inductive hypothesis we have that: 
                \begin{align*}
                    K \mid \Gamma \mid \Sigma \vdash \storeOuter
                \end{align*}
                i.e., that the store prior to evaluation is well-typed~(\hypertarget{reflLStore1}{\text{\textdagger}}). 
                Therefore we have that currentSize($\rho$) $\sqsubseteq s_a$ prior to 
                evaluation~(\hypertarget{refLStore2}{\text{\textdagger\textdagger}}). From this we have:
                \begin{align*}   
                   K \mid \Gamma \mid \Sigma \vdash \storeInner_{\rho}^{in.}
                \end{align*} 
                i.e., that the store for region $\rho$ is well-typed prior to the evaluation of $e$~(\hypertarget{refLStore2}{\text{\textdagger\textdagger}})
                As per assumption~\ref{ite:e-refL4}, (\textsc{e-refL}) extends the region $\rho$ in the store with an additional location 
                $l'_{\rho}$ which points to location $l_{\rho}$.

                From the premise of our theorem statement, we know that $e$ will eventually be a part of a derivation 
                for which the typing of this allocation holds. Thus, using inversion (Lemma~\ref{lemma:typing-inversion}) on 
                the typing relation, along with the fact that sizeOf($l_{\rho}$) $= 1$ (as per Definition~\ref{def:sizeOf}) we have that $\exists s' . \text{currentSize}(\rho) + 1 + s' \sqsubseteq s_a$.

                From this and $\texorpdfstring{\protect\hyperlink{valLStore3}{(\text{\textdagger\textdagger\textdagger})}}{}$ we know that 
                currentSize($\rho$) + $s \sqsubseteq s_a$~(\hypertarget{refLStore4}{\text{\textdagger\textdagger\textdagger\textdagger}})
                , i.e., that the size of the region after the evaluation of $e$ does not exceed the maximum allocation bound for $\rho$.

                From the above we can construct the following typing derivation for the output store $\storeOuter'$: 

                \begin{equation}
                    \inferrule*[right=loc]
                        {
                            \texorpdfstring{\protect\hyperlink{refL1}{(*)}}{}
                            \qquad 
                            (\texorpdfstring{\protect\hyperlink{refLStore2}{\text{\textdagger\textdagger}}}{})
                            \qquad
                            \ref{ite:t-ref2}
                        }
                        {K \mid \Gamma \mid \Sigma \vdash ((\storeInner_{\rho}^{in.}, l'_{\rho} \mapsto l_{\rho}), s_a) }
                        \label{eq:refLStore1}
                \end{equation}

                \begin{align*}
                    \inferrule*[right=subst]
                        { 
                            \texorpdfstring{\protect\hyperlink{refLStore1}{(\text{\textdagger})}}{}
                            \qquad
                            \eqref{eq:refLStore1}
                            \qquad
                            \texorpdfstring{\protect\hyperlink{refLStore3}{(\text{\textdagger\textdagger\textdagger})}}{}
                        }
                        {K \mid \Gamma \mid \Sigma \vdash [\rho \mapsto ((\storeInner_{\rho}^{in.}, l'_{\rho} \mapsto l_{\rho}), s_a)] \storeOuter }
                \end{align*}

        \end{itemize}

    \item Case (\textsc{t-deref}):\\
        From (\textsc{t-deref}) we have the following assumptions:
        \begin{enumerate}[i]
            \itemsep0em 
            \item\label{ite:t-deref1} $e =\ !e_1$
            \item\label{ite:t-deref2} $K \mid \Gamma \mid \Sigma \vdash e : (\text{Ref}\ \tau, \rho) \mid \varphi_1 $
            \item\label{ite:t-deref3} $\varphi = \varphi_1 $
        \end{enumerate}
        There are two evaluation rules by which $\langle e \mid \storeOuter
        \rangle \longrightarrow \langle e' \mid \storeOuter' \rangle$ can be derived:
        (\textsc{e-deref}) and (\textsc{e-derefL}).
        \begin{itemize}

            \item Subcase (\textsc{e-deref}): \\
                From (\textsc{e-deref}) we have the following assumptions:
                \begin{enumerate}[a.]
                    \itemsep0em 
                    \item\label{ite:e-deref1} $\eformat{e_1}{\storeOuter}{e'_1}{\storeOuter'}$
                    \item\label{ite:e-deref2} $e' =\ !e'_1$
                \end{enumerate}
                By induction on assumption~\ref{ite:t-deref2} we have: 
                \begin{equation}
                    K \mid \Gamma \mid \Sigma' \vdash e'_1 : (\text{Ref}\ \tau, \rho) \mid \varphi'_1 
                    \label{eq:deref1}
                \end{equation}
                Let $\Sigma'$ equal any context such that $\Sigma' \supseteq
                \Sigma$ and the above typing holds.
                From this we can construct the following typing derivation for $e'$: 
                \begin{align*}
                    \inferrule*[right=t-deref]
                        {\eqref{eq:deref1}}
                        {K \mid \Gamma \mid  \Sigma' \vdash\ !e'_1 : (\tau, \rho) \mid \varphi_1'}
                \end{align*}
                For this we have that $\varphi' = \varphi'_1$ and from assumption~\ref{ite:t-deref3} 
                that $\varphi = \varphi$. Thus, by induction we have that $\varphi'_1 \sqsubseteq \varphi_1$. 

                Finally, we can type the output store $\storeOuter'$ by induction:
                \begin{align*}
                    \Gamma \mid \Delta'_i \mid \Sigma' \vdash \storeOuter'
                \end{align*}

            \item Subcase (\textsc{e-derefL}): \\
                From (\textsc{e-derefL}) we have the following assumptions:
                \begin{enumerate}[a.]
                    \itemsep0em 
                        \item\label{ite:e-derefL1} $(\storeInner_{\rho}^{in.}, s_a) = \storeOuter(\rho)$
                        \item\label{ite:e-derefL2} $l'_{\rho} = \storeInner_{\rho}^{in.}(l_\rho)$
                        \item\label{ite:e-derefL3} $e' = l'_{\rho}$
                        \item\label{ite:e-derefL4} $\storeOuter' = \storeOuter$
                \end{enumerate}

                From assumptions~\ref{ite:e-derefL1} and~\ref{ite:e-derefL2}, we have that $l'_{\rho}$ is a location in $\rho$ that
                contains a reference to $l_\rho$. Therefore, let $\Sigma'$ equal $\Sigma$ such that
                $\text{Ref}\ \tau  = \Sigma'(l'_{\rho})$ \hypertarget{derefL1}{(*)} and from the typability of 
                assumption~\ref{ite:t-deref2} that $K \vdash \text{Ref}\ \tau : \text{Type}$ \hypertarget{derefL2}{(**)}

                From this we can construct the following typing derivation for $e'$:

                \begin{align*}
                    \inferrule*[right=t-use-val]
                        { \inferrule*[right=t-loc]
                            { \texorpdfstring{\protect\hyperlink{derefL1}{\text{(*)}}}{}  \qquad  \texorpdfstring{\protect\hyperlink{derefL2}{\text{(**)}}}{} }
                            {K \mid \Gamma \mid \Sigma' \vdash l'_{\rho} : \tau } }
                        {K \mid \Gamma \mid \Sigma' \vdash l'_{\rho} : (\tau, \rho) \mid \{ \bot \} }
                \end{align*}
                From this we have that $\varphi' = \{ \bot \} $ and from assumption~\ref{ite:t-deref3} that $\varphi = \varphi_1$. Furthermore,
                from the typability of assumption~\ref{ite:t-deref2} that $K \vdash \varphi_1 : \text{Effect}$ \hypertarget{derefL3}{(***)}. Thus we can construct the 
                following derivation for $\{ \bot \} \sqsubseteq \varphi_1$:
                \begin{align*}
                    \inferrule*[right=sb-$\bot$]
                        {\texorpdfstring{\protect\hyperlink{derefL3}{\text{(***)}}}{}}
                        {K \vdash \{ \bot \} \sqsubseteq \varphi_1 : \text{Effect} } 
                \end{align*}

                Finally, since we take no inductive step the store remains unchanged, which we can 
                therefore type:
                \begin{align*}
                    K \mid \Gamma \mid \Sigma' \vdash \storeOuter'
                \end{align*}

        \end{itemize}

    \item Case (\textsc{t-assign}):\\
        From (\textsc{t-assign}) we have the following assumptions:
        \begin{enumerate}[i]
            \itemsep0em 
            \item\label{ite:t-assign1} $e = e_1 := e_2$
            \item\label{ite:t-assign2} $K \mid \Gamma \mid \Sigma \vdash e_1 : (\text{Ref}\ \tau, \rho') \mid \varphi_1 $
            \item\label{ite:t-assign3} $K \mid \Gamma \mid \Sigma \vdash e_2 : \mu \mid \varphi_2 $
            \item\label{ite:t-assign4} $\varphi = \varphi_1 \times \varphi_2$
        \end{enumerate}
        There are three evaluation rules by which $\langle e \mid \storeOuter
        \rangle \longrightarrow \langle e' \mid \storeOuter' \rangle$ can be derived:
        (\textsc{e-assign1}), (\textsc{e-assign2}) and (\textsc{e-assignL}).
        \begin{itemize}

            \item Subcase (\textsc{e-assign1}): \\
                From (\textsc{e-assign1}) we have the following assumptions:
                \begin{enumerate}[a.]
                    \itemsep0em 
                    \item\label{ite:e-assign11} $\eformat{e_1}{\storeOuter}{e'_1}{\storeOuter'}$
                    \item\label{ite:e-assign12} $e' = e'_1 := e_2$
                \end{enumerate}
                By induction on assumption~\ref{ite:t-assign2} we have: 
                \begin{equation}
                    K \mid \Gamma \mid \Sigma' \vdash e'_1 : (\text{Ref}\ \tau, \rho') \mid \varphi'_1 
                    \label{eq:assign11}
                \end{equation}
                which we can use with assumption~\ref{ite:t-assign3} to obtain: 
                \begin{equation}
                    K \mid \Gamma \mid \Sigma' \vdash e_2 : \mu \mid \varphi_2 
                    \label{eq:assign12}
                \end{equation}
                Let $\Sigma'$ equal any context such that $\Sigma' \supseteq
                \Sigma$ and the above typings hold.
                From this we can construct the following typing derivation for $e'$:
                \begin{align*}
                    \inferrule*[right=t-assign]
                        {\eqref{eq:assign11} \qquad \eqref{eq:assign12}}
                        {K \mid \Gamma \mid \Sigma' \vdash e'_1 := e_2 : \text{Unit} \mid \varphi'_1 \times \varphi_2}
                \end{align*}
                From which we have that $\varphi' = \varphi'_1 \times \varphi_2$ and from assumption~\ref{ite:t-assign4} 
                that $\varphi = \varphi_1 \times \varphi_2$. From the typability of assumption~\ref{ite:t-assign2} we have 
                that $K \vdash \varphi_1 : \text{Effect}$ \hypertarget{assign11}{(*)}, from the typability of 
                assumption~\ref{ite:t-assign3} we have that $K \vdash \varphi_2 : \text{Effect}$ \hypertarget{assign12}{(**)}, 
                and by induction that $\varphi'_1 \sqsubseteq \varphi_1$ \hypertarget{assign13}{(***)}. From this 
                we can construct the following derivation for $\varphi'_1 \times \varphi_2 \sqsubseteq \varphi_1 \times \varphi_2$:

                \begin{equation}
                        \inferrule*[right=sb-$\times_{below}^{2}$]
                                { 
                                    \texorpdfstring{\protect\hyperlink{assign11}{\text{(*)}}}{}
                                    \qquad 
                                    \inferrule*[right=sb-$\equiv$]
                                        {\inferrule*[right=eq-refl]
                                            {\texorpdfstring{\protect\hyperlink{assign12}{\text{(**)}}}{}}
                                            {K \vdash \varphi_2 \equiv \varphi_2 : \text{Effect}}}
                                        {K \vdash \varphi_2 \sqsubseteq \varphi_2 : \text{Effect}}
                                }
                                {K \vdash \varphi_2 \sqsubseteq \varphi_1 \times \varphi_2 : \text{Effect}}
                    \label{eq:assign13}
                \end{equation}

                \begin{align*}
                    \inferrule*[right=sb-$\times_{above}$]
                        { 
                            \inferrule*[right=$\times_{below}^{1}$]   
                                { 
                                    \texorpdfstring{\protect\hyperlink{assign13}{\text{(***)}}}{}
                                    \qquad
                                    \texorpdfstring{\protect\hyperlink{assign12}{\text{(**)}}}{}
                                }
                                {K \vdash \varphi'_1 \sqsubseteq \varphi_1 \times \varphi_2 : \text{Effect}}
                            \qquad
                            \eqref{eq:assign13} 
                        }
                        {K \vdash \varphi'_1 \times \varphi_2 \sqsubseteq \varphi_1 \times \varphi_2 : \text{Effect}} 
                \end{align*}

                Finally, we can type the output store $\storeOuter'$ by induction:
                \begin{align*}
                    K \mid \Gamma \mid \Sigma' \vdash \storeOuter' 
                \end{align*}

            \item Subcase (\textsc{e-assign2}): \\
                From (\textsc{e-assign2}) we have the following assumptions:
                \begin{enumerate}[a.]
                    \itemsep0em 
                    \item\label{ite:e-assign21} $\eformat{e_2}{\storeOuter}{e'_2}{\storeOuter'}$
                    \item\label{ite:e-assign22} $e' = l_\rho := e'_2$
                \end{enumerate}
                By induction on assumption~\ref{ite:t-assign3} we have: 
                \begin{equation}
                    K \mid \Gamma \mid \Sigma' \vdash e'_2 : \mu \mid \varphi'_2 
                    \label{eq:assign21}
                \end{equation}
                which we can use with assumption~\ref{ite:t-assign3} to obtain: 
                \begin{equation}
                    K \mid \Gamma \mid \Sigma' \vdash l_\rho : (\text{Ref}\ \tau, \rho') \mid \varphi_1 
                    \label{eq:assign22}
                \end{equation}
                Let $\Sigma'$ equal any context such that $\Sigma' \supseteq
                \Sigma$ and the above typings hold.
                From this we can construct the following typing derivation for $e'$:
                \begin{align*}
                    \inferrule*[right=t-assign]
                        {\eqref{eq:assign21} \qquad \eqref{eq:assign22}}
                        {K \mid \Gamma \mid \Sigma' \vdash l_\rho := e'_2 : \text{Unit} \mid \varphi_1 \times \varphi'_2}
                \end{align*}
                From which we have that $\varphi' = \varphi_1 \times \varphi'_2$ and from assumption~\ref{ite:t-assign4} 
                that $\varphi = \varphi_1 \times \varphi_2$. From the typability of assumption~\ref{ite:t-assign2} we have 
                that $K \vdash \varphi_1 : \text{Effect}$ \hypertarget{assign21}{(*)}, from the typability of 
                assumption~\ref{ite:t-assign3} we have that $K \vdash \varphi_2 : \text{Effect}$ \hypertarget{assign22}{(**)}, 
                and by induction that $\varphi'_2 \sqsubseteq \varphi_2$ \hypertarget{assign23}{(***)}. From this 
                we can construct the following derivation for $\varphi_1 \times \varphi'_2 \sqsubseteq \varphi_1 \times \varphi_2$:

                \begin{equation}
                        \inferrule*[right=sb-$\times_{below}^{1}$]
                                { 
                                    \inferrule*[right=sb-$\equiv$]
                                        {\inferrule*[right=eq-refl]
                                            {\texorpdfstring{\protect\hyperlink{assign21}{\text{(*)}}}{}}
                                            {K \vdash \varphi_1 \equiv \varphi_1 : \text{Effect}}}
                                        {K \vdash \varphi_1 \sqsubseteq \varphi_1 : \text{Effect}}
                                    \qquad 
                                    \texorpdfstring{\protect\hyperlink{assign22}{\text{(**)}}}{}
                                }
                                {K \vdash \varphi_1 \sqsubseteq \varphi_1 \times \varphi_2 : \text{Effect}}
                    \label{eq:assign23}
                \end{equation}

                \begin{align*}
                    \inferrule*[right=sb-$\times_{above}$]
                        { 
                            \eqref{eq:assign13} 
                            \qquad
                            \inferrule*[right=$\times_{below}^{2}$]   
                                { 
                                    \texorpdfstring{\protect\hyperlink{assign21}{\text{(*)}}}{}
                                    \qquad
                                    \texorpdfstring{\protect\hyperlink{assign23}{\text{(***)}}}{}
                                }
                                {K \vdash \varphi'_2 \sqsubseteq \varphi_1 \times \varphi_2 : \text{Effect}}
                        }
                        {K \vdash \varphi_1 \times \varphi'_2 \sqsubseteq \varphi_1 \times \varphi_2 : \text{Effect}} 
                \end{align*}

                Finally, we can type the output store $\storeOuter'$ by induction:
                \begin{align*}
                    K \mid \Gamma \mid \Sigma' \vdash \storeOuter' 
                \end{align*}

            \item Subcase (\textsc{e-assignL}): \\
                From (\textsc{e-assignL}) we have the following assumptions:
                \begin{enumerate}[a.]
                    \itemsep0em 
                    \item\label{ite:e-assignL1} $(\storeInner_{\rho}^{in.}, s_a) = \storeOuter(\rho)$
                    \item\label{ite:e-assignL2} $e' = l^{1}_{\globalRegion}$
                    \item\label{ite:e-assignL3} $\storeOuter' = [\rho \mapsto ([l_\rho \mapsto l'_{\rho'}] \storeInner_{\rho}^{in.}, s_a)] \storeOuter$
                \end{enumerate}
                From the premises of our theorem, we have that $\text{Unit} = \Sigma'(l^{1}_{\globalRegion})$ \hypertarget{assignL1}{(*)}. From this we can construct the following typing derivation for $e'$:
                \begin{align*}
                    \inferrule*[right=t-use-val]
                        { \inferrule*[right=t-loc]{ \texorpdfstring{\protect\hyperlink{assignL1}{\text{(*)}}}{} \qquad \inferrule*[right=$\kappa$-unit]{\quad}{K \vdash \text{Unit} : \text{Type}} }{K \mid \Gamma \mid \Sigma \vdash l^{1}_{\globalRegion} : \text{Unit} }}
                        {K \mid \Gamma \mid \Sigma' \vdash l^{1}_{\globalRegion} : (\text{Unit}, \globalRegion) \mid \{ \bot \} }
                \end{align*}

                From this we have that $\varphi' = \{ \bot \} $ and from assumption~\ref{ite:t-assign4} that $\varphi = \varphi_1 \times \varphi_2$. Furthermore,
                from the typability of assumption~\ref{ite:t-assign2} that $K \vdash \varphi_1 : \text{Effect}$ \hypertarget{assignL2}{(**)}, and 
                from the typability of assumption~\ref{ite:t-assign3} that $K \vdash \varphi_2 : \text{Effect}$ \hypertarget{assignL3}{(***)}. Thus we can construct the following derivation for $\{ \bot \} \sqsubseteq \varphi_1 \times \varphi_2$:
                \begin{align*}
                    \inferrule*[right=sb-$\bot$]
                        {\inferrule*[right=$\kappa$-$\times$]
                            { \texorpdfstring{\protect\hyperlink{assignL2}{\text{(**)}}}{} \qquad \texorpdfstring{\protect\hyperlink{assignL3}{\text{(***)}}}{}} {K \vdash \varphi_1 \times \varphi_2 : \text{Effect}} }
                        {K \vdash \{ \bot \} \sqsubseteq \varphi_1 : \text{Effect} } 
                \end{align*}

                Finally, we can use the store update lemma (Lemma~\ref{lemma:store-update}) to type $\storeOuter'$:
                \begin{align*}                    
                    K \mid \Gamma \mid \Sigma' \vdash \storeOuter, \rho \mapsto ([l_\rho \mapsto l'_{\rho'}] \storeInner_{\rho}^{in.}, s_a)
                \end{align*}

        \end{itemize}

    \item Case (\textsc{t-seq}):\\
        From (\textsc{t-seq}) we have the following assumptions:
        \begin{enumerate}[i]
            \itemsep0em 
            \item\label{ite:t-seq1} $e = e_1 ; e_2$
            \item\label{ite:t-seq2} $K \mid \Gamma \mid \Sigma \vdash e_1 : (\text{Unit}, \rho) \mid \varphi_1 $
            \item\label{ite:t-seq3} $K \mid \Gamma \mid \Sigma \vdash e_2 : \mu \mid \varphi_2 $
            \item\label{ite:t-seq4} $\varphi = \varphi_1 \times \varphi_2$
        \end{enumerate}
        There are two evaluation rules by which $\langle e \mid \storeOuter
        \rangle \longrightarrow \langle e' \mid \storeOuter' \rangle$ can be derived:
        (\textsc{e-seq}) and (\textsc{e-seqNext}).
        \begin{itemize}

            \item Subcase (\textsc{e-seq}): \\
                From (\textsc{e-seq}) we have the following assumptions:
                \begin{enumerate}[a.]
                    \itemsep0em 
                    \item\label{ite:e-seq1} $\eformat{e_1}{\storeOuter}{e'_1}{\storeOuter'}$
                    \item\label{ite:e-seq2} $e' = e'_1 ; e_2$
                \end{enumerate}
                By induction on assumption~\ref{ite:t-seq2} we have:
                \begin{equation}
                    K \mid \Gamma \mid \Sigma' \vdash e'_1 : (\text{Unit}, \rho) \mid \varphi'_1 
                    \label{eq:seq-1}
                \end{equation}
                which we can use with assumption~\ref{ite:t-seq3} to obtain:
                \begin{equation}
                    K \mid \Gamma \mid \Sigma' \vdash e_2 : \mu \mid \varphi_2 
                    \label{eq:seq-2}
                \end{equation}
                Let $\Sigma'$ equal any context such that $\Sigma' \supseteq
                \Sigma$ and the above typings hold.
                From the above we can construct the following typing derivation for $e'$:   
                \begin{align*}
                    \inferrule*[right=seq]
                        { \eqref{eq:seq-1} \qquad \eqref{eq:seq-2} }
                        {K \mid \Gamma \mid \Sigma' \vdash e'_1 ; e_2 : \mu \mid \varphi'_1 \times \varphi_2 }
                \end{align*}
                From which we have that $\varphi' = \varphi'_1 \times \varphi_2$ and from assumption~\ref{ite:t-seq4}
                we have that $\varphi = \varphi_1 \times \varphi_2$. From the typability of assumption~\ref{ite:t-seq2} we have that 
                $K \vdash \varphi_1 : \text{Effect}$, from the typability of assumption~\ref{ite:t-seq3} we have that 
                $K \vdash \varphi_2 : \text{Effect}$, and by induction we have that $\varphi'_1 \sqsubseteq \varphi_1$. From this 
                we can construct the following derivation for $\varphi'_1 \times \varphi_2 \sqsubseteq \varphi_1 \times \varphi_2$:

                \begin{equation}
                    \inferrule*[right=sb-$\times_{below}^{}$]
                                { 
                                    \texorpdfstring{\protect\hyperlink{seq1}{\text{(*)}}}{} 
                                    \qquad
                                    \inferrule*[right=sb-$\equiv$]
                                        {\inferrule*[right=eq-refl]
                                            { \texorpdfstring{\protect\hyperlink{seq2}{\text{(**)}}}{}  }
                                            { K \vdash \varphi_2 \equiv \varphi_2 : \text{Effect} }}
                                        {K \vdash \varphi_2 \sqsubseteq \varphi_2 : \text{Effect}}
                                }
                                {K \vdash \varphi_2 \sqsubseteq \varphi_1 \times \varphi_2 : \text{Effect}}
                    \label{eq:seq-3}
                \end{equation}

                \begin{align*}
                    \inferrule*[right=sb-$\times_{above}$]
                        { 
                            \inferrule*[right=sb-$\times_{below}^{1}$]
                                { 
                                    \texorpdfstring{\protect\hyperlink{seq3}{\text{(***)}}}{}
                                    \qquad
                                    \texorpdfstring{\protect\hyperlink{seq2}{\text{(**)}}}{}
                                }
                                {K \vdash \varphi'_1 \sqsubseteq \varphi_1 \times \varphi_2 : \text{Effect}}
                            \qquad
                            \eqref{eq:seq-3} 
                        }
                        {K \vdash \varphi'_1 \times \varphi_2 \sqsubseteq \varphi_1 \times \varphi_2 : \text{Effect} }
                \end{align*}

                Finally, we can type the output store $\storeOuter'$ by induction:
                \begin{align*}
                    K \mid \Gamma \mid \Sigma' \vdash \storeOuter'
                \end{align*}

            \item Subcase (\textsc{e-seqNext}): \\
                From (\textsc{e-seqNext}) we have the following assumptions:
                \begin{enumerate}[a.]
                    \itemsep0em 
                    \item\label{ite:e-seqN1} $e' = e_2$
                    \item\label{ite:e-seqN2} $\storeOuter' = \storeOuter$
                \end{enumerate}
                Since the evaluation rule does not take inductive evaluation steps, 
                $\Sigma'$, $\varphi'_1$, and $\varphi'_2$ are all equal to their original counterparts.
                From this and the typability of assumption~\ref{ite:t-seq3} we can construct the following type for $e'$: 
                \begin{align*} 
                    K \mid \Gamma \mid \Sigma' \vdash e_2 : \mu \mid \varphi'_2
                \end{align*}
                From which we have that $\varphi' = \varphi'_2$ and from assumption~\ref{ite:t-seq4} that 
                $\varphi = \varphi_1 \times \varphi_2$. Since $\varphi_1$, $\varphi_2$ are unchanged, we trivially know that $\varphi' \sqsubseteq \varphi$.

                We can also type the output store $\storeOuter'$ by our assumption of store typing 
                consistency (as the store and contexts are unchanged):
                \begin{align*}
                    K \mid \Gamma \mid \Sigma' \vdash \storeOuter 
                \end{align*}
        \end{itemize}

    \item Case (\textsc{t-tyApp}):\\
        From (\textsc{t-tyApp}) we have the following assumptions:
        \begin{enumerate}[i]
            \itemsep0em 
            \item\label{ite:t-bigLam1} $e = e_1\ @\ (\tau, \rho')$
            \item\label{ite:t-bigLam2} $K \mid \Gamma \mid \Sigma \vdash e_1 : \forall\{\alpha : \text{Type}, \rho : \text{Region} \}  \mid \{ \bot \}  $
            \item\label{ite:t-bigLam3} $K \vdash (\tau, \rho') : \text{Type}$
            \item\label{ite:t-bigLam4} $\varphi = \{ \bot \}$
        \end{enumerate}
        There is one evaluation rule by which $\langle e \mid \storeOuter
        \rangle \longrightarrow \langle e' \mid \storeOuter' \rangle$ can be derived:
        (\textsc{e-tyApp}).
        \begin{itemize}
            \item Subcase (\textsc{e-tyApp}): \\
                From (\textsc{e-tyApp}) we have the following assumptions:
                \begin{enumerate}[a.]
                    \itemsep0em 
                    \item\label{ite:e-tyApp1} $e' = e_1$
                    \item\label{ite:e-tyApp2} $\storeOuter' = \storeOuter$
                \end{enumerate}

                Since the evaluation rule does not take any additional steps, $\Sigma'$, and $\varphi'$ are 
                all equal to their original counterparts. Thus, the evaluation simply steps to $e_1$, which is
                well-typed by assumption~\ref{ite:t-bigLam2}:
                 \begin{align*}
                        K \mid \Gamma \mid \Sigma \vdash e_1 : \forall\{\alpha : \text{Type}, \rho : \text{Region} \}  \mid \{ \bot \}
                \end{align*}
                Since $\Sigma' = \Sigma, \varphi' = \varphi$, we have trivially that
                $\{\bot\} \sqsubseteq \{\bot\}$:
                \begin{align*}
                    \inferrule*[right=sb-$\bot$]
                        {\inferrule*[right=$\kappa$-$\bot$]
                            {\quad}
                            {K \vdash \{ \bot \} : \text{Effect}}}
                        {K \vdash \{\bot\} \sqsubseteq \{ \bot \} : \text{Effect}}
                \end{align*}

                By assumption~\ref{ite:e-tyApp2}, we have that the output store is unchanged. Thus 
                we can type $\storeOuter'$ by appealing to the typability of assumption~\ref{ite:t-bigLam1}.

        \end{itemize}

    \item Case (\textsc{t-let})\\
        From (\textsc{t-let}) we have the following assumptions:
        \begin{enumerate}[i]
            \itemsep0em 
            \item\label{ite:t-let1} $e = \textbf{let}\ x : \mu_1 = e_1\ \textbf{in}\ e_2$
            \item\label{ite:t-let2} $K \mid \Gamma \mid \Sigma \vdash e_1 : \mu_1 \mid \varphi_1$
            \item\label{ite:t-let3} $K \mid \Gamma, x : \mu_1 \mid \Sigma \vdash e_2 : \mu_2 \mid \varphi_2$
            \item\label{ite:t-let4} $\varphi = \varphi_1 \times \varphi_2$
        \end{enumerate}
        There are two evaluation rules by which $\langle e \mid \storeOuter
        \rangle \longrightarrow \langle e' \mid \storeOuter' \rangle$ can be derived:
        (\textsc{e-let}) and (\textsc{e-letL}).
        \begin{itemize}
            \item Subcase (\textsc{e-let}): \\
                From (\textsc{e-let}) we have the following assumptions:
                \begin{enumerate}[a.]
                    \itemsep0em 
                    \item\label{ite:e-let1} $\eformat{e_1}{\storeOuter}{e'_1}{\storeOuter'}$
                    \item\label{ite:e-let2} $e' = \textbf{let}\ x : \mu_1 = e'_1\ \textbf{in}\ e_2$
                \end{enumerate}

                By induction on assumption~\ref{ite:t-let2} we have: 
                \begin{equation}
                    K \mid \Gamma \mid \Sigma \vdash e'_1 : \mu_1 \mid \varphi'_1
                    \label{eq:let1}
                \end{equation} 
                which we can use with assumption~\ref{ite:t-let3} to obtain: 
                \begin{equation}             
                   K \mid \Gamma, x : \mu_1 \mid \Sigma' \vdash e_2 : \mu_2 \mid \varphi_2
                   \label{eq:let2}
                \end{equation}
                Let $\Sigma'$ equal any context such that $\Sigma' \supseteq
                \Sigma$ and the above typings hold.
                From this, we can construct the following typing derivation for $e'$:
                \begin{align*}
                    \inferrule*[right=t-let]
                        {\eqref{eq:let1} \qquad \eqref{eq:let2}}
                        {K \mid \Gamma \mid \Sigma' \vdash \textbf{let}\ x : \mu_1 = e'_1\ \textbf{in}\ e_2 \mid \varphi_1' \times \varphi_2 }
                \end{align*}

                From which we have that $\varphi' = \varphi'_1 \times \varphi_2$
                and from assumption~\ref{ite:t-let4}, that $\varphi = \varphi_1 \times \varphi_2$. 
                By the typability of~\ref{ite:t-let2}, we have that $K \vdash \varphi_1 : \text{Effect}$~\hypertarget{let1}{(*)}, 
                and by induction we have that $\varphi'_1 \sqsubseteq \varphi_1$ \hypertarget{let2}{(**)}. 
                Furthermore, from the typability of~\ref{ite:t-let3}, we have that $K \vdash \varphi_2 : \text{Effect}$~\hypertarget{let3}{(***)}.
                From this we can construct the following derivation for $ \varphi'_1 \times \varphi_2 \sqsubseteq \varphi_1 \times \varphi_2$:

                \begin{equation}
                    \inferrule*[right=sb-$\times_{below}^{1}$]
                        {
                            \texorpdfstring{\protect\hyperlink{let2}{\text{(**)}}}{}
                            \qquad
                            \texorpdfstring{\protect\hyperlink{let3}{\text{(***)}}}{}
                        }
                        {K \vdash \varphi'_1 \sqsubseteq \varphi_1 \times \varphi_2 : \text{Effect}}
                        \label{eq:let3}
                \end{equation}

                \begin{equation}
                    \inferrule*[right=sb-$\times_{below}^{2}$]
                        {
                            \texorpdfstring{\protect\hyperlink{let1}{\text{(*)}}}{}
                            \qquad
                            \inferrule*[right=sb-$\equiv$]
                                {\inferrule*[right=eq-refl]
                                    {
                                        \texorpdfstring{\protect\hyperlink{let3}{\text{(***)}}}{}
                                    }
                                    {K \vdash \varphi_2 \equiv \varphi_2 : \text{Effect}}}
                                {K \vdash \varphi_2 \sqsubseteq \varphi_2 : \text{Effect}}
                        }
                        {K \vdash \varphi_2 \sqsubseteq \varphi_1 \times \varphi_2 : \text{Effect}}
                        \label{eq:let4}
                \end{equation}

                \begin{align*}
                    \inferrule*[right=sb-$\times_{above}$]
                        {
                            \eqref{eq:let3} 
                            \qquad
                            \eqref{eq:let4} 
                        }
                        {K \vdash  \varphi'_1 \times \varphi_2 \sqsubseteq \varphi_1 \times \varphi_2 : \text{Effect} }
                \end{align*}

                Finally, we can type the output store $\storeOuter'$ by induction:
                \begin{align*}
                    K \mid \Gamma \mid \Sigma' \vdash \storeOuter'
                \end{align*}

            \item Subcase (\textsc{e-letL}): \\
                From (\textsc{e-letL}) we have the following assumptions:
                \begin{enumerate}[a.]
                    \itemsep0em 
                    \item\label{ite:e-letL1} $e' = [x \mapsto l_\rho] e_2$
                    \item\label{ite:e-letL2} $\storeOuter' = \storeOuter$
                \end{enumerate}

                Since the evaluation rule does not take any additional steps, $\Sigma', \varphi'_1$, and $\varphi'_2$ are 
                all equal to their original counterparts. Thus, using substitution (Lemma~\ref{lemma:subst}) with 
                assumption~\ref{ite:t-let3}, we obtain:
                \begin{align*}
                    K \mid \Gamma \mid \Sigma' \vdash [x \mapsto l_\rho] e_2 : \mu_2 \mid \varphi'_1 \times \varphi'_2
                \end{align*}
                Since $\Sigma' = \Sigma, \varphi'_1 = \varphi_1$, and $\varphi'_2 = \varphi_2$, we have trivially that
                $\varphi'_1 \times \varphi'_2 \sqsubseteq \varphi_1 \times \varphi_2$.

                By assumption~\ref{ite:e-let2}, we have that the output store is unchanged. Thus 
                we can type $\storeOuter'$ by appealing to the typability of assumption~\ref{ite:t-let1}.

            \end{itemize}

    \item Case (\textsc{t-fix})\\
        From (\textsc{t-fix}) we have the following assumptions: 
        \begin{enumerate}[i]
            \itemsep0em 
            \item\label{ite:t-fix1} $e = \textbf{let}\ f = \textbf{fix}(f, (\Lambda \{ \alpha , \rho , \epsilon \} . \lambda x . e_1)\ [s]\ \textbf{at}\ e_2)\ \textbf{in}\ e_3$
            \item\label{ite:t-fix2} $K \mid \Gamma, f : (\forall \{ \alpha, \rho, \epsilon  \} . (\alpha, \rho) \xrightarrow{\{ \epsilon \}} \mu_1, \rho_f) \mid \Sigma \vdash (\Lambda\{ \alpha, \rho, \epsilon \}. \lambda x . e_1)\ [s]\ \textbf{at}\ e_2 : (\forall \{ \alpha , \rho, \epsilon  \} . (\alpha, \rho) \xrightarrow{\varphi} \mu_1, \rho_f)  \mid \varphi_1$
            \item\label{ite:t-fix3} $\varphi' = [\{ \epsilon \} \mapsto \{ \textbf{rec}\ \epsilon\ \varphi \}] \varphi$ 
            \item\label{ite:t-fix4} $K \mid \Gamma, f : (\forall \{ \alpha , \rho , \epsilon  \} . (\alpha, \rho) \xrightarrow{\varphi'} \mu_1, \rho_f) \mid \Sigma \vdash e_3 : \mu_2 \mid \varphi_2$ 
            \item\label{ite:t-fix5} $\varphi = \varphi_1 \times \varphi_2$
        \end{enumerate}
        There are two evaluation rules by which $\langle e \mid \storeOuter
        \rangle \longrightarrow \langle e' \mid \storeOuter' \rangle$ can be derived:
        (\textsc{e-fix}) and (\textsc{e-fixL}).
        \begin{itemize}
            \item Subcase (\textsc{e-fix}): \\
                From (\textsc{e-fix}) we have the following assumptions:
                \begin{enumerate}[a.]
                    \itemsep0em 
                    \item\label{ite:e-fix1} $\eformat{\Lambda \{\alpha, \rho, \epsilon\} . \lambda x . e_1\ [s]\ \textbf{at}\ e_2}{\storeOuter}{e'}{\storeOuter'}$
                    \item\label{ite:e-fix2} $e' = \textbf{let}\ f = (f, e')\ \textbf{in}\ e_3$
                \end{enumerate}

                By induction on assumption~\ref{ite:t-fix1} we have: 
                \begin{equation}
                    K \mid \Gamma, f : (\forall \{ \alpha, \rho, \epsilon  \} . (\alpha, \rho) \xrightarrow{\{ \epsilon \}} \mu_1, \rho_f) \mid \Sigma \vdash (\Lambda\{ \alpha, \rho, \epsilon \}. \lambda x . e_1)\ [s]\ \textbf{at}\ e'_2 : (\forall \{ \alpha , \rho, \epsilon  \} . (\alpha, \rho) \xrightarrow{\varphi} \mu_1, \rho_f)  \mid \varphi'_1
                    \label{eq:fix1}
                \end{equation} 
                Let $\Sigma'$ equal any context such that $\Sigma' \supseteq
                \Sigma$ and the above typing holds.
                
                From this along with assumptions~\ref{ite:t-fix3} and \ref{ite:t-fix4}, we can
                construct the following typing derivation for $e'$:
                \begin{align*}
                    \inferrule*[right=t-fix]
                        {\eqref{eq:fix1}
                        \qquad 
                        \eqref{ite:t-fix3}
                        \qquad
                        \eqref{ite:t-fix4}
                        }
                        {K \mid \Gamma \mid \Sigma' \vdash \textbf{let}\ f = \textbf{fix}(f, (\Lambda \{ \alpha , \rho , \epsilon \} . \lambda x . e_1)\ [s]\ \textbf{at}\ e'_2)\ \textbf{in}\ e_3 : \mu_2 \mid \varphi'_1 \times \varphi_2 }
                \end{align*}

                From which we have that $\varphi' = \varphi'_1 \times \varphi_2$
                and from assumption~\ref{ite:t-fix5}, that $\varphi = \varphi_1 \varphi_2$. 
                Then, by induction we have that $\varphi'_1 \times \varphi_2 \sqsubseteq \varphi_1 \times \varphi_2$. 
                
                Finally, we can type the output store $\storeOuter'$ by induction:
                \begin{align*}
                    K \mid \Gamma \mid \Sigma' \vdash \storeOuter'
                \end{align*}

            \item Subcase (\textsc{e-fixL}): \\
                From (\textsc{e-fixL}) we have the following assumptions:
                \begin{enumerate}[a.]
                    \itemsep0em 
                    \item\label{ite:e-fixL1} $\storeInner_{\rho}^{in.} = \storeOuter(\rho)$
                    \item\label{ite:e-fixL2} $\Lambda\{\alpha, \rho, \epsilon\} . \lambda x . e_1 = \storeInner_{\rho}^{in.}(l_\rho)$
                    \item\label{ite:e-fixL3} $e' = [x \mapsto l_\rho] e_3$
                    \item\label{ite:e-fixL4} $\storeOuter' = \storeOuter$
                \end{enumerate}

                Since the evaluation rule does not take any additional steps, $\Sigma'$, $\varphi'_1$, and $\varphi'_2$ are 
                all equal to their original counterparts. Thus, using substitution (Lemma~\ref{lemma:subst}) with 
                assumptions~\ref{ite:t-fix3} and~\ref{ite:t-fix4}, we obtain:
                \begin{align*}
                    K \mid \Gamma, f : (\forall \{ \alpha , \rho , \epsilon  \} . (\alpha, \rho) \xrightarrow{\varphi'} \mu_1, \rho_f) \mid \Sigma' \vdash [x \mapsto l_\rho] e_3 : \mu_2 \mid \varphi'_1 \times \varphi'_2
                \end{align*}
                Since $\Sigma' = \Sigma, \varphi'_1 = \varphi_1$, and $\varphi'_2 = \varphi_2$, we have trivially that
                $\varphi'_1 \times \varphi'_2 \sqsubseteq \varphi_1 \times \varphi_2$.

                By assumption~\ref{ite:e-fixL4}, we have that the output store is unchanged. Thus 
                we can type $\storeOuter'$ by appealing to the typability of assumption~\ref{ite:t-fix1}.

        \end{itemize}

    \item Case (\textsc{t-if})\\
        From (\textsc{t-if}) we have the following assumptions: 
        \begin{enumerate}[i]
            \itemsep0em 
            \item\label{ite:t-if1} $e = \textbf{if}\ e_1\ \textbf{then}\ e_2\ \textbf{else}\ e_3$
            \item\label{ite:t-if2} $K \mid \Gamma \mid \Sigma \vdash e_1 : (\text{Bool}, \rho) \mid \varphi_1$
            \item\label{ite:t-if3} $K \mid \Gamma \mid \Sigma \vdash e_2 : \mu_2 \mid \varphi_2 $
            \item\label{ite:t-if4} $K \mid \Gamma \mid \Sigma \vdash e_3 : \mu_2 \mid \varphi_3$
            \item\label{ite:t-if5} $\varphi = \varphi_1 \times (\varphi_2 \sqcup \varphi_3)$
        \end{enumerate}
        There are three evaluation rules by which $\langle e \mid \storeOuter
        \rangle \longrightarrow \langle e' \mid \storeOuter' \rangle$ can be derived:
        (\textsc{e-if}), (\textsc{e-ifTrue}), and (\textsc{e-ifFalse}).
        \begin{itemize}
            \item Subcase (\textsc{e-if}): \\
                From (\textsc{e-if}) we have the following assumptions:
                \begin{enumerate}[a.]
                    \itemsep0em 
                    \item\label{ite:e-if1} $\eformat{e_1}{\storeOuter}{e'_1}{\storeOuter'}$
                    \item\label{ite:e-if2} $e' = \textbf{if}\ e'_1\ \textbf{then}\ e_2\ \textbf{else}\ e_3$
                \end{enumerate}

                By induction on assumption~\ref{ite:t-if2} we have: 
                \begin{equation}
                    K \mid \Gamma \mid \Sigma \vdash e'_1 : (\text{Bool}, \rho) \mid \varphi'_1
                    \label{eq:if1}
                \end{equation} 
                which we can combine with assumption~\ref{ite:t-if3}:
                \begin{equation}
                    K \mid \Gamma \mid \Sigma' \vdash e_2 : \mu_2 \mid \varphi_2
                    \label{eq:if2}
                \end{equation}
                and assumption~\ref{ite:t-if4}:
                \begin{equation}
                    K \mid \Gamma \mid \Sigma' \vdash e_3 : \mu_2 \mid \varphi_3
                    \label{eq:if3}
                \end{equation}
                Let $\Sigma'$ equal any context such that $\Sigma' \supseteq
                \Sigma$ and the above typings hold.
                From this, we can construct the following typing derivation for $e'$:
                \begin{align*}
                    \inferrule*[right=t-if]
                        {\eqref{eq:if1} \qquad \eqref{eq:if2} \qquad \eqref{eq:if3}}
                        {K \mid \Gamma \mid \Sigma' \vdash \textbf{if}\ e'_1\ \textbf{then}\ e_2\ \textbf{else}\ e_3 : (\tau, \rho) \mid \varphi_1' \times (\varphi_2 \sqcup \varphi_3)}
                \end{align*}

                From which we have that $\varphi' = \varphi'_1 \times (\varphi_2 \sqcup \varphi_3)$
                and from assumption~\ref{ite:t-if5}, that $\varphi = \varphi_1 \times (\varphi_2 \sqcup \varphi_3)$. 
                From the typability of assumption~\ref{ite:t-if2} we have that 
                $K \vdash \varphi_1 : \text{Effect}$~\hypertarget{if1}{(*)}, from the typability of assumption~\ref{ite:t-if3} we have that 
                $K \vdash \varphi_2 : \text{Effect}$~\hypertarget{if2}{(**)}, from the typability of assumption~\ref{ite:t-if4} we have that 
                $K \vdash \varphi_3 : \text{Effect}$~\hypertarget{if3}{(***)}, and by induction we have that $\varphi'_1 \sqsubseteq \varphi_1$~\hypertarget{if4}{(****)}.
                From this we can construct the following derivation for $\varphi'_1 \times (\varphi_2 \sqcup \varphi_3) \sqsubseteq \varphi_1 \times (\varphi_2 \sqcup \varphi_3)$:

                \begin{equation}
                            \inferrule*[right=sb-$\times_{below}^{1}$]{
                                    \texorpdfstring{\protect\hyperlink{if4}{\text{(****)}}}{}
                                    \qquad 
                                    \inferrule*[right=$\kappa$-meet]
                                        {
                                            \texorpdfstring{\protect\hyperlink{if2}{\text{(**)}}}{}
                                            \qquad 
                                            \texorpdfstring{\protect\hyperlink{if3}{\text{(***)}}}{}
                                        }
                                        {K \vdash \varphi_2 \sqcup \varphi_3 : \text{Effect}}
                                }
                                {K \vdash \varphi'_1 \sqsubseteq \varphi_1 \times (\varphi_2 \sqcup \varphi_3) : \text{Effect} }
                            \label{eq:if-phi-1}
                \end{equation}

                \begin{equation}
                    \inferrule*[right=sb-$\sqcup_{below}^{1}$]
                                                {
                                                    \inferrule*[right=sb-$\equiv$]
                                                        {\inferrule*[right=eq-refl]
                                                            {
                                                                \texorpdfstring{\protect\hyperlink{if2}{\text{(**)}}}{}
                                                            }
                                                            {K \vdash \varphi_2 \equiv \varphi_2 : \text{Effect}}}
                                                        {K \vdash \varphi_2 \sqsubseteq \varphi_2 : \text{Effect} }
                                                    \qquad
                                                        \texorpdfstring{\protect\hyperlink{if3}{\text{(***)}}}{}
                                                }
                                                {K \vdash \varphi_2 \sqsubseteq \varphi_2 \sqcup \varphi_3 : \text{Effect}} 
                    \label{eq:if-phi-4}
                \end{equation}

                \begin{equation} 
                                    \inferrule*[right=sb-$\times_{below}^{2}$]{
                                            \texorpdfstring{\protect\hyperlink{if1}{\text{(*)}}}{}
                                            \qquad 
                                            \eqref{eq:if-phi-4}
                                        }
                                        {K \vdash \varphi_2 \sqsubseteq \varphi_1 \times (\varphi_2 \sqcup \varphi_3) : \text{Effect}}
                    \label{eq:if-phi-3}
                \end{equation}

                \begin{equation}
                    \inferrule*[right=sb-$\times_{below}^{2}$]
                                        {
                                            \texorpdfstring{\protect\hyperlink{if1}{\text{(*)}}}{}
                                            \qquad 
                                            \inferrule*[right=sb-$\sqcup_{below}^{2}$]
                                                {
                                                    \texorpdfstring{\protect\hyperlink{if2}{\text{(**)}}}{}
                                                \qquad
                                                    \inferrule*[right=sb-$\equiv$]
                                                        {\inferrule*[right=eq-refl]
                                                            {
                                                                \texorpdfstring{\protect\hyperlink{if3}{\text{(***)}}}{}
                                                            }
                                                            {K \vdash \varphi_3 \equiv \varphi_3 : \text{Effect}}}
                                                        {K \vdash \varphi_3 \sqsubseteq \varphi_3 : \text{Effect}}
                                                }
                                                {K \vdash \varphi_3 \sqsubseteq \varphi_2 \sqcup \varphi_3 : \text{Effect}}
                                        }
                                        {K \vdash \varphi_3 \sqsubseteq \varphi_1 \times (\varphi_2 \sqcup \varphi_3) : \text{Effect}}               
                    \label{eq:if-phi-5}
                \end{equation}
                
                \begin{equation}
                            \inferrule*[right=sb-$\sqcup_{above}$]
                                    {
                                    \eqref{eq:if-phi-3}    
                                    \qquad 
                                   \eqref{eq:if-phi-5} 
                                }
                                {K \vdash \varphi_2 \sqcup \varphi_3 \sqsubseteq \varphi_1 \times (\varphi_2 \sqcup \varphi_3) : \text{Effect}}
                \label{eq:if-phi-2}
                \end{equation}

                \begin{align*}
                    \inferrule*[right=sb-$\times_{above}$]
                        {
                            \eqref{eq:if-phi-1} 
                            \qquad
                            \eqref{eq:if-phi-2}  
                        }
                        {K \vdash \varphi'_1 \times (\varphi_2 \sqcup \varphi_3) \sqsubseteq \varphi_1 \times (\varphi_2 \sqcup \varphi_3) : \text{Effect} }
                \end{align*}
                
                Finally, we can type the output store $\storeOuter'$ by induction:
                \begin{align*}
                    K \mid \Gamma \mid \Sigma' \vdash \storeOuter'
                \end{align*}

            \item Subcase (\textsc{e-ifTrue}): \\
                From (\textsc{e-ifTrue}) we have the following assumptions:
                \begin{enumerate}[a.]
                    \itemsep0em 
                    \item\label{ite:e-ifTrue1} $\storeInner_{\rho}^{in.} = \storeOuter(\rho)$
                    \item\label{ite:e-ifTrue2} $\text{true} = \storeInner_{\rho}^{in.}(l_\rho)$
                    \item\label{ite:e-ifTrue3} $e' = e_2$
                    \item\label{ite:e-ifTrue4} $\storeOuter' = \storeOuter$
                \end{enumerate}

                Since the evaluation rule does not take any additional steps, $\Sigma', \varphi'_1, \varphi'_2$, and $\varphi'_3$ are 
                all equal to their original counterparts. Thus, the evaluation simply steps to $e_2$, which is
                well-typed by assumption~\ref{ite:t-if3}:             
                \begin{align*}
                    K \mid \Gamma \mid \Sigma \vdash e_2 : \mu_2 \mid \varphi_2 
                \end{align*}
                Since $\Sigma' = \Sigma, \varphi'_1 = \varphi_1, \varphi'_2 = \varphi_2$, and $\varphi'_3 = \varphi_3$, we have trivially that
                $\varphi'_2 \sqsubseteq \varphi_2$.

                By assumption~\ref{ite:e-ifTrue4}, we have that the output store is unchanged. Thus 
                we can type $\storeOuter'$ by appealing to the typability of assumption~\ref{ite:t-if2}.

            \item Subcase (\textsc{e-ifFalse}): \\
                From (\textsc{e-ifFalse}) we have the following assumptions:
                \begin{enumerate}[a.]
                    \itemsep0em 
                    \item\label{ite:e-ifFalse1} $\storeInner_{\rho}^{in.} = \storeOuter(\rho)$
                    \item\label{ite:e-ifFalse2} $\text{false} = \storeInner_{\rho}^{in.}(l_\rho)$
                    \item\label{ite:e-ifFalse3} $e' = e_3$
                    \item\label{ite:e-ifFalse4} $\storeOuter' = \storeOuter$
                \end{enumerate}

                Since the evaluation rule does not take any additional steps, $\Sigma', \varphi'_1, \varphi'_2$, and $\varphi'_3$ are 
                all equal to their original counterparts. Thus, the evaluation simply steps to $e_3$, which is
                well-typed by assumption~\ref{ite:t-if4}:             
                \begin{align*}
                    K \mid \Gamma \mid \Sigma \vdash e_3 : \mu_2 \mid \varphi_3
                \end{align*}
                Since $\Sigma' = \Sigma, \varphi'_1 = \varphi_1, \varphi'_2 = \varphi_2$, and $\varphi'_3 = \varphi_3$, we have trivially that
                $\varphi'_3 \sqsubseteq \varphi_3$.

                By assumption~\ref{ite:e-ifFalse4}, we have that the output store is unchanged. Thus 
                we can type $\storeOuter'$ by appealing to the typability of assumption~\ref{ite:t-if2}.
        \end{itemize}

\end{itemize}
\end{proof}


\end{document}